\newtheorem{theorem}{Theorem}
\newtheorem{lemma}[theorem]{Lemma}
\newtheorem{corollary}[theorem]{Corollary}
\newtheorem{proposition}[theorem]{Proposition}
\newtheorem{remark}{Remark}
\newtheorem{example}{Example}
\newcommand{\card}[1]{| #1 |}			
\newcommand{\setM}{\mathcal{M}}
\newcommand{\setS}{\mathcal{S}}
\newcommand{\setT}{\mathcal{T}}
\newcommand{\setU}{\mathcal{U}}
\newcommand{\setW}{\mathcal{W}}
\newcommand{\setX}{\mathcal{X}}
\newcommand{\setY}{\mathcal{Y}}
\newcommand{\setZ}{\mathcal{Z}}
\newcommand{\rndM}{M}
\newcommand{\rndQ}{Q}
\newcommand{\rndX}{X}
\newcommand{\vecu}{\mathbf{u}}
\newcommand{\vecv}{\mathbf{v}}
\newcommand{\vecx}{\mathbf{x}}
\newcommand{\vecy}{\mathbf{y}}
\DeclareMathOperator{\unif}{Unif}		
\DeclareMathOperator{\ber}{Ber}			
\newcommand{\prob}[2]{p_{#1} ( #2 )} 	
\newcommand{\ind}[1]{\mathbbm{1}_{#1}}	
\newcommand{\channel}[2]{W ( #1 | #2 )}	
\newcommand{\binent}[1]{h_{\text b} \! \left( #1 \right)} 			
\newcommand{\binentinv}[1]{h_{b}^{-1} \! \left( #1 \right)}		
\def\be{\begin{equation}}
\def\ee{\end{equation}}
\def\een{\nonumber \end{equation}}
\def\mat{\begin{bmatrix}}
\def\emat{\end{bmatrix}}
\def\btm{\begin{textbmatrix}}
\def\etm{\end{textbmatrix}}
\def\ba#1\ea{\begin{align}#1\end{align}}
\def\bas#1\eas{\begin{align*}#1\end{align*}}
\def\bs#1\es{\begin{split}#1\end{split}} 
\def\bg#1\eg{\begin{gather}#1\end{gather}}
\def\bml#1\eml{\begin{multline}#1\end{multline}}
\def\bi#1\ei{\begin{itemize}#1\end{itemize}} 
\def\bipi#1\eipi{\begin{inparaitem}#1\end{inparaitem}}
\newcommand{\ry}{\setY}			
\newcommand{\dist}{\mathbb{P}}			
\newcommand{\distof}[1]{\dist \! \left[ #1 \right]}		
\newcommand{\Bigdistof}[1]{\dist \! \Bigl[ #1 \Bigr]}		
\newcommand{\opE}{\mathbb{E}}
\DeclareMathOperator{\Exop}{\opE}								
\newcommand{\smEx}[2]{\ensuremath{\Exop_{#1} [#2]}} 				
\newcommand{\bigEx}[2]{\ensuremath{\Exop_{#1} \bigl[#2\bigr]}} 	
\newcommand{\BigEx}[2]{\ensuremath{\Exop_{#1} \Bigl[#2\Bigr]}} 	
\newcommand{\reals}{\mathbb R}
\begin{document}
%
%
\selectlanguage{USenglish}
\pagenumbering{arabic}
%
\title{Feedback and Partial Message Side-Information on the Semideterministic Broadcast Channel}
  
\author{Annina Bracher and Mich\`ele Wigger}

\maketitle

\huge
\begin{abstract}
  \normalsize
  \vspace{0.5cm}
  
  \let\thefootnote\relax\footnotetext{The results in this paper were presented in part at the IEEE International Symposium on Information Theory (ISIT), Hong Kong, China, Jun.\ 2015.}
  \let\thefootnote\relax\footnotetext{A.~Bracher is with Swiss Reinsurance Company Ltd, Mythenquai~50, 8022 Zurich, Switzerland (e-mail: annina\_bracher@swissre.com).}
  \let\thefootnote\relax\footnotetext{M.~Wigger is with LTCI, Telecom ParisTech, Universit\'e Paris-Saclay, 75013 Paris, France (e-mail: michele.wigger@telecom-paristech.fr).}

The capacity of the semideterministic discrete memoryless broadcast channel (SD-BC) with partial message side-information (P-MSI) at the receivers is established. In the setting without a common message, it is shown that P-MSI to the stochastic receiver alone can increase capacity, whereas P-MSI to the deterministic receiver can only increase capacity if also the stochastic receiver has P-MSI. The latter holds only for the setting without a common message: if the encoder also conveys a common message, then P-MSI to the deterministic receiver alone can increase capacity.

These capacity results are used to show that feedback from the stochastic receiver can increase the capacity of the SD-BC without P-MSI and the sum-rate capacity of the SD-BC with P-MSI at the deterministic receiver. The link between P-MSI and feedback is a feedback code, which---roughly speaking---turns feedback into P-MSI at the stochastic receiver and hence helps the stochastic receiver mitigate experienced interference. For the case where the stochastic receiver has full MSI (F-MSI) and can thus fully mitigate experienced interference also in the absence of feedback, it is shown that feedback cannot increase capacity.

\end{abstract}
\normalsize

\section{Introduction} \label{sec:int}

We derive the capacity region of the semideterministic discrete memoryless broadcast channel (SD-BC) with \emph{partial message side-information (P-MSI)} (Theorem~\ref{th:partMessSI}). In this setting each receiver knows part of the message intended for the other receiver already before the transmission begins. Our capacity result generalizes that of \cite{marton79,gelfandpinsker80} for the SD-BC without MSI. The capacity region of the general BC with \emph{full MSI (F-MSI)}, where each receiver knows the entire message intended for the other receiver, was established in~\cite{tuncel06,kramershamai07}. The work of Kramer and Shamai \cite{kramershamai07} also considers P-MSI and establishes the capacity region of the BC with P-MSI and degraded message sets. The three-receiver BC with P-MSI is studied in \cite{oechteringtimorwigger12}. Independently of our work, Asadi, Ong, and Johnson proposed a coding scheme for general two-receiver BCs with P-MSI \cite{asadiongjohnson15}. One can show that---for a judicious choice of the auxiliary random variables---their scheme achieves the capacity region of the SD-BC. Their work does not, however, provide a converse.\footnote{Somewhat similar to P-MSI is decoder cooperation on the BC, which allows the decoders to exchange information via finite-capacity links. The capacity region of the SD-BC with one-sided cooperation via a link from the deterministic to the stochastic receiver is established in \cite{goldfeldpermuterkramer14}. In fact, as it is shown in \cite{goldfeldpermuterkramer14}, this network is operationally equivalent to a class of relay-broadcast channels whose capacity region is established in \cite{liangkramer07}. The physically-degraded BC with parallel conferencing and the BC with conferencing and degraded message sets are studied in \cite{steinberg15}.}
 
Generally speaking, P-MSI reduces the effect of self-interference on the BC and hence enables more efficient communication (see, e.g., \cite{kramershamai07}). More specifically, in the current paper we show that on the SD-BC P-MSI affects the capacity region as follows:
\begin{itemize}
\item P-MSI at the deterministic receiver can increase capacity if, and only if, one of the following two holds: 1) also the stochastic receiver has P-MSI; or 2) the encoder conveys also a common message (Remark~\ref{re:partMessSIRecy}).
\item P-MSI at the stochastic receiver can increase capacity (Remark~\ref{re:MSIRecZ}); and this holds irrespective of whether or not the deterministic receiver has P-MSI or the encoder conveys a common message.
\end{itemize}
To establish these findings we use our capacity result for the SD-BC with P-MSI. Of particular interest to us is the latter finding, which we shall use to design a feedback code for the SD-BC with or without P-MSI that can improve over the channel's no-feedback capacity.\\

Feedback on the BC was first studied in \cite{gamal78}, where it is shown that even perfect feedback does not increase the capacity region of the physically-degraded BC \cite{gamal78}. It was later proved that feedback can, however, increase the capacity region of several BCs that are not physically degraded \cite{dueck80,ozarowleung84,kramer03,shayevitzwigger13,venkataramananpradhan13,wuwigger14}; and achievable rate regions for the BC with feedback were established in \cite{ozarowleung84,kramer03,shayevitzwigger13,venkataramananpradhan13,wuwigger14}. An intuition for the gain due to feedback is that feedback allows the transmitter to create a common message that is useful to both receivers \cite{dueck80, shayevitzwigger13}. Typically, transmitting one common message is more efficient than transmitting two private messages, because in the latter case the transmissions of the two private messages intefere with each other. In prominent previous examples where feedback increases the BC's capacity---e.g., in Dueck's example \cite{dueck80}---the common message is built up of past noise symbols. It is not clear how the idea of constructing a common message using past noise symbols should be adapted to the SD-BC, which is the focus of this paper; one receiver of the SD-BC is deterministic, and hence it is not clear why information that is constructed only from previous noise symbols should be useful to this receiver.

In the current paper we show that---not withstanding the above observations---feedback can increase the capacity region of the SD-BC (Theorem~\ref{th:fbCanIncCapSDBC}). To establish this result, we use the feedback to create an improved situation where the stochastic receiver has P-MSI. (As mentioned before, P-MSI at the stochastic receiver can increase the SD-BC's capacity.) More precisely, we use the idea that the encoder can create from the feedback a new message that is useful to the deterministic receiver and can be created (and hence is known) at the stochastic receiver. A similar idea was previously used by Wu and Wigger to construct a coding scheme for the general BC with rate-limited feedback \cite{wuwigger14}, though their work does not make the connection to the BC with P-MSI explicit. They use the scheme to show that feedback can increase the capacity of a large class of stochastically- (but not physically-) degraded BCs as well as the capacity of a class of BCs that consist of a binary symmetric channel and a binary erasure channel \cite{wuwigger14}. The argument is particularly intuitive for the class of stochastically-degraded BCs that satisfy that one receiver is stronger than the other: it is shown in \cite{wuwigger14} that for any BC in this class the encoder can create a new message that is useful to the stronger receiver and can be created at the weaker receiver, and that the encoder can send this message without reducing the rates at which the fresh message-information is transmitted.
 
Unlike the above class of stochastically-degraded BCs, on the SD-BC there is a tradeoff between the rates at which fresh message-information and the message that the encoder constructs from the feedback are sent. Hence---even with the results of \cite{wuwigger14} at hand---showing that feedback can increase the capacity region of the SD-BC is nontrivial. We show by means of an example that---with a judicious choice of the rates at which the fresh and the feedback information are sent---we can increase the overall rates at which the messages are sent to the receivers (Example~\ref{ex:addEras}). From this we conclude that feedback can increase the capacity region of the SD-BC.
 
As already mentioned, in \cite{wuwigger14} the connection between the coding idea and the BC with P-MSI is not made explicit. We make the connection explicit, and this allows us to readily extend our feedback coding scheme for the SD-BC to the case where the receivers have P-MSI.
Using this extension of the feedback code, we show that if the deterministic receiver has P-MSI, then feedback can increase the sum-rate capacity of the SD-BC (Theorem~\ref{th:fbCanIncCapSDBC}). For the case where the stochastic receiver has F-MSI, we show that feedback cannot increase capacity, irrespective of whether or not the deterministic receiver has P-MSI (Theorem~\ref{th:fullMessSIRecZ}).\\

The rest of this paper is structured as follows. We conclude this section by introducing some notation. Section~\ref{sec:chMod} describes the channel model. Section~\ref{sec:pmsi} contains the results for the SD-BC with P-MSI, and Section~\ref{sec:fb} studies the effect of feedback on the SD-BC with and without P-MSI. 

\subsection{Notation and Preliminaries}

We use calligraphic letters to denote finite sets and $\card \cdot$ for their cardinality, e.g., $\setX$ and $\card{ \setX }$. 
Random variables are denoted by upper-case letters and their realizations by lower-case letters, e.g., $\rndX$ and $x$. By $X_i^j$ and $x_i^j$ we denote the tuples $(X_i,\ldots, X_j)$ and $(x_i,\ldots, x_j)$, where $j>i$; and we drop the subscript $i = 1$, e.g., we write $X^n$ instead of $X_1^n$. Sequences are in bold lower- or upper-case letters depending on whether they are deterministic or random, e.g., $\vecx$ denotes an $n$-length codeword.



By $\rndQ \sim \unif [ 1 : n ]$ we indicate that the random variable $Q$ is uniformly drawn from the set $\{1,\ldots, n\}$, and by $S \sim \ber (p)$, where $p \in [0,1]$, we indicate that $S$ is a Bernoulli-$p$ random variable. We denote the binary entropy function by $h_\textnormal{b} (\cdot)$ and its inverse on $[0,1/2]$ by $h_\textnormal{b}^{-1} (\cdot)$. 

%

A joint probability mass function (PMF), its marginal PMF, and its conditional PMF are all denoted by the same function $\prob {} \cdot$, with the exact meaning specified by the subscripts or arguments, e.g., $\prob {X, Y}{0,1}$ denotes the probability of the event $( X, Y ) = ( 0, 1 )$ and $\prob {}{ x | y }$ the probability that $X = x$ given $Y = y$.

We denote the set of $\epsilon$-typical length-$n$ sequences defined in  \cite[Chapter~2]{gamalkim11} by $\setT^{( n )}_{\epsilon}$. By $\delta (\epsilon)$ we denote any function of $\epsilon$ that converges to $0$ as $\epsilon$ approaches $0$; and $\{ \epsilon_n \}$ can stand for any sequence of numbers that converges to $0$ as $n$ tends to infinity.\\

We shall use the following lemma, which is proved, e.g., in \cite{willemsmeulen85}:

\begin{lemma}[Functional Representation lemma] \label{le:funcRep}
Given two random variables $X$ and $Y$ of finite support, there exist a chance variable $S$ of finite support $\setS$ that is independent of $X$ and a function $g \colon \setX \times \setS \rightarrow \setY$ such that $Y = g ( X, S )$.
\end{lemma}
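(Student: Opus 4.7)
The plan is to couple the conditional distributions $\prob{Y|X}{\cdot|x}$ for different values $x\in\setX$ by means of an auxiliary random variable $S$ independent of $X$, and to ensure finite support for $S$ via a finite refinement of the unit interval. Enumerate $\setY=\{y_1,\dots,y_m\}$. For every $x\in\setX$ with $\prob{X}{x}>0$, I would partition $[0,1]$ into $m$ left-to-right subintervals $I_{1|x},\dots,I_{m|x}$ of respective lengths $\prob{Y|X}{y_j|x}$. Taking the common refinement of these $\card{\setX}$ partitions yields finitely many subintervals $J_1,\dots,J_L$, each of which is contained in a unique $I_{j(\ell,x)|x}$ for every $x\in\setX$.

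Next I would introduce, on the same probability space as $X$, a random variable $S$ taking values in $\setS=\{1,\dots,L\}$, independent of $X$, with $\distof{S=\ell}=|J_\ell|$. Define $g\colon\setX\times\setS\rightarrow\setY$ by $g(x,\ell)=y_{j(\ell,x)}$. Then $S$ is independent of $X$ and has finite support by construction, and the identity $Y=g(X,S)$ (in distribution) follows from the direct computation
\[
\distof{g(X,S)=y_j \,|\, X=x}=\sum_{\ell\colon g(x,\ell)=y_j}|J_\ell|=|I_{j|x}|=\prob{Y|X}{y_j|x},
\]
valid for every $x\in\setX$ with $\prob{X}{x}>0$ and every $j\in\{1,\dots,m\}$. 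Hence $(X,g(X,S))$ has the same joint law as $(X,Y)$, so on a possibly enlarged probability space $Y$ may be replaced by $g(X,S)$.

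There is no real difficulty in this argument; the one subtlety worth highlighting is the finite-support requirement on $S$. An inverse-CDF coupling with a single continuous uniform on $[0,1]$ would already deliver an independent $S$ and a measurable $g$, but its support would be uncountable. The common-refinement step is precisely what reduces the construction to the finite alphabet $\setS=\{1,\dots,L\}$, yielding the stated lemma.
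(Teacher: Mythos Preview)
Your construction is correct and is essentially the standard proof of the Functional Representation lemma (the common-refinement coupling that appears, e.g., in Willems--van der Meulen and in El~Gamal--Kim). Note that the paper does not actually supply its own proof of this lemma; it merely states it and cites \cite{willemsmeulen85}, so there is nothing to compare against beyond observing that your argument matches the one found in that reference. One cosmetic point: for $x$ with $\prob{X}{x}=0$ you should define $g(x,\cdot)$ arbitrarily so that $g$ is a total function on $\setX\times\setS$, but this does not affect the joint law.
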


\section{Channel Model} \label{sec:chMod}

We consider the SD-BC of transition law $$W (y,z|x) = \ind {\{ y = f (x) \}} \, W (z|x),$$ where we assume that the channel-input alphabet $\setX$ and the channel-output alphabets $\setY$ and $\setZ$ are finite. Transmitting an $n$-tuple $X^n$, the encoder wants to convey the message-pairs $(M,M_\setY)$ and $(M,M_\setZ)$ to the deterministic receiver~$\setY$ and the stochastic receiver~$\setZ$, respectively, where $M$ denotes the common message and $M_\setY$ and $M_\setZ$ the private messages. We assume that $M$, $M_\setY$, and $M_\setZ$ are independent, that $M$ is uniformly drawn from a size-$2^{n R}$ set, and that for each $\nu \in \{ \setY, \setZ \}$ message $M_\nu$ is uniformly drawn from a size-$2^{n R_\nu}$ set. We study the SD-BC with P-MSI, and we thus assume that each private message comprises two parts, i.e., 
$$M_\nu = \big( M_\nu^{(p)} \!, M_\nu^{(c)} \big), \quad \nu \in \{ \setY, \setZ \},$$ that Receiver~$\setY$ knows $M_\setZ^{(c)}$ and decodes $(M, M_\setY)$ from $( Y^n, M_\setZ^{(c )} )$, and that Receiver~$\setZ$ knows $M_\setY^{(c)}$ and decodes $(M, M_\setZ)$ from $( Z^n, M_\setY^{(c )} )$. For each $\nu \in \{ \setY, \setZ \}$ we assume that $M_\nu^{(p)}$ and $M_\nu^{(c)}$ are independent of each other and uniformly drawn from sets of size\footnote{For simplicity, when we write $2^{nR}$, for some $n,R\geq 0$, we implicitly assume that it is an integer value. It would be more precise to write $\lfloor 2^{nR}\rfloor$ instead. However, the ratio between the two expressions tends to 1 when $n\to \infty$, which is the regime of interest in this paper.}  $2^{n R_\nu^{(p)}}$ and $2^{n R_\nu^{(c)}}$, respectively, where $$R_\setY= R_\setY^{(p)} \! + R_\setY^{(c)}\qquad \text{and} \qquad R_\setZ= R_\setZ^{(p)} \! + R_\setZ^{(c)}.$$ Note the extreme cases:
\begin{itemize}
\item $( R_\setY^{(p)} \!, R_\setY^{(c)} ) = ( R_\setY, 0 ) \quad \Longleftrightarrow \quad \text{no MSI at Receiver~}\setZ$
\item $( R_\setY^{(p)} \!, R_\setY^{(c)} ) = ( 0,R_\setY ) \quad \Longleftrightarrow \quad \text{F-MSI at Receiver~}\setZ$
\item $( R_\setZ^{(p)} \!, R_\setZ^{(c)} ) = ( R_\setZ, 0 ) \quad \Longleftrightarrow \quad \text{no MSI at Receiver~}\setY$
\item $( R_\setZ^{(p)} \!, R_\setZ^{(c)} ) = ( 0,R_\setZ ) \quad \Longleftrightarrow \quad \text{F-MSI at Receiver~}\setY$.
\end{itemize}
A rate-tuple $( R, R_\setY^{(p)} \!, R_\setY^{(c)} \!, R_\setZ^{(p)} \!, R_\setZ^{(c)} )$ is achievable if there exists a sequence of encoders and decoders so that at each receiver the probability of a decoding error tends to zero as $n$ tends to infinity. The capacity region 
is the closure of the set of all achievable rate-tuples.

We study the SD-BC with P-MSI in the absence and in the presence of feedback. In the absence of feedback, the encoder selects the channel-input sequence as a function of the triple $(M, M_\setY, M_\setZ)$, i.e., $X^n = \phi (M, M_\setY, M_\setZ)$. This setting corresponds to that of Figure~\ref{fig:model} without the dashed links. We denote its capacity region by $\mathscr C_{\textnormal {P-MSI}}$, and in the special case without MSI by $\mathscr C$.

When there is feedback, it is assumed to be one-sided from the stochastic receiver~$\setZ$ only. (Feedback from the deterministic receiver~$\setY$ is useless, because the encoder can always compute $Y^n$ from $X^n$.) We consider perfect and rate-limited feedback. Perfect feedback allows the encoder to form the Time-$i$ input also as a function of $Z^{i-1}$, i.e., $$X_i = \phi_i (M, M_\setY, M_\setZ, Z^{i-1}), \quad i \in [1 : n].$$ Rate-limited feedback of rate $R_{\textnormal {FB}}$ allows Receiver~$\setZ$ to transmit after Transmission~$i$ a feedback signal $W_i = h ( Z^{i}, M_\setY^{(c)} ) \in \setW_i$ to the encoder, and in turn the encoder can form the Time-$i$ input also as a function of $W^{i-1}$, i.e., $$X_i = \phi_i (M, M_\setY, M_\setZ, W^{i-1}), \quad i \in [1 : n].$$ The rate-limitation implies that
\ba
\prod^n_{i=1} |\setW_i| \leq 2^{n R_{\textnormal {FB}}}. \label{eq:rateLimitation}
\ea
The SD-BC with P-MSI and perfect feedback (rate-limited feedback) corresponds to the setting of Figure~\ref{fig:model} when the dashed links transport the feedback signal $Z_i$ ($W_i$). Note that perfect feedback is more powerful than rate-limited feedback: any rate-tuple that is achievable with rate-limited feedback can also be achieved with perfect feedback. Rate-limited and perfect feedback are equally powerful when $R_{\textnormal{FB}} \geq \log |\setZ|$.

\begin{figure}[ht]
\vspace{-2mm}

\begin{center}
\def\pgfsysdriver{pgfsys-dvipdfm.def}
\begin{tikzpicture}[circuit logic US]
	\tikzstyle{sensor}=[draw, minimum width=4em, text centered, minimum height=2em]
	\tikzstyle{stategen}=[draw, text width=2.2em, text centered, minimum height=2em]
	\tikzstyle{delay}=[draw, text width=0.5em, text centered, minimum height=1em]
	\tikzstyle{naveqs} = [sensor, minimum width=4em, minimum height=5.5em]
	\def\blockdist{2.4}
	\def\edgedist{2.5}
    \node (naveq) [naveqs] {$\channel {y,z}{x}$};
    \path (naveq.west)+(-0.7*\blockdist,0) node (enc) [sensor] {Encoder};
    \path (naveq.east)+(0.8*\blockdist,0.3*\blockdist) node (dec1) [sensor] {Rec.~$\setY$};
    \path (naveq.east)+(0.8*\blockdist,-0.3*\blockdist) node (dec2) [sensor] {Rec.~$\setZ$};
    \path (dec1.east)+(0.3*\blockdist,0) node (guess1) [text centered] {};
    \path (dec2.east)+(0.3*\blockdist,0) node (guess2) [text centered] {};
    \path (dec1)+(0,0.5*\blockdist) node (msi1) [text centered] {};
    \path (dec2)+(0,-0.5*\blockdist) node (msi2) [text centered] {};     
    \path (enc.west)+(-0.3*\blockdist,0) node (sour) [text centered] {};
    \path (dec2)+(-0.3*\blockdist,-0.45*\blockdist) node (fb1) [text centered] {};
    \path (fb1.north)+(-1.3*\blockdist,0) node (del) [delay] {\small D};
    		
    \path [draw, ->] (enc) -- node [above] {$X_{i}$} 
        (naveq.west |- enc);        
    \path [draw, ->] (sour) -- node [left] {$\begin{matrix}M_\setY\\ M \\ M_\setZ\end{matrix}$ \quad}
    		(enc.west |- sour);
    	\path [draw, <-] (dec1) -- node [above] {$Y_i$}
    		(naveq.east |- dec1);
    \path [draw, <-] (dec2) -- node [above] {$Z_i$}
    		(naveq.east |- dec2);
    	\path [draw, <-] (dec1) -- node [right] {$M^{(c)}_\setZ$} 
        (msi1);
    	\path [draw, <-] (dec2) -- node [right] {$M^{(c)}_\setY$} 
        (msi2);    		
    	\path [draw, ->] (dec1) -- node [right] { \; $\begin{matrix}\widehat{M} \\ \widehat{M_\setY}\end{matrix}$} 
        (guess1);
    \path [draw, ->] (dec2) -- node [right] { \; $\begin{matrix}\widehat{M}\\\widehat{M_\setZ}\end{matrix}$} 
        (guess2);
    \path [draw, dashed] (fb1) -- node [above] {} (fb1.north |- dec2.south);
    \path [draw, dashed, ->] (fb1.north) -- node [below] {$Z_i$ / $W_i$} (del);
	\path [draw, dashed] (del.west) -- (enc.south |- del.west);
    \path [draw, dashed, <-] (enc.south) -- node [left] {} (enc.south |- del.west);
    
\end{tikzpicture}

\caption[SD-BC with P-MSI and feedback]{SD-BC with P-MSI and feedback.}
\label{fig:model}
\end{center}
\vspace{-2mm}

\end{figure}
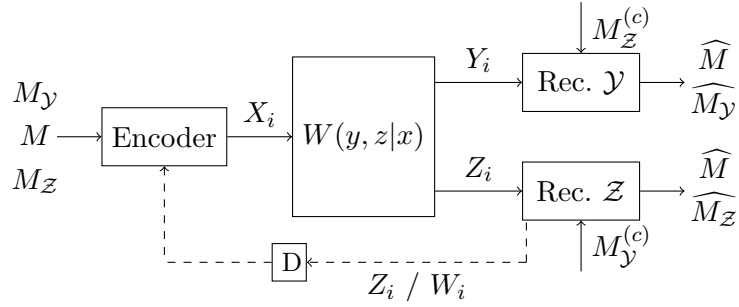

\section{The SD-BC with P-MSI} \label{sec:pmsi}

In this section, we assume that there is no feedback. 

\subsection{Capacity Region and Optimal Coding Scheme} \label{sec:pmsiCapResults}

\begin{theorem}[Capacity with P-MSI]\label{th:partMessSI}
The capacity region $\mathscr C_{\textnormal {P-MSI}}$ of the SD-BC with P-MSI is the set of rate-tuples $(R,R_\setY^{(p)} \!, R_\setY^{(c)} \!, R_\setZ^{(p)} \!, R_\setZ^{(c)})$ satisfying
\begin{subequations} \label{bl:capRegPartMessSI}
\ba
R+R_\setY &\leq H (Y) \label{eq:ry} \\
R+R_\setZ &\leq I (U;Z) \label{eq:rz} \\
R+R_\setY + R_\setZ^{(p)} \! &\leq I (V;Y) + H (Y|U) + I (U;Z|V) \label{eq:ryRzp} \\
R+R_\setY^{(p)} \! + R_\setZ &\leq H (Y|U) + I (U;Z) \label{eq:rypRz} \\
2 R+R_\setY + R_\setZ &\leq I (V;Y) + H (Y|U) + I (U;Z)\label{eq:Rlast}
\ea
\end{subequations}
for some PMF of the form
\ba
p (v,u,x,y,z) = p (v,u) \, p (x|u) \, W (y,z|x). \label{eq:partMessSIPMF}
\ea
W.l.g., one can restrict $X$ to be a function of $(Y,U)$.
\end{theorem}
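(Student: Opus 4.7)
The plan is threefold: achievability of the region \eqref{bl:capRegPartMessSI}, a matching converse, and the functional reduction $X = g(Y, U)$.

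\textbf{Achievability.} I would extend the Marton/Gelfand--Pinsker scheme for the no-MSI SD-BC by folding the known submessages $M_\nu^{(c)}$ into the Marton bin indices. Concretely, generate a cloud codebook $\{V^n(m)\}$ of size $2^{nR}$, a superposed ``$Y$-codebook'' of size $2^{n(R_\setY + \tilde R_\setY)}$ indexed by $(m_\setY, \ell_\setY)$, and a superposed ``$U$-codebook'' of size $2^{n(R_\setZ + \tilde R_\setZ)}$ indexed by $(m_\setZ, \ell_\setZ)$. The encoder picks $(\ell_\setY, \ell_\setZ)$ so that $(V^n, Y^n, U^n)$ are jointly typical---which succeeds w.h.p.\ whenever $\tilde R_\setY + \tilde R_\setZ > I(Y;U|V)$ by the mutual covering lemma---and then transmits $X_i$ drawn from $p(x|U_i)$ subject to the constraint $f(X_i) = Y_i$. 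The P-MSI trick is to identify a $R_\setZ^{(c)}$-size subset of $\ell_\setY$ with $M_\setZ^{(c)}$ (which receiver~$\setY$ knows) and a $R_\setY^{(c)}$-size subset of $\ell_\setZ$ with $M_\setY^{(c)}$ (which receiver~$\setZ$ knows); each decoder cancels the corresponding bin index before list decoding, shrinking its effective ambiguity from $R_\setY + R_\setZ$ to $R_\setY + R_\setZ^{(p)}$ at receiver~$\setY$ and, symmetrically, at receiver~$\setZ$. Standard packing-lemma analysis then yields the five inequalities of \eqref{bl:capRegPartMessSI}.

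\textbf{Converse.} The starting point is Fano's inequality applied separately at each decoder with its P-MSI conditioned on,
\[
H(M, M_\setY \mid Y^n, M_\setZ^{(c)}) \leq n\epsilon_n, \qquad H(M, M_\setZ \mid Z^n, M_\setY^{(c)}) \leq n\epsilon_n,
\]
combined with the auxiliary identifications $V_i := (M, M_\setY^{(c)}, M_\setZ^{(c)}, Z_{i+1}^n)$ and $U_i := (V_i, M_\setZ)$, and a uniform time-sharing variable $Q \sim \unif[1{:}n]$ subsequently absorbed into $V$. Bound \eqref{eq:ry} is immediate from $n(R+R_\setY) - n\epsilon_n \leq H(Y^n \mid M_\setZ^{(c)}) \leq \sum_i H(Y_i)$. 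Bound \eqref{eq:rz} falls out by applying Csisz\'ar's sum identity to a telescoping expansion of $I(M, M_\setZ; Z^n \mid M_\setY^{(c)})$. The three mixed inequalities \eqref{eq:ryRzp}--\eqref{eq:Rlast} are obtained by jointly bounding $H(M, M_\setY, M_\setZ^{(p)} \mid M_\setZ^{(c)})$ (and, for \eqref{eq:rypRz}, its dual) and swapping $Y^{i-1}$ with $Z_{i+1}^n$ sums via Csisz\'ar's identity, producing the $I(V;Y)$, $H(Y|U)$, and $I(U;Z|V)$ terms in the required combinations. The Markov structure \eqref{eq:partMessSIPMF} follows from the channel's memorylessness and the mutual independence of the messages. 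The main obstacle is the joint handling of both decoders' Fano terms so that the asymmetric bounds \eqref{eq:ryRzp} and \eqref{eq:rypRz}---which mix private rates of the two users---close with matching single-letter auxiliaries.

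\textbf{Reduction $X = g(Y,U)$.} I factor $p(x|u) = p(y|u)\,p(x|y,u)$, which is well defined because $Y = f(X)$. Applying the Functional Representation Lemma (Lemma~\ref{le:funcRep}) to the conditional $p(x|y,u)$ yields a finite-support chance variable $T$ independent of $(V, U, Y)$ and a function $h$ with $X = h(Y, U, T)$. Redefining $\tilde U := (U, T)$ makes $X$ a deterministic function of $(Y, \tilde U)$ and, because $T \perp (V, U, Y)$ and $V - U - X$ holds originally, preserves the Markov structure $V - \tilde U - X - (Y, Z)$. Since $T$ is independent of $(V, U, Y)$, the quantities $H(Y)$, $I(V;Y)$, and $H(Y|\tilde U) = H(Y|U)$ are unchanged under $U \leftarrow \tilde U$, while $I(\tilde U; Z) \geq I(U; Z)$ and $I(\tilde U; Z|V) \geq I(U; Z|V)$ by data processing on the refinement. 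Consequently, the region \eqref{bl:capRegPartMessSI} evaluated with $\tilde U$ contains the one evaluated with $U$, establishing that restricting to functional $X = g(Y, U)$ is without loss of optimality.
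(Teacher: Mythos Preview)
Your functional-representation argument is essentially the paper's. The other two parts diverge in ways that matter.

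\textbf{Achievability.} The paper places the entire tuple $(M, M_\setY^{(c)}, M_\setZ^{(c)})$ in the cloud center---so each receiver, knowing its side-information, faces a reduced cloud ambiguity---and then \emph{additionally bins the cloud center} at an auxiliary rate $\tilde R^{(c)}$ (equivalently, rate-splits $M_\setY^{(p)}, M_\setZ^{(p)}$ into cloud and satellite parts). The paper devotes a separate appendix to proving that this extra step is not cosmetic: without it one picks up the surplus constraints $R_\setY^{(p)}\le H(Y|V)$, $R_\setZ^{(p)}\le I(U;Z|V)$, $R_\setY^{(p)}+R_\setZ^{(p)}\le H(Y|U)+I(U;Z|V)$, and the resulting region is in general \emph{strictly} inside $\mathscr C_{\textnormal{P-MSI}}$. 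Your scheme instead puts only $M$ in the cloud and tries to exploit the P-MSI by tying $M_\setZ^{(c)}$ to a sub-index of the $Y$-satellite's Marton bin index $\ell_\setY$. But $\ell_\setY$ is what the encoder searches over for joint typicality; pinning part of it to $M_\setZ^{(c)}$ just shrinks the encoder's covering freedom, and knowing that sub-index does not shrink Receiver~$\setY$'s list of competing $(m',m_\setY')$, since each competitor may use any $\ell_\setY'$. The missing idea is allowing the cloud center to carry private-message information; without it you will not recover all of \eqref{bl:capRegPartMessSI}.

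\textbf{Converse.} The paper's auxiliaries are $V_i=(M,M_\setY^{(c)},M_\setZ^{(c)},Y_{i+1}^n,Z^{i-1})$ and $U_i=(M,M_\setY^{(c)},M_\setZ,Y_{i+1}^n,Z^{i-1})$: they carry \emph{both} future $Y$'s and past $Z$'s. Your $V_i=(M,M_\setY^{(c)},M_\setZ^{(c)},Z_{i+1}^n)$ omits the $Y$-component. With only $Z_{i+1}^n$ in $U_i$, the chain-rule expansion of $I(M_\setY^{(p)};Y^n|M,M_\setY^{(c)},M_\setZ)$ conditions on $Y^{i-1}$, and when you apply Csisz\'ar's identity to trade this for $Z_{i+1}^n$ you drag a $Y^{i-1}$ term into the $Z_i$-mutual-information side---so the single-letter auxiliary that actually emerges contains a $Y$-sequence as well. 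In the paper's derivations of \eqref{eq:ryRzp} and \eqref{eq:rypRz} the paired Csisz\'ar cancellations work precisely because both output sequences sit in $U_i,V_i$ from the outset. As written, your auxiliaries will not close the mixed sum-rate bounds.
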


\begin{proof}
See Appendix~\ref{sec:thPartMessSI}.
\end{proof}

In the following we sketch and discuss the proof of the direct part. The capacity-achieving code is described rigorously in Appendix~\ref{sec:thPartMessSI}. 
Use Marton's code construction (see \cite[Section~8.4]{gamalkim11}) to encode the ``common message-tuple'' $\big(M, M_\setY^{(c)} \!, M_\setZ^{(c)}\big)$ into a cloud-center $V^n$ and the private messages $M_\setY^{(p)}$ and $M_\setZ^{(p)}$ into satellites $Y^n$ and $U^n$, respectively. Receiver~$\setY$ decodes $\big(M, M_\setY^{(c)} \!, M_\setZ^{(c)}\big)$ and $M_\setY^{(p)}$ jointly, while taking into account its knowledge of $M_\setZ^{(c)}$; and likewise Receiver~$\setZ$ decodes $\big(M, M_\setY^{(c)} \!, M_\setZ^{(c)}\big)$ and $M_\setZ^{(p)}$ jointly, while taking into account its knowledge of $M_\setY^{(c)}$.

The tentative code can achieve all rate-tuples $(R, R_\setY^{(p)} \!, R_\setY^{(c)} \!, R_\setZ^{(p)} \!, R_\setZ^{(c)})$ that for some PMF of the form \eqref{eq:partMessSIPMF} satisfy \eqref{bl:capRegPartMessSI} and
\begin{subequations}\label{bl:ccNotBinned}
\ba
R_\setY^{(p)} \! &\leq H (Y|V) \\
R_\setZ^{(p)} \! &\leq I (U;Z|V) \label{eq:ccNotBinnedRZp} \\
R_\setY^{(p)} \! + R_\setZ^{(p)} \! &\leq H (Y|U) + I (U;Z|V).
\ea
\end{subequations}
Note that this achievable region differs from the capacity region $\mathscr C_{\textnormal {P-MSI}}$ of the SD-BC with P-MSI in that the rates $R_\setY^{(p)}$ and $R_\setZ^{(p)}$ must also satisfy \eqref{bl:ccNotBinned}. As we show in Appendix~\ref{sec:binCCNeeded}, the region is---in general---strictly contained in $\mathscr C_{\textnormal {P-MSI}}$.\footnote{To show this, we shall use Corollary~\ref{co:fullMessSIRecZCommMess} ahead.} 

To get rid of the constraints \eqref{bl:ccNotBinned} and hence achieve the entire capacity region $\mathscr C_{\textnormal {P-MSI}}$, a fix is needed: the encoder must be able to convey more information about $M_\setY^{(p)}$ and $M_\setZ^{(p)}$ by allowing the cloud-center $V^n$ to depend not only on the triple $(M, M_\setY^{(c)} \!, M_\setZ^{(c)})$ but also on $M_\setY^{(p)}$ and $M_\setZ^{(p)}$. To this end we use the following code construction, which is depicted in Figure~\ref{fig:codeconstruction} for the setting without a common message. (In Figure~\ref{fig:codeconstruction} each dot represents an $n$-length codeword.)
\begin{figure}[ht!]
\includegraphics[width=1.1\textwidth]{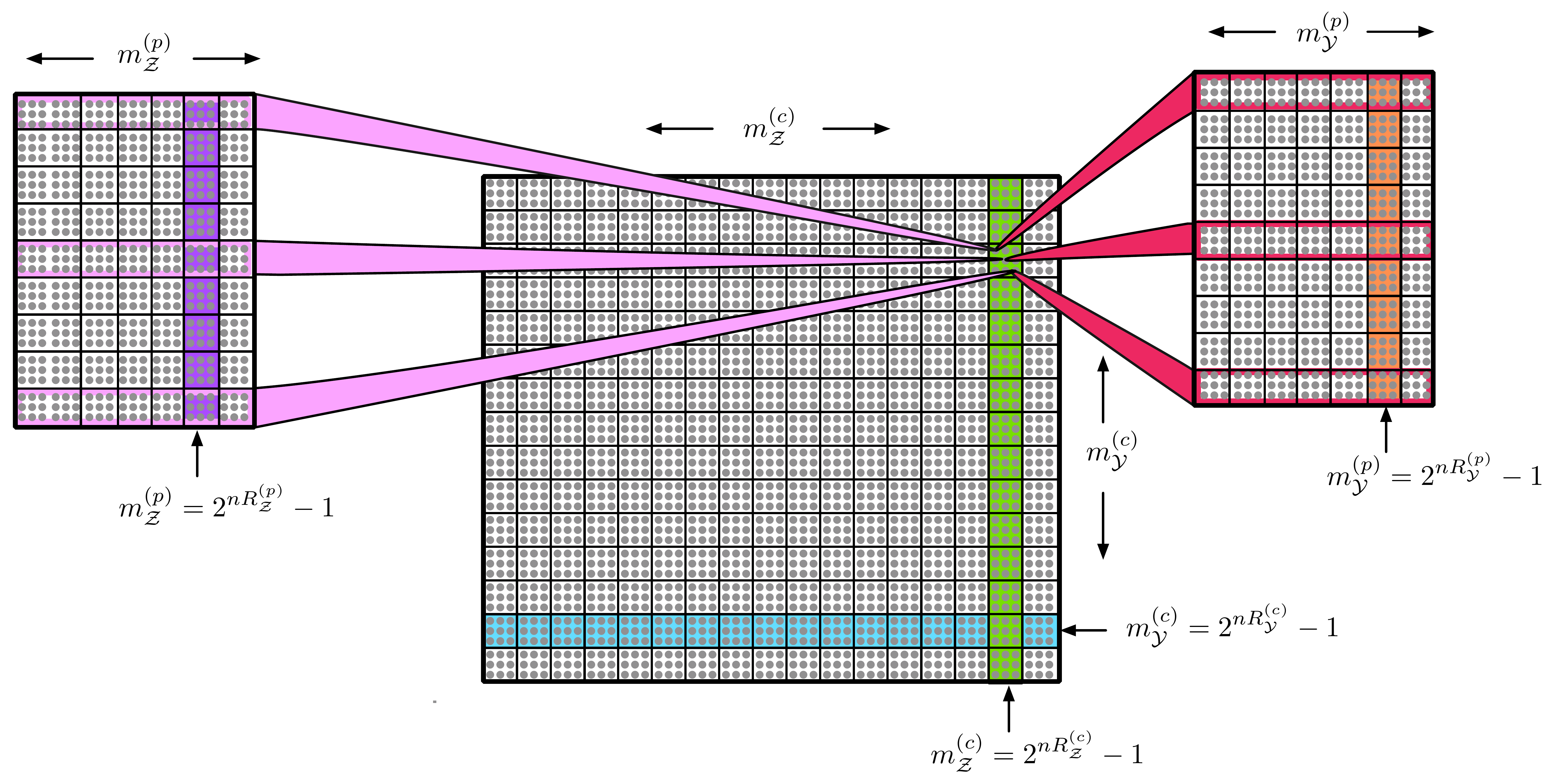}
\label{fig:codeconstruction}
\caption{Code construction without a common message.}
\end{figure}
Fix some PMF $p (v,u,x,y,z)$ of the form \eqref{eq:partMessSIPMF}. For each triple $(m, m_\setY^{(c)} \!, m_\setZ^{(c)})$ we generate a bin containing $2^{n \tilde{R}^{(c)}}$ $n$-tuples $\vecv$, which are drawn independently of each other and each from the PMF $\prod^{n}_{i = 1} p (v_i)$. (In Figure~\ref{fig:codeconstruction} the light-blue row represents all bins and codewords that are associated with $m_{\setY}^{(c)} \! = 2^{n R_{\setY}^{(c)}}-1$, and the light-green column represents all bins and codewords that are associated with $m_{\setZ}^{(c)} \! = 2^{n R_{\setZ}^{(c)}}-1$.) For each cloud-center-bin we generate two satellite codebooks: one to encode $M_{\setY}^{(p)}$ and one to encode $M_{\setZ}^{(p)}$. (Figure~\ref{fig:codeconstruction} depicts the satellite codebooks corresponding to the pair $( m_{\setY}^{(c)} \!, m_{\setZ}^{(c)} ) = (3, 2^{n R_{\setZ}^{(c)}}-1)$: that for $M_{\setY}^{(p)}$ on the right, and that for $M_{\setZ}^{(p)}$ on the left.) For each $m_{\setY}^{(p)}$ the first $2^{n(\tilde{R}_\setY- \tilde{R}^{(c)})}$ $\vecy$-codewords in the satellite codebook corresponding to any pair $( m_{\setY}^{(c)} \!, m_{\setZ}^{(c)} )$ are superpositioned on the first codeword in the corresponding cloud-center-bin; the following $2^{n(\tilde{R}_\setY- \tilde{R}^{(c)})}$ codewords in each satellite codebook are superpositioned on the second codeword in the corresponding cloud-center-bin; and so on. That is, the first $2^{n(\tilde{R}_\setY- \tilde{R}^{(c)})}$ $\vecy$-codewords are drawn according to the conditional PMF $\prod^{n}_{i = 1} p (y_i|v_i^{(1)})$, where $v_i^{(1)}$ denotes the $i$-th component of the first codeword in the corresponding clound-center-bin; the following $2^{n(\tilde{R}_\setY- \tilde{R}^{(c)})}$ $\vecy$-codewords are drawn according to the conditional PMF $\prod^{n}_{i = 1} p (y_i|v_i^{(2)})$, where $v_i^{(2)}$ denotes the $i$-th component of the second codeword in the corresponding clound-center-bin; and so on. The $\vecu$-codewords in the satellite codebooks for $M_{\setZ}^{(p)}$ are drawn similarly. (In Figure~\ref{fig:codeconstruction} the uppermost row on the right, which is framed in red, and the uppermost row on the left, which is framed in lila, correspond to the first codeword in the $(3, 2^{n R_{\setZ}^{(c)}}-1)$-cloud-center-bin.)
%
The codewords in each satellite codebook are partitioned into as many different bins as there are possible realizations of the message $M_{\setY}^{(p)}$ or $M_{\setZ}^{(p)}$, respectively, and each such bin is associated with a different realization $m_{\setY}^{(p)}$ or $m_{\setZ}^{(p)}$, respectively. (In Figure~3 the orange column represents the bin that is associated with $m_{\setY}^{(p)}= 2^{n R_{\setY}^{(p)}}-1$, and the purple column represents the bin that is associated with $m_{\setZ}^{(p)}= 2^{n R_{\setZ}^{(p)}}-1$. Note that each bin comprises multiple subbins: one for each $(3, 2^{n R_{\setZ}^{(c)}}-1)$-cloud-center-codeword, where $( m_{\setY}^{(c)} \!, m_{\setZ}^{(c)} ) = (3, 2^{n R_{\setZ}^{(c)}}-1)$ is the ``common message-pair'' to which the depicted satellite codebooks correspond.)


To transmit the message-tuple $(m, m_\setY^{(p)} \!, m_\setY^{(c)} \!, m_\setZ^{(p)} \!, m_\setZ^{(c)})$, the encoder first looks for $n$-tuples $\vecv$, $\vecy$, and $\vecu$ in the bins corresponding to $(m, m_\setY^{(c)} \!, m_\setZ^{(c)})$, $m_\setY^{(p)}$, and $m_\setZ^{(p)}$, respectively, satisfying that $(\vecv,\vecy,\vecu)$ are jointly typical. It then generates the channel input $X^n$ from the product distribution $\prod^{n}_{i = 1} p (x_i|y_i,u_i)$. Receiver~$\setY$ decodes $(M, M_\setY^{(c)} \!, M_\setZ^{(c)})$ and $M_\setY^{(p)}$ jointly, while restricting attention to the column of the cloud-center that corresponds to the message $M_\setZ^{(c)}$, which Receiver~$\setY$ knows. Likewise, Receiver~$\setZ$ decodes $(M, M_\setY^{(c)} \!, M_\setZ^{(c)})$ and $M_\setZ^{(p)}$ jointly, while restricting attention to the row of the cloud-center that corresponds to the message $M_\setY^{(c)}$, which Receiver~$\setZ$ knows.\footnote{It is well-known that, without a cloud-center, this scheme achieves the capacity region of the SD-BC without a common message and without MSI (see, e.g., \cite[Sections~8.3.1--8.3.2]{gamalkim11}).}


As we explain in Appendix~\ref{sec:thPartMessSI}, the effect of binning the cloud-center is the same as that of rate-splitting (see Remark~\ref{re:binningCCEqRS} ahead). If we were to use rate-splitting instead of binning, then we would not bin the cloud-center, but instead we would divide the messages $M_\setY^{(p)}$ and $M_\setZ^{(p)}$ into two parts each, i.e., $$M_\nu^{(p)} \! = ( M_{\nu,s}^{(p)}, M_{\nu,c}^{(p)} ), \quad \nu \in \{ \setY, \setZ \}.$$ Of these parts we would associate only $M_{\setY,s}^{(p)}$ and $M_{\setZ,s}^{(p)}$ with the satellites $Y^n$ and $U^n$, respectively, whereas we would encode $M_{\setY,c}^{(p)}$ and $M_{\setZ,c}^{(p)}$ in the cloud-center. The benefit of binning the cloud-center is that it necessitates only one auxiliary rate: the rate at which the cloud-center is binned. In contrast, rate-splitting necessitates two auxiliary rates: the rates $R_{\setY,c}^{(p)}$ and $R_{\setZ,c}^{(p)}$ associated with $M_{\setY,c}^{(p)}$ and $M_{\setZ,c}^{(p)}$, respectively.\\

We next specialize Theorem~\ref{th:partMessSI} to cases where one or both of the receivers have no MSI or F-MSI, Table~\ref{tb:specialCasesPMSI} illustrates to which.\footnote{Corollary~\ref{co:noMessSI} ahead is only for the setting without a common message ($R = 0$).} These corollaries help understand when P-MSI increases capacity (see Subsection~\ref{sec:effectPMSI}).

\begin{table}[h]
\begin{center}
  \renewcommand{\arraystretch}{1.2}
  \begin{tabular}{ c c | c c c }
    \multicolumn{2}{c}{\multirow{2}{*}{}} & \multicolumn{3}{c}{Receiver~$\setY$} \\
  	&& no MSI & P-MSI & F-MSI \\
    \cline{2-5}    
    \multirow{3}{*}{\rotatebox[origin=c]{90}{Receiver~$\setZ$}} & no MSI & Corollary~\ref{co:noMessSI} & Corollary~\ref{co:noMessSI} & Corollary~\ref{co:noMessSI} \\
    & P-MSI & -- & -- & Corollary~\ref{co:fullMessSIRecYCommMess} \\
    & F-MSI & -- & Corollary~\ref{co:fullMessSIRecZCommMess} & Corollary~\ref{co:fullMessSIRecYZCommMess}
  \end{tabular}
  \caption{Special cases of Theorem~\ref{th:partMessSI}.}
  \label{tb:specialCasesPMSI}			
\end{center}
\end{table}


\begin{corollary}[No MSI at $\setZ$] \label{co:noMessSI}
For each $R_\setZ^{(p)} \! \in [0, R_\setZ]$, the capacity region of the SD-BC without a common message and without MSI at Receiver~$\setZ$ ($R=0$ and $R_{\setY}^{(p)}=R_{\setY}$) is the set of rate-tuples $(0, R_\setY, 0, R_\setZ^{(p)}, R_\setZ^{(c)})$ satisfying
\begin{subequations}\label{bl:capRegNoMessSI}
\ba
R_\setY &\leq H (Y) \label{eq:ryCapRegNoMessSI} \\
R_\setZ &\leq I (U;Z) \label{eq:rzCapRegNoMessSI} \\
R_\setY + R_\setZ &\leq H (Y|U) + I (U;Z) \label{eq:srCapRegNoMessSI}
\ea
\end{subequations}
for some PMF of the form
\ba
p (u,x,y,z) = p (u,x) \, W (y,z|x). \label{eq:noMessSIPMF}
\ea
W.l.g., one can restrict $X$ to be a function of $(Y,U)$.
\end{corollary}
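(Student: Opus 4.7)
The plan is to derive Corollary~\ref{co:noMessSI} directly from Theorem~\ref{th:partMessSI} by specialization to $R=0$, $R_\setY^{(c)}=0$ (so $R_\setY^{(p)}=R_\setY$), and to show that in this specialization the auxiliary $V$ in Theorem~\ref{th:partMessSI} can be taken to be a deterministic constant, which collapses the five-inequality region into the three-inequality region \eqref{bl:capRegNoMessSI}. Neither side of the derivation will depend on the split of $R_\setZ$ into $(R_\setZ^{(p)},R_\setZ^{(c)})$, which is why the corollary holds uniformly for every $R_\setZ^{(p)}\in[0,R_\setZ]$.

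For the direct part, I would start from a PMF $p(u,x)\,W(y,z|x)$ of the form \eqref{eq:noMessSIPMF} and a pair $(R_\setY,R_\setZ)$ satisfying \eqref{bl:capRegNoMessSI}, and let $V$ be an auxiliary variable of degenerate distribution, independent of $(U,X)$. Then $p(v,u)\,p(x|u)\,W(y,z|x)=p(u,x)\,W(y,z|x)$ is of the form \eqref{eq:partMessSIPMF}; and because $I(V;Y)=0$ and $I(U;Z|V)=I(U;Z)$ under this choice, the five bounds in \eqref{bl:capRegPartMessSI} reduce to \eqref{eq:ryCapRegNoMessSI}, \eqref{eq:rzCapRegNoMessSI}, \eqref{eq:srCapRegNoMessSI}, together with $R_\setY+R_\setZ^{(p)}\leq H(Y|U)+I(U;Z)$ from \eqref{eq:ryRzp} and $R_\setY+R_\setZ\leq H(Y|U)+I(U;Z)$ from \eqref{eq:Rlast}. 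Both of these are implied by \eqref{eq:srCapRegNoMessSI} since $R_\setZ^{(p)}\leq R_\setZ$, so every rate in the claimed region is achievable by Theorem~\ref{th:partMessSI}, regardless of the chosen split of $R_\setZ$.

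For the converse, I would appeal directly to Theorem~\ref{th:partMessSI}: if $(0,R_\setY,0,R_\setZ^{(p)},R_\setZ^{(c)})$ is achievable, then some PMF $p(v,u,x,y,z)$ of the form \eqref{eq:partMessSIPMF} satisfies \eqref{bl:capRegPartMessSI}. Marginalizing $V$ yields $p(u,x,y,z)=p(u)\,p(x|u)\,W(y,z|x)$, which is of the form \eqref{eq:noMessSIPMF}; and the bounds \eqref{eq:ry}, \eqref{eq:rz}, and \eqref{eq:rypRz} specialized to $R=0$, $R_\setY^{(p)}=R_\setY$ are precisely \eqref{eq:ryCapRegNoMessSI}, \eqref{eq:rzCapRegNoMessSI}, and \eqref{eq:srCapRegNoMessSI}; the remaining bounds \eqref{eq:ryRzp} and \eqref{eq:Rlast} are simply discarded. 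The assertion that one may restrict $X$ to be a function of $(Y,U)$ transfers unchanged from Theorem~\ref{th:partMessSI}, since marginalizing $V$ does not alter the conditional law $p(x|u)$ nor the deterministic dependence $Y=f(X)$. There is no real obstacle here: the argument is a routine specialization, and the main point worth flagging is the redundancy of the two $V$-dependent bounds once $V$ is taken constant.
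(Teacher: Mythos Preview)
Your proposal is correct and follows essentially the same approach as the paper: for achievability you take $V$ deterministic and observe that \eqref{eq:ryRzp} and \eqref{eq:Rlast} become redundant in view of \eqref{eq:rypRz}, and for the converse you keep only \eqref{eq:ry}, \eqref{eq:rz}, \eqref{eq:rypRz}. The paper's proof is the same, only stated more tersely.
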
 

\begin{proof}
Let $V$ be deterministic. In this case, and because $R=0$ and $R_\setY^{(p)} \!= R_\setY$, Constraints~\eqref{eq:ryRzp} and \eqref{eq:Rlast} are redundant in view of Constraint~\eqref{eq:rypRz}. 
Hence, \eqref{bl:capRegNoMessSI} is an inner bound on the capacity region. That \eqref{bl:capRegNoMessSI} is also an outer bound follows from \eqref{eq:ry}, \eqref{eq:rz}, and \eqref{eq:rypRz}.
\end{proof}


\begin{corollary}[F-MSI at $\setY$]\label{co:fullMessSIRecYCommMess}
For each $R_\setY^{(p)} \! \in [0, R_\setY]$, the capacity region of the SD-BC with F-MSI at Receiver~$\setY$ ($R_\setZ^{(p)} \!= 0$) is the set of rate-tuples $(R, R_\setY^{(p)} \!, R_\setY^{(c)} \!, 0, R_\setZ)$ satisfying
\begin{subequations}\label{bl:capRegFullMessSIRecYCommMess}
\ba
R + R_\setY &\leq H (Y) \label{eq:fullMessSIRecYRyCommMess} \\
R + R_\setZ &\leq I (U;Z) \label{eq:fullMessSIRecYRzCommMess} \\
R + R_\setY^{(p)} + R_\setZ &\leq H (Y|U) + I (U;Z) \label{eq:fullMessSIRecYRypRzCommMess}
\ea
\end{subequations}
for some PMF of the form
\ba
p (u,x,y,z) = p (u,x) \, W (y,z|x). \label{eq:fullMessSIRecYCommMess}
\ea
\end{corollary}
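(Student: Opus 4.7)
The plan is to derive Corollary~\ref{co:fullMessSIRecYCommMess} as a direct specialization of Theorem~\ref{th:partMessSI} to the case $R_\setZ^{(p)} = 0$ (equivalently $R_\setZ^{(c)} = R_\setZ$), by identifying the auxiliary $V$ with $U$.

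For the direct part, I would instantiate the region \eqref{bl:capRegPartMessSI} by choosing $V = U$. Under this choice, the factorization \eqref{eq:partMessSIPMF} reduces to $p(u,x,y,z) = p(u)\, p(x|u)\, W(y,z|x)$, which is precisely the PMF form \eqref{eq:fullMessSIRecYCommMess}. With $R_\setZ^{(p)} = 0$, Constraints~\eqref{eq:ry}, \eqref{eq:rz}, and \eqref{eq:rypRz} coincide exactly with \eqref{eq:fullMessSIRecYRyCommMess}, \eqref{eq:fullMessSIRecYRzCommMess}, and \eqref{eq:fullMessSIRecYRypRzCommMess}, respectively. The remaining two constraints become redundant: Constraint~\eqref{eq:ryRzp} reduces to
$$R + R_\setY \leq I(U;Y) + H(Y|U) + I(U;Z|U) = H(Y),$$
which is implied by \eqref{eq:fullMessSIRecYRyCommMess}; and Constraint~\eqref{eq:Rlast} reduces to
$$2R + R_\setY + R_\setZ \leq I(U;Y) + H(Y|U) + I(U;Z) = H(Y) + I(U;Z),$$
which is just the sum of \eqref{eq:fullMessSIRecYRyCommMess} and \eqref{eq:fullMessSIRecYRzCommMess}. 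Hence every rate-tuple in \eqref{bl:capRegFullMessSIRecYCommMess} is achievable.

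For the converse, I would observe that Constraints~\eqref{eq:ry}, \eqref{eq:rz}, and \eqref{eq:rypRz} of Theorem~\ref{th:partMessSI} involve only the joint law of $(U,X,Y,Z)$ and match verbatim the three constraints in \eqref{bl:capRegFullMessSIRecYCommMess}. Therefore any rate-tuple $(R, R_\setY^{(p)}, R_\setY^{(c)}, 0, R_\setZ) \in \mathscr C_{\textnormal {P-MSI}}$ must satisfy them for some PMF of the form \eqref{eq:partMessSIPMF}. Marginalizing out $V$ yields a joint PMF of $(U,X,Y,Z)$ of the required form \eqref{eq:fullMessSIRecYCommMess}, and the three inequalities are preserved because they do not depend on $V$.

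There is no serious obstacle here, as this is a routine specialization of Theorem~\ref{th:partMessSI}; the only points requiring care are the two algebraic simplifications of \eqref{eq:ryRzp} and \eqref{eq:Rlast} under $V = U$, which use only $I(U;Z|U) = 0$ and the identity $I(U;Y) + H(Y|U) = H(Y)$.
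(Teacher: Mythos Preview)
Your proposal is correct and matches the paper's own argument: the paper likewise sets $V=U$ to show that \eqref{bl:capRegPartMessSI} and \eqref{bl:capRegFullMessSIRecYCommMess} coincide for the direct part, and invokes \eqref{eq:ry}, \eqref{eq:rz}, \eqref{eq:rypRz} for the converse. You have simply spelled out in more detail why \eqref{eq:ryRzp} and \eqref{eq:Rlast} become redundant under $V=U$, which the paper leaves implicit.
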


\begin{proof}
For $V = U$ the constraints in \eqref{bl:capRegPartMessSI} and \eqref{bl:capRegFullMessSIRecYCommMess} are equivalent, and hence \eqref{bl:capRegFullMessSIRecYCommMess} is an inner bound on the capacity region. That \eqref{bl:capRegFullMessSIRecYCommMess} is also an outer bound follows from \eqref{eq:ry}, \eqref{eq:rz}, and \eqref{eq:rypRz}. 
\end{proof}


\begin{corollary}[F-MSI at $\setZ$] \label{co:fullMessSIRecZCommMess}
For each $R_\setZ^{(p)} \! \in [0, R_\setZ]$, the capacity region of the SD-BC with F-MSI at Receiver~$\setZ$ ($R_\setY^{(p)} \!= 0$) is the set of rate-tuples $(R, 0, R_\setY, R_\setZ^{(p)} \!, R_\setZ^{(c)})$ satisfying
\begin{subequations} \label{bl:capRegFullMessSIRecZCommMess}
\ba
R + R_\setY &\leq H (Y) \label{eq:fullMessSIRecZRyCommMess} \\
R + R_\setZ &\leq I (X;Z) \label{eq:fullMessSIRecZRzCommMess} \\
R + R_\setY + R_\setZ^{(p)} \! &\leq I (X;Y,Z) \label{eq:fullMessSIRecZRyRzpCommMess}
\ea
\end{subequations}
for some PMF of the form
\ba
p (x,y,z) = p (x) \, W (y,z|x). \label{eq:fullMessSIRecZCommMess}
\ea
\end{corollary}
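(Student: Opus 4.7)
The plan is to derive Corollary~\ref{co:fullMessSIRecZCommMess} by specializing Theorem~\ref{th:partMessSI} to the case $R_\setY^{(p)} = 0$.

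For achievability, I choose $V = Y$ and $U = X$ in Theorem~\ref{th:partMessSI}. Since $Y = f(X)$, this choice is consistent with the Markov structure in~\eqref{eq:partMessSIPMF}: $V$ is a deterministic function of $U = X$, and the joint PMF reduces to $p(x)\,W(y,z|x)$, matching~\eqref{eq:fullMessSIRecZCommMess}. Under this choice, $H(Y|U) = 0$, $I(U;Z) = I(X;Z)$, and $I(V;Y) + H(Y|U) + I(U;Z|V) = H(Y) + I(X;Z|Y) = I(X;Y,Z)$. Hence~\eqref{eq:ry},~\eqref{eq:rz}, and~\eqref{eq:ryRzp} reduce respectively to~\eqref{eq:fullMessSIRecZRyCommMess},~\eqref{eq:fullMessSIRecZRzCommMess}, and~\eqref{eq:fullMessSIRecZRyRzpCommMess}, while~\eqref{eq:rypRz} (with $R_\setY^{(p)}=0$) becomes identical to~\eqref{eq:fullMessSIRecZRzCommMess} and~\eqref{eq:Rlast} is implied by the sum of~\eqref{eq:fullMessSIRecZRyCommMess} and~\eqref{eq:fullMessSIRecZRzCommMess}.

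For the converse, bounds~\eqref{eq:fullMessSIRecZRyCommMess} and~\eqref{eq:fullMessSIRecZRzCommMess} follow immediately from~\eqref{eq:ry} and~\eqref{eq:rz}, the latter via the data-processing inequality $I(U;Z) \leq I(X;Z)$ along the chain $U-X-Z$ induced by~\eqref{eq:partMessSIPMF}. The main task, and the hardest step, is to derive~\eqref{eq:fullMessSIRecZRyRzpCommMess} from~\eqref{eq:ryRzp} by showing that for any PMF of the form~\eqref{eq:partMessSIPMF},
\begin{equation*}
I(V;Y) + H(Y|U) + I(U;Z|V) \leq I(X;Y,Z).
\end{equation*}
My plan is first to use the Markov chain $V-U-Y$ (which follows from $V-U-X$ and $Y = f(X)$) to write $H(Y|U) = H(Y|V) - I(U;Y|V)$, so the left-hand side becomes $H(Y) + I(U;Z|V) - I(U;Y|V)$. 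The chain-rule identity $I(U;Z|V) - I(U;Y|V) = I(U;Z|V,Y) - I(U;Y|V,Z) \leq I(U;Z|V,Y)$ then reduces matters to bounding $I(U;Z|V,Y)$ by $I(X;Z|Y)$. I would next apply data processing for $U-X-Z$ conditional on $(V,Y)$---which remains a Markov chain because $Y$ is a function of $X$, so conditioning on $Y$ does not destroy the Markov structure $V-U-X-Z$---to obtain $I(U;Z|V,Y) \leq I(X;Z|V,Y)$; and finally the observation $I(V;Z|X,Y) = 0$ (again since $V-X-Z$ is Markov and $Y = f(X)$) gives $I(X;Z|V,Y) \leq I(X,V;Z|Y) = I(X;Z|Y)$, completing the chain $H(Y) + I(U;Z|V) - I(U;Y|V) \leq H(Y) + I(X;Z|Y) = I(X;Y,Z)$. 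The key subtlety throughout is exploiting that $Y=f(X)$ to preserve Markovity under conditioning on $Y$.
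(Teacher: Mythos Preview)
Your proof is correct. The achievability argument is identical to the paper's (the choice $V=Y$, $U=X$), and your converse reaches the same key inequality $I(V;Y)+H(Y|U)+I(U;Z|V)\le I(X;Y,Z)$, but via a genuinely different route.

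The paper's converse first invokes the restriction from Theorem~\ref{th:partMessSI} that w.l.g.\ $X$ is a function of $(Y,U)$; it then uses this functional relation repeatedly (for instance to replace $I(U;Z|Y)$ by $I(X;Z|Y)$ and in the final steps). By contrast, your argument never appeals to the Functional Representation lemma: you work directly with an arbitrary PMF of the form~\eqref{eq:partMessSIPMF}, use the chain-rule identity $I(U;Z|V)-I(U;Y|V)=I(U;Z|V,Y)-I(U;Y|V,Z)$, and then apply data processing on $U\!-\!X\!-\!Z$ conditional on $(V,Y)$ (legitimate because $Y=f(X)$ preserves the Markov structure) followed by $I(V;Z|X,Y)=0$. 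This is cleaner and more self-contained, since it establishes the bound for every PMF in~\eqref{eq:partMessSIPMF} rather than only for those satisfying the auxiliary functional constraint. The paper's approach, on the other hand, yields intermediate identities (such as the equality cases in its steps~(b)--(d)) that are reused elsewhere in the paper, e.g., in the analysis of Example~\ref{ex:binningCCNeeded}.
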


\begin{proof}
See Appendix~\ref{sec:pfCoFullMessSIRecZ}.
\end{proof}

From Corollary~\ref{co:fullMessSIRecZCommMess} we see that if both receivers have F-MSI, i.e., when $$R_\setY^{(p)} = 0 \quad \text{and} \quad R_\setZ^{(p)} = 0,$$ then \eqref{eq:fullMessSIRecZRyRzpCommMess} is redundant. Consequently, we recover:

\begin{corollary}[F-MSI at $\setY$ and $\setZ$]\label{co:fullMessSIRecYZCommMess} [From \cite[Theorem~1]{kramershamai07}.\footnote{For the setting without a common message, the capacity region of the general BC with F-MSI at both receivers was established in \cite[Theorem~1]{kramershamai07}. This result readily extends to the setting with a common message.}] The capacity region of the SD-BC with F-MSI at both receivers ($R_\setY^{(p)} \!= 0$ and $R_\setZ^{(p)} \!= 0$) is the set of rate-tuples $(R, R_\setY, R_\setZ)$ satisfying
\begin{subequations} \label{bl:capRegFullMessSIRecYZCommMess}
\ba
R + R_\setY &\leq H (Y) \label{eq:fullMessSIRecYZRyCommMess} \\
R + R_\setZ &\leq I (X;Z) \label{eq:fullMessSIRecYZRzCommMess} \\
\ea
\end{subequations}
for some PMF of the form
\ba
p (x,y,z) = p (x) \, W (y,z|x). \label{eq:fullMessSIRecYZCommMess}
\ea
\end{corollary}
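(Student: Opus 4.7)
The plan is to derive Corollary~\ref{co:fullMessSIRecYZCommMess} as a direct specialization of Corollary~\ref{co:fullMessSIRecZCommMess}. F-MSI at both receivers means F-MSI at Receiver~$\setZ$ (so Corollary~\ref{co:fullMessSIRecZCommMess} already applies with $R_\setY^{(p)} = 0$) together with F-MSI at Receiver~$\setY$, which imposes the further constraint $R_\setZ^{(p)} = 0$. Substituting $R_\setZ^{(p)} = 0$ into \eqref{bl:capRegFullMessSIRecZCommMess}, the three bounds reduce to $R+R_\setY \leq H(Y)$, $R+R_\setZ \leq I(X;Z)$, and $R+R_\setY \leq I(X;Y,Z)$, over PMFs of the form $p(x,y,z)=p(x)\,W(y,z|x)$.

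The only remaining task is to check that the third bound is redundant given the first. This is where I would exploit the semideterministic structure of the channel: since $W(y,z|x) = \ind{\{y=f(x)\}}\,W(z|x)$, the output $Y$ is a deterministic function of the input $X$, which gives $H(Y|X)=0$ and hence $I(X;Y)=H(Y)$. Applying the chain rule for mutual information then yields
\begin{equation*}
I(X;Y,Z) \;=\; I(X;Y) + I(X;Z|Y) \;=\; H(Y) + I(X;Z|Y) \;\geq\; H(Y),
\end{equation*}
so that any rate-pair satisfying $R+R_\setY \leq H(Y)$ automatically satisfies $R+R_\setY \leq I(X;Y,Z)$. This shows that \eqref{eq:fullMessSIRecZRyRzpCommMess} is implied by \eqref{eq:fullMessSIRecZRyCommMess} under the assumption $R_\setZ^{(p)}=0$, and leaves precisely the two bounds in \eqref{bl:capRegFullMessSIRecYZCommMess}.

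There is no real obstacle here; the only thing to double-check is that the PMF constraint \eqref{eq:fullMessSIRecZCommMess} from Corollary~\ref{co:fullMessSIRecZCommMess} is exactly \eqref{eq:fullMessSIRecYZCommMess}, which it is. The result then follows, and matches the region established in \cite[Theorem~1]{kramershamai07} for the general BC with F-MSI at both receivers.
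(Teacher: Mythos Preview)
Your proof is correct and follows exactly the same approach as the paper: the paper states (just before the corollary) that when $R_\setY^{(p)}=R_\setZ^{(p)}=0$ constraint~\eqref{eq:fullMessSIRecZRyRzpCommMess} in Corollary~\ref{co:fullMessSIRecZCommMess} becomes redundant, and you have simply spelled out why---namely $I(X;Y,Z)\geq I(X;Y)=H(Y)$ because $Y=f(X)$.
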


\subsection{How P-MSI Affects Capacity} \label{sec:effectPMSI}

In this section we study how P-MSI at the receivers affects the capacity region of the SD-BC.

\begin{remark}[P-MSI at $\setY$] \label{re:partMessSIRecy}
P-MSI at the deterministic receiver~$\setY$ can increase capacity if, and only if, the stochastic receiver $\setZ$ has P-MSI ($R_{\setY}^{(p)} \! < R_{\setY}$) or a common message is transmitted ($R > 0$).

In particular, the ``only if''-direction implies:
\begin{equation}\label{eq:equivalent}
(0,R_\setY, 0, R_\setZ^{(p)} \!, R_\setZ^{(c)}) \in 
\mathscr C_{\textnormal {P-MSI}} \quad \Longleftrightarrow \quad (0,R_\setY, R_\setZ) \in \mathscr C,
\end{equation}
where $\mathscr C$ denotes the capacity region of the SD-BC without MSI.
\end{remark}

\begin{proof}
The ``only-if'' direction follows from Corollary~\ref{co:noMessSI}. The ``if-direction'' follows from Examples~\ref{ex:MSIIncCap} and \ref{ex:commMessPMSIRecYIncCap} ahead. More specifically, Example~\ref{ex:MSIIncCap} shows that F-MSI at Receiver~$\setY$ can increase capacity if Receiver~$\setZ$ already has F-MSI; and Example~\ref{ex:commMessPMSIRecYIncCap} shows that F-MSI at Receiver~$\setY$ can increase capacity if a common message is transmitted.\footnote{Continuity considerations imply that it is not necessary to assume F-MSI ($R_{\setY}^{(p)} \! = 0$ or $R_{\setZ}^{(p)} \! = 0$), but that the statements also hold for P-MSI of the form $R_{\setY}^{(p)} \! \in (0,R_\setY)$ or $R_{\setZ}^{(p)} \! \in (0,R_\setZ)$.}
\end{proof}

\begin{remark}[P-MSI at $\setZ$] \label{re:MSIRecZ}
P-MSI at the stochastic receiver~$\setZ$ can increase the capacity region of the SD-BC; and this holds irrespective of whether or not the deterministic receiver has P-MSI or the encoder transmits a common message.
\end{remark}

\begin{proof}
Assume no common message ($R = 0$). By Corollary~\ref{co:noMessSI} the capacity region without MSI at Receiver~$\setZ$ does not depend on whether or not Receiver~$\setY$ has P-MSI ($R_\setZ^{(p)} \! \in [0,R_\setZ]$), and hence we obtain from Example~\ref{ex:MSIIncCap} ahead that F-MSI at Receiver~$\setZ$ can increase capacity, irrespective of whether or not Receiver~$\setY$ has P-MSI.\footnote{Continuity considerations imply that it is not necessary to assume F-MSI ($R_{\setY}^{(p)} \! = 0$), i.e., that the statement also holds for P-MSI of the form $R_{\setY}^{(p)} \! \in (0,R_\setY)$.} Continuity considerations imply that the statement also holds with a common message, i.e., for some $R > 0$.
\end{proof}

\begin{example}[P-MSI without a common message] \label{ex:MSIIncCap}
Consider the SD-BC with binary input $X$ and binary outputs 
\begin{IEEEeqnarray}{l}
Y = X \qquad \text {and} \qquad Z = X \oplus S, \label{eq:exMSIIncCapChannel}
\end{IEEEeqnarray}
where $S \sim \ber (p)$ is independent of $X$, and where $p \in ( 0,1/2 )$. Assume that the transmitter conveys only private messages ($R = 0$).

Using the capacity results of Section~\ref{sec:pmsiCapResults}, we can characterize the capacity region of the SC-BC \eqref{eq:exMSIIncCapChannel} for the cases where no receiver has MSI and where the stochastic receiver~$\setZ$ has F-MSI:

\begin{itemize}
\item \underline{Assume no MSI ($R_\setY^{(p)} \! = R_\setY$ and $R_\setZ^{(p)} \!= R_\setZ$).} As shown in \cite[Section~5.4.2]{gamalkim11}, the capacity region $\mathscr C$ without MSI is the set of all rate-pairs $( R_\setY, R_\setZ )$ that for some $\alpha \in [0,1/2]$ satisfy
\begin{subequations} \label{bl:exMSIIncCapNoMSI}
\ba
R_\setY &\leq h_\textnormal{b}(\alpha) \\
R_\setZ &\leq 1 - h_\textnormal{b}(\alpha + p - 2 \alpha p).
\ea
\end{subequations}
In particular, \eqref{bl:exMSIIncCapNoMSI} implies that without MSI the sum-rate is at most 1, and that it is strictly smaller than 1 whenever $R_\setZ > 0$.

\item \underline{Assume F-MSI at Receiver~$\setZ$ ($R_\setY^{(p)} \! = 0$).} By Corollary~\ref{co:fullMessSIRecZCommMess} (with $X \sim \ber (1/2)$) the capacity region $\mathscr C_{\textnormal {P-MSI}}$ with F-MSI at Receiver~$\setZ$ is the set of rate-tuples $(0, R_\setY, R_\setZ^{(p)} \!, R_\setZ^{(c)})$ that satisfy  
\begin{subequations} \label{bl:exMSIIncCapMSI}
\ba
R_\setZ &\leq 1-h_\textnormal{b}(p)\\
R_\setY + R_\setZ^{(p)} \! &\leq 1. 
\ea
\end{subequations}

For the case where Receiver~$\setY$ has no MSI and Receiver~$\setZ$ has F-MSI ($R_\setY^{(p)}\!= 0$ and $R_\setZ^{(p)} \!= R_\setZ$), Constraints~\eqref{bl:exMSIIncCapMSI} imply that the sum-rate is at most 1 but can also be 1 when $R_\setZ > 0$. 

For the case where both receivers have F-MSI ($R_\setZ^{(p)} \!= R_\setY^{(p)} \!= 0$), Constraints~\eqref{bl:exMSIIncCapMSI} imply that the sum-rate can exceed $1$.
\end{itemize}

From the above observations we see that the capacity of the studied SD-BC \eqref{eq:exMSIIncCapChannel} satisfies the following two: 
\begin{enumerate}
	\item The capacity region without MSI is strictly contained in the capacity region without MSI at Receiver~$\setY$ and with F-MSI at Receiver~$\setZ$.
	\item The capacity region without MSI at Receiver~$\setY$ and with F-MSI at Receiver~$\setZ$ is strictly contained in the capacity region with F-MSI at both receivers.
	\end{enumerate}
\end{example}

\begin{example}[P-MSI with a common message] \label{ex:commMessPMSIRecYIncCap}
Consider the SD-BC with input $X = ( X_1, X_2)$, where $X_1$ and $X_2$ are binary, and with outputs 
\ba \label{eq:examp}Y = X_1 + X_2\qquad \textnormal{and}\qquad 
Z = \begin{cases} X_2 &S = 0, \\ ? &S = 1, \end{cases}
\ea
where $S \sim \ber (p)$ is independent of $X$, and where $p \in (0,1)$.

In Appendix~\ref{sec:pfExCommMessPMSIRecYIncCap} we prove the following facts on the maximum sum-rate that is achievable on the SD-BC \eqref{eq:examp}:

\begin{itemize}
\item
\underline{Assume F-MSI at $\setY$ and no MSI at $\setZ$ ($R_\setY^{(p)} \!= R_\setY$ and $R_\setZ^{(p)} \!= 0$).}
Denote the set of PMFs $p(u,x,y,z)$ satisfying \eqref{eq:fullMessSIRecYCommMess} by $\mathcal{P}_u$. The maximum achievable sum-rate $R+R_\setY+R_\setZ$ is
\ba 
\max_{p(u,x,y,z) \in \mathcal{P}_u} \bigl\{ H(Y|U) + I(U;Z) \bigr\} = 1 - p + \Biggl[ p \, h_{\textnormal b} \biggl( \frac{1}{1+2^{1/p}} \biggr) + \frac{2^{1/p}}{1+2^{1/p}} \Biggr]. \label{eq:maxSRFullMessSIRecYCM}
\ea

Let $\mathcal{P}_u^\star$ denote the set of PMFs $p (u,x,y,z) \in \mathcal{P}_u$ that maximize the LHS of \eqref{eq:maxSRFullMessSIRecYCM}. W.r.t.\ every PMF $p (u,x,y,z) \in \mathcal{P}_u^\star$
\ba 
&I (U;Y) < \min \bigl\{ H (Y), I (U;Z) \bigr\}. \label{eq:maxSRFullMessSIRecYCMMutis}
\ea
Moreover, the largest common-message rate $R^\star_{\textnormal{F-MSI@}\setY}$ for which the maximum sum-rate is achievable is 
\ba \label{eq:Rcommonstar}
R^\star_{\textnormal{F-MSI@}\setY}= \max_{p (u,x,y,z) \in \mathcal{P}_{u}^\star}  \min{ \bigl\{ H (Y), I (U;Z) \bigr\}}.
\ea
(Note that---by continuity and because the set $\mathcal{P}_u^\star$ is compact---the maxima in \eqref{eq:maxSRFullMessSIRecYCM} and \eqref{eq:Rcommonstar} are attained.)
 
\item
\underline{Assume no MSI ($R_\setY^{(p)} \!= R_\setY$ and $R_\setZ^{(p)} \!= R_\setZ$).} The maximum achievable sum-rate is again \eqref{eq:maxSRFullMessSIRecYCM}, and it can be achieved only if $p (u,x,y,z) \in \mathcal{P}_u$. Let $\mathcal{P}_{uv}^\star$ denote the set of PMFs $p (u,v,x,y,z)$ that for some $p(u,x,y,z) \in \mathcal{P}_{u}^\star$ are of the form $$p (u,v,x,y,z)=p(v|u) \, p(u,x,y,z).$$ The largest common-message rate $R^\star_{\textnormal{no-MSI}}$ for which the maximum sum-rate is achievable satisfies
\ba 
R^\star_{\textnormal{no-MSI}} \leq \max_{p (u,v,x,y,z) \in \mathcal{P}_{uv}^\star} I (V;Y). \label{eq:commMessPMSIRecYIncCapRSR}
\ea
\end{itemize}

We can use the above observations to compare $R^\star_{\textnormal{no-MSI}}$ and $R^\star_{\textnormal{F-MSI@}\setY}$:
\begin{IEEEeqnarray}{rCl}
R^\star_{\textnormal{no-MSI}} & \stackrel{(a)}\leq &  \max_{p (u,v,x,y,z) \in\mathcal{P}_{uv}^\star}  I (V;Y) \nonumber \\
& \stackrel{(b)}\leq &  \max_{p (u,x,y,z) \in\mathcal{P}_{u}^\star}  I (U;Y) \nonumber \\
& \stackrel{(c)}< & \max_{p (u,x,y,z) \in\mathcal{P}_{u}^\star} \min \bigl\{ H (Y), I (U;Z) \bigr\} \nonumber \\
& \stackrel{(d)}= & R^\star_{\textnormal{F-MSI@}\setY},
\end{IEEEeqnarray}
where $(a)$ follows from \eqref{eq:commMessPMSIRecYIncCapRSR}; $(b)$ holds by definition of $\mathcal{P}_{uv}^\star$; $(c)$ follows from \eqref{eq:maxSRFullMessSIRecYCMMutis}; and $(d)$ follows from \eqref{eq:Rcommonstar}. The comparison reveals that for the SD-BC \eqref{eq:examp} the capacity region with F-MSI at Receiver~$\setY$ and without MSI at Receiver~$\setZ$ strictly contains that without MSI.

\end{example}

\subsection{Intuition on the Results} \label{sec:discuss}

In this section we provide some intuition on the results of Section~\ref{sec:effectPMSI}. To keep the exposition simple we consider only the case without a common message.

When a transmitter sends two independent messages over a BC to two receivers, then the transmission to each of the receivers is interfered by the transmission to the other receiver. Since the transmitter knows the two messages, it knows the two transmissions acausally and can thus partially mitigate the interference experienced at each receiver (see Marton's scheme \cite{marton79}).

Suppose now that a receiver has P-MSI. Such a receiver has partial knowledge of the transmission to the other receiver, and in general knowing interference at both the transmitter and the receiver is better (in terms of achievable rates) than knowing it only at the transmitter (cf.\ the state-dependent single-user channel with acausal state-information (SI) at the transmitter). Consequently, P-MSI at a receiver allows to better mitigate interference on the BC and hence to achieve larger rates. This provides some intuition for our finding that---in the setting without a common message---P-MSI at the SD-BC's stochastic receiver only can increase capacity. 

In contrast, we have seen that---in the setting without a common message---P-MSI at the SD-BC's deterministic receiver only cannot increase capacity. Also for this result we can obtain some intuition from the single-user channel whose transmission is subject to interference. Indeed, if the single-user channel's outputs can be computed from its inputs and the interference, then knowing the interference only at the transmitter is as beneficial as knowing it also at the receiver.\footnote{To see this, consider a deterministic state-dependent single-user channel $W (y|x,s)$, whose output is a function of its input and the state. That the capacity of this channel with acausal SI at the encoder does not depend on whether or not the state is revealed to the receiver can be seen as follows. If the receiver does not observe the state, then the capacity is given by the Gelfand-Pinsker formula \cite{gelfandpinsker80SI} $$C = \max_{p (u,x|s)} I (U;Y) - I (U;S),$$ which for $U = Y (X,S)$ evaluates to $\max_{p (x|s)} H (Y|S)$. If the receiver observes the state, then the capacity is $$C = \max_{p (x|s)} I (X;Y|S),$$ which is equivalent to $\max_{p (x|s)} H (Y|S)$, because $Y = Y (X,S)$. This, and the fact that the receiver can always ignore the state that it observes, prove the claim.}

Interference can be mitigated more efficiently if both receivers have P-MSI. For example, if both receivers have F-MSI ($R_\setY^{(p)} \! = 0$ and $R_\setZ^{(p)} \! = 0$) and, moreover, $R_\setY^{(c)} \! = R_\setZ^{(c)}$, then the transmitter can send the ``x-or'' of the messages $M_{\setY}^{(c)}$ and $M_{\setZ}^{(c)}$ as a ``common message'' to both receivers, and each receiver can recover its message by first recovering the common message and then subtracting the message for the other receiver, which it knows. This provides some intuition for our finding that on the SD-BC P-MSI at both receiver's is better (in terms of achievable rates) than P-MSI at the stochastic receiver only.


\section{Feedback on the SD-BC} \label{sec:fb}

This section investigates how feedback can be used on the SD-BC. The feedback that we consider is perfect or rate-limited, and it is only from the stochastic receiver~$\setZ$. (Recall that feedback from the deterministic receiver~$\setY$ is useless.) For simplicity of exposition, we assume that there is no common message $(R=0)$. But our ideas easily extend to the common-message setting.


\subsection{Preliminaries: An Enhanced BC}
Consider the \emph{enhanced BC} with P-MSI of Figure~\ref{fig:model_enh}, which is obtained from the SD-BC with P-MSI by revealing the stochastic outputs $\{Z_i\}$ also to the deterministic receiver~$\setY$. It plays an important role in the feedback code that we present in the next section.
\begin{figure}[ht]
\vspace{-2mm}

\begin{center}
\def\pgfsysdriver{pgfsys-dvipdfm.def}
\begin{tikzpicture}[circuit logic US]
	\tikzstyle{sensor}=[draw, minimum width=4em, text centered, minimum height=2em]
	\tikzstyle{stategen}=[draw, text width=2.2em, text centered, minimum height=2em]
	\tikzstyle{delay}=[draw, text width=0.5em, text centered, minimum height=1em]
	\tikzstyle{naveqs} = [sensor, minimum width=4em, minimum height=5.5em]
	\def\blockdist{2.4}
	\def\edgedist{2.5}
    \node (naveq) [naveqs] {$\channel {y,z}{x}$};
    \path (naveq.west)+(-0.7*\blockdist,0) node (enc) [sensor] {Encoder};
    \path (naveq.east)+(0.8*\blockdist,0.3*\blockdist) node (dec1) [sensor] {Rec.~$\setY$};
    \path (naveq.east)+(0.8*\blockdist,-0.3*\blockdist) node (dec2) [sensor] {Rec.~$\setZ$};
    \path (dec1.east)+(0.3*\blockdist,0) node (guess1) [text centered] {};
    \path (dec2.east)+(0.3*\blockdist,0) node (guess2) [text centered] {};
    \path (dec1)+(0,0.5*\blockdist) node (msi1) [text centered] {};
    \path (dec2)+(0,-0.5*\blockdist) node (msi2) [text centered] {};     
    \path (enc.west)+(-0.3*\blockdist,0) node (sour) [text centered] {};
    		
    \path [draw, ->] (enc) -- node [above] {$X_{i}$} 
        (naveq.west |- enc);        
    \path [draw, ->] (sour) -- node [left] {$\begin{matrix}M_\setY \\ M_\setZ\end{matrix}$ \quad}
    		(enc.west |- sour);
    	\path [draw, <-] (dec1) -- node [above] {$Y_i$}
	(naveq.east |- dec1);
	  \path [draw, <-] (dec1) -- node [below] {${\color{red}Z_i}$}
    		(naveq.east |- dec1);
    \path [draw, <-] (dec2) -- node [above] {$Z_i$}
    		(naveq.east |- dec2);
    	\path [draw, <-] (dec1) -- node [right] {$M^{(c)}_\setZ$} 
        (msi1);
    	\path [draw, <-] (dec2) -- node [right] {$M^{(c)}_\setY$} 
        (msi2);    		
    	\path [draw, ->] (dec1) -- node [right] { \; $ \widehat{M_\setY}$} 
        (guess1);
    \path [draw, ->] (dec2) -- node [right] { \; $\widehat{M_\setZ}$} 
        (guess2);
    
\end{tikzpicture}

\caption{Enhanced BC with P-MSI.}
\label{fig:model_enh}
\end{center}
\vspace{-2mm}

\end{figure}
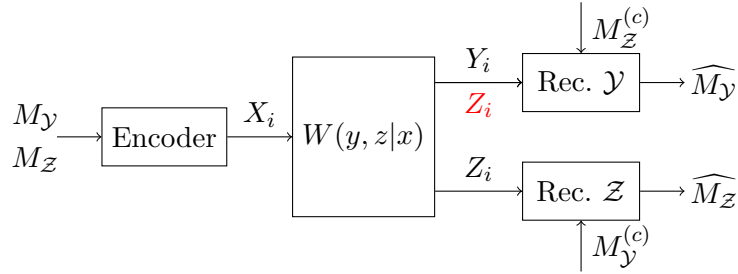

The capacity region of the enhanced BC is defined similarly as that of the SD-BC (see Section~\ref{sec:chMod}). We denote it by $\mathscr C^{( \textnormal{enh} )}_{\textnormal{P-MSI}}$, and in the special case without MSI by 
$\mathscr C^{( \textnormal{enh} )}$.

\begin{proposition}[Enhanced BC with P-MSI]\label{pr:capRegEnhBC}
The capacity region $\mathscr C^{( \textnormal{enh} )}_{\textnormal{P-MSI}}$ of the enhanced BC with P-MSI is the set of rate-tuples $(R_\setY^{(p)} \!, R_\setY^{(c)} \!, R_\setZ^{(p)} \!, R_\setZ^{(c)})$ satisfying
\begin{subequations}\label{bl:capRegEnhPMSI}
\ba
R_\setZ &\leq I (U;Z) \label{eq:rzEnhMSI} \\
R_\setY + R_\setZ^{(p)} \!&\leq I (X;Y,Z) \label{eq:ryEnhMSI} \\
R_\setY^{(p)} \!+ R_\setZ &\leq I (X;Y,Z|U) + I (U;Z) \label{eq:rypRzEnhMSI}
\ea
\end{subequations}
for some PMF of the form
\ba
p (u,x,y,z) = p (u,x) \, W (y,z|x). \label{eq:PMSIPMFEnh}
\ea
\end{proposition}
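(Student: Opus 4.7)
The plan is to observe that the enhanced BC is physically degraded in the sense $X \to (Y,Z) \to Z$, since $Z$ is a trivial deterministic function of the ``super-output'' $(Y,Z)$ revealed to receiver~$\setY$. I therefore adapt the classical superposition-coding argument for the physically-degraded BC (see, e.g., \cite[Section~5.4.1]{gamalkim11}) so as to incorporate the P-MSI at both receivers. Because receiver~$\setY$'s observation $(Y,Z)$ already dominates receiver~$\setZ$'s observation $Z$, no separate cloud-center $V$ as in Theorem~\ref{th:partMessSI} is needed; the single auxiliary~$U$ suffices.

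For the direct part I would fix a PMF $p(u,x)$ satisfying~\eqref{eq:PMSIPMFEnh} and build a two-layer superposition code in which the cloud center $U^n$ carries the pair $(m_\setZ, m_\setY^{(c)})$ and the satellite $X^n$ carries $m_\setY^{(p)}$. Concretely, I would generate $2^{n(R_\setZ + R_\setY^{(c)})}$ i.i.d.\ codewords $\vecu$ under $\prod_i p(u_i)$ (indexed by $(m_\setZ, m_\setY^{(c)})$) and, for each such $\vecu$, $2^{n R_\setY^{(p)}}$ conditionally i.i.d.\ codewords $\vecx$ under $\prod_i p(x_i\mid u_i)$ (indexed by $m_\setY^{(p)}$). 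Receiver~$\setZ$ uses its knowledge of $m_\setY^{(c)}$ to restrict the cloud-center codebook to $2^{nR_\setZ}$ codewords and decodes $m_\setZ$ by joint typicality with $z^n$, which by the packing lemma yields~\eqref{eq:rzEnhMSI}. Receiver~$\setY$ uses $m_\setZ^{(c)}$ to restrict the cloud-center codebook to $2^{n(R_\setZ^{(p)}+R_\setY^{(c)})}$ codewords and then jointly decodes $(u^n,x^n)$ from $(y^n,z^n)$: the wrong-satellite event gives $R_\setY^{(p)} < I(X;Y,Z\mid U)$, which combined with~\eqref{eq:rzEnhMSI} yields~\eqref{eq:rypRzEnhMSI}; the wrong-cloud event gives $R_\setZ^{(p)}+R_\setY^{(c)}+R_\setY^{(p)} < I(X;Y,Z)$ (using $U\to X\to (Y,Z)$), i.e.,~\eqref{eq:ryEnhMSI}.

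For the converse I would apply Fano's inequality at each receiver to obtain $H(M_\setZ\mid Z^n, M_\setY^{(c)})\leq n\epsilon_n$ and $H(M_\setY\mid Y^n, Z^n, M_\setZ^{(c)})\leq n\epsilon_n$, and then identify the auxiliary $U_i := (M_\setZ, M_\setY^{(c)}, Z^{i-1})$; this choice automatically respects $U_i\to X_i\to(Y_i,Z_i)$ because the channel is memoryless and $X_i$ is a function of the messages. Using the independence of the messages, the chain
\[
nR_\setZ = H(M_\setZ\mid M_\setY^{(c)}) \leq I(M_\setZ;Z^n\mid M_\setY^{(c)}) + n\epsilon_n \leq \sum_i I(U_i;Z_i) + n\epsilon_n
\]
single-letterizes to~\eqref{eq:rzEnhMSI}. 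An analogous chain starting from $n(R_\setY+R_\setZ^{(p)}) = H(M_\setY,M_\setZ^{(p)}\mid M_\setZ^{(c)})$, Fano, and the data-processing step through $X_i$ delivers $\sum_i I(X_i;Y_i,Z_i)+n\epsilon_n$, i.e.,~\eqref{eq:ryEnhMSI}. Finally, summing the Fano bound for $M_\setY^{(p)}$ (conditioned on $M_\setY^{(c)}$ and $M_\setZ$) with the bound for $M_\setZ$ yields $n(R_\setY^{(p)}+R_\setZ) \leq \sum_i I(X_i;Y_i,Z_i\mid U_i) + \sum_i I(U_i;Z_i) + 2n\epsilon_n$, which gives~\eqref{eq:rypRzEnhMSI}. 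A standard time-sharing step with $Q\sim\unif[1:n]$ and $U:=(U_Q,Q)$, $X:=X_Q$, $Y:=Y_Q$, $Z:=Z_Q$ then produces the single-letter region; a Carath\'eodory-type cardinality bound on $|\setU|$ completes the argument.

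The step I expect to be the most delicate is the identification of $U_i$: it must be rich enough to include $Z^{i-1}$ (so that the degraded structure can be exploited in the bound leading to~\eqref{eq:rypRzEnhMSI}) and the P-MSI $M_\setY^{(c)}$ (so that the sum upper-bounding $nR_\setZ$ can be written as $\sum_i I(U_i;Z_i)$ rather than something larger), yet still satisfy the Markov chain $U_i\to X_i\to(Y_i,Z_i)$. Verifying that this single choice simultaneously delivers all three single-letter bounds---and carefully invoking the mutual independence of $M_\setY^{(p)}$, $M_\setY^{(c)}$, and $M_\setZ$ when the two Fano bounds are combined for~\eqref{eq:rypRzEnhMSI}---is the main bookkeeping challenge.
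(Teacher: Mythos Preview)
Your approach differs from the paper's. The paper does not carry out a direct proof but instead observes that, because Receiver~$\setY$ sees $(Y^n,Z^n)$ and can simulate Receiver~$\setZ$'s decoder once it has recovered $M_\setY^{(c)}$, one may without loss require Receiver~$\setY$ to decode all of $M_\setZ$; this reduces the enhanced BC with P-MSI to a BC with P-MSI and \emph{degraded message sets}, and the result then follows immediately from \cite[Theorem~3]{kramershamai07}. Your self-contained superposition-plus-Fano route is a legitimate alternative, and the converse sketch is sound: the identification $U_i=(M_\setZ,M_\setY^{(c)},Z^{i-1})$ does serve all three bounds simultaneously, and the Fano step for $M_\setZ^{(p)}$ at Receiver~$\setY$ is justified precisely by the degradedness observation you made at the outset.

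There is, however, a gap in the achievability. For a fixed $p(u,x)$ your scheme delivers
\[
R_\setZ \le I(U;Z),\qquad R_\setY^{(p)} \le I(X;Y,Z\mid U),\qquad R_\setY + R_\setZ^{(p)} \le I(X;Y,Z),
\]
and you then remark that the middle inequality combined with the first \emph{implies} \eqref{eq:rypRzEnhMSI}. But that only shows your achievable region is \emph{contained in} the target region; for a fixed PMF it can be strictly smaller (take $R_\setZ$ well below $I(U;Z)$ and $R_\setY^{(p)}$ strictly between $I(X;Y,Z\mid U)$ and $I(X;Y,Z\mid U)+I(U;Z)-R_\setZ$, which \eqref{eq:rypRzEnhMSI} allows but your middle constraint forbids). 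To close the gap you must either (i) add rate-splitting---move a portion of $M_\setY^{(p)}$ into the cloud index so that Receiver~$\setZ$ also decodes it, trading the slack $I(U;Z)-R_\setZ$ against the deficit in the satellite rate---or (ii) argue separately that the union over $p(u,x)$ of your three-constraint region coincides with the union of the Proposition's region (e.g., by time-sharing $U$ with a constant). Either fix is routine, but without it the direct part is incomplete.
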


\begin{proof}
The result is an immediate consequence of \cite[Theorem~3]{kramershamai07}, which characterizes the capacity region of the BC with P-MSI and degraded message sets. This holds because, as we argue next, the capacity region of the enhanced BC remains unchanged if the stronger receiver~$\setY$ must decode also the pair $(M_\setZ^{(p)}, M_\setZ^{(c)})$. To see this, note that a coding scheme is reliable on the enhanced BC with P-MSI if, and only if, the following two conditions hold: 1)~Receiver~$\setZ$ can decode $(M_\setZ^{(p)} \!, M_\setZ^{(c)})$ reliably from $M_\setY^{(c)}$ and the outputs $Z^n$; and 2)~Receiver~$\setY$ can decode $(M_\setY^{(p)} \!, M_\setY^{(c)})$ reliably from $M_\setZ^{(c)}$ and the outputs ($Y^n, Z^n)$. But these two conditions are equivalent to the conditions that result when in 2)~we require that Receiver~$\setY$ can---in addition to $(M_\setY^{(p)} \!, M_\setY^{(c)})$---decode also the pair $(M_\setZ^{(p)}, M_\setZ^{(c)})$ reliably.
\end{proof}

\begin{corollary}[Enhanced BC without MSI] \label{co:enhWithoutMSI}
The capacity region $\mathscr C^{( \textnormal{enh} )}$ of the enhanced BC without MSI ($R_\setY^{(p)} \!= R_\setY$ and $R_\setZ^{(p)} \!= R_\setZ$) is the set of rate-tuples $( R_\setY, 0, R_\setZ^{(p)} \!, R_\setZ^{(c)} )$ satisfying \begin{subequations}\label{bl:capRegEnh}
\ba 
R_\setY &\leq I (X;Y,Z|U) \label{eq:ryEnh} \\
R_\setZ &\leq I (U;Z) \label{eq:rzEnh}
\ea
\end{subequations}
for some PMF of the form \eqref{eq:PMSIPMFEnh}. 
\end{corollary}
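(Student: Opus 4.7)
The plan is to deduce Corollary~\ref{co:enhWithoutMSI} directly from Proposition~\ref{pr:capRegEnhBC} by specializing to the no-MSI setting. Setting $R_\setY^{(p)}=R_\setY$ and $R_\setZ^{(p)}=R_\setZ$ in \eqref{bl:capRegEnhPMSI} turns the three Proposition constraints into
\[
R_\setZ\leq I(U;Z),\qquad R_\setY+R_\setZ\leq I(X;Y,Z),\qquad R_\setY+R_\setZ\leq I(X;Y,Z|U)+I(U;Z),
\]
all taken over PMFs of the form \eqref{eq:PMSIPMFEnh}. So the question reduces to rewriting the region cut out by these three bounds as the rectangular region $\{R_\setY\leq I(X;Y,Z|U),\,R_\setZ\leq I(U;Z)\}$ in \eqref{bl:capRegEnh}.

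The first step is to eliminate the middle constraint as redundant. The Markov chain $U\to X\to(Y,Z)$ enforced by \eqref{eq:PMSIPMFEnh} gives $I(U;Y,Z|X)=0$, and the chain rule yields $I(X;Y,Z)=I(X;Y,Z|U)+I(U;Y,Z)$. Since $I(U;Z)\leq I(U;Y,Z)$, one obtains $I(X;Y,Z|U)+I(U;Z)\leq I(X;Y,Z)$, so the third bound is tighter than the second. What remains is the pentagon $R_\setZ\leq I(U;Z)$, $R_\setY+R_\setZ\leq I(X;Y,Z|U)+I(U;Z)$.

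The second step is to show that the union of these pentagons over all admissible PMFs equals the union of the Corollary rectangles. The inclusion $\text{rectangles}\subseteq\text{pentagons}$ is immediate by adding $R_\setY\leq I(X;Y,Z|U)$ and $R_\setZ\leq I(U;Z)$. For the reverse, I would use a time-sharing auxiliary: given a pentagon point $(R_\setY,R_\setZ)$ achieved by some PMF $p(u,x)$ with $I(U;Z)>0$, set $\theta=R_\setZ/I(U;Z)\in[0,1]$ and introduce $\tilde U$ that equals $U$ with probability $\theta$ and a fresh symbol $\ast$ with probability $1-\theta$, drawing $X$ from $p(x|u)$ on the first event and from the marginal $p(x)$ on the second. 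A short computation gives $I(\tilde U;Z)=\theta I(U;Z)=R_\setZ$ and $I(X;Y,Z|\tilde U)=\theta I(X;Y,Z|U)+(1-\theta)I(X;Y,Z)$, and combining the pentagon bound with the inequality $I(X;Y,Z|U)+I(U;Z)\leq I(X;Y,Z)$ from the first step yields $R_\setY\leq I(X;Y,Z|\tilde U)$. The degenerate case $I(U;Z)=0$ forces $R_\setZ=0$ and is handled by a deterministic $\tilde U$.

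The main obstacle is this last time-sharing step: verifying both mutual-information identities under $\tilde U$ and checking that they exactly recover the slanted edge of the pentagon. The redundancy of the $I(X;Y,Z)$ bound is a one-line consequence of the Markov structure, so it is routine. Cardinality bounds on $\tilde U$ can be trimmed afterward by a standard Carathéodory argument if needed.
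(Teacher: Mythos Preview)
Your argument is correct. The paper does not spell out a proof for the corollary, presenting it as an immediate specialization of Proposition~\ref{pr:capRegEnhBC}; your derivation fills in exactly the nontrivial step that specialization leaves open, namely that the union of pentagons
\[
\bigl\{R_\setZ\leq I(U;Z),\ R_\setY+R_\setZ\leq I(X;Y,Z|U)+I(U;Z)\bigr\}
\]
over admissible PMFs coincides with the union of rectangles \eqref{bl:capRegEnh}. Your time-sharing construction with $\tilde U$ is the standard way to do this, and the verification goes through: with $\theta=R_\setZ/I(U;Z)$ one has $I(X;Y,Z|\tilde U)=I(X;Y,Z|U)+(1-\theta)I(U;Y,Z)\geq I(X;Y,Z|U)+(1-\theta)I(U;Z)\geq R_\setY$, the last inequality being the pentagon sum-rate bound rewritten.

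A shorter route---and likely the one the paper has implicitly in mind, given that the proof of Proposition~\ref{pr:capRegEnhBC} already calls Receiver~$\setY$ the ``stronger'' receiver---is to observe that the enhanced BC is \emph{physically degraded}: Receiver~$\setY$ sees $(Y,Z)$ and Receiver~$\setZ$ sees only $Z$. The no-MSI capacity of a physically degraded BC is classically the superposition region $\{R_\setY\leq I(X;Y,Z|U),\ R_\setZ\leq I(U;Z)\}$, which is \eqref{bl:capRegEnh} verbatim. This sidesteps both the specialization of Proposition~\ref{pr:capRegEnhBC} and the pentagon-to-rectangle reduction. Your approach has the merit of being self-contained within the paper's framework; the degraded-BC route is quicker but leans on an external classical result.
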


From Proposition~\ref{pr:capRegEnhBC} and Corollary~\ref{co:enhWithoutMSI} it follows that on the enhanced BC P-MSI at Receiver~$\setY$ only cannot increase capacity:

\begin{corollary}[Enhanced BC without MSI at $\setZ$]
If Receiver~$\setZ$ has no MSI ($R_\setY^{(p)} \!= R_\setY$), then, irrespective of $R_\setZ^{(p)} \! \in [0,R_\setZ]$,
\begin{equation} \label{eq:equivalentEnh}
(R_\setY, 0, R_\setZ^{(p)} \!, R_\setZ^{(c)}) \in 
\mathscr C^{( \textnormal{enh} )}_{\textnormal {P-MSI}} \quad \Longleftrightarrow \quad (R_\setY, R_\setZ) \in \mathscr C^{( \textnormal{enh} )}.
\end{equation}
\end{corollary}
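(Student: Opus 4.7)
The plan is to prove both inclusions using the single-letter characterizations in Proposition~\ref{pr:capRegEnhBC} and Corollary~\ref{co:enhWithoutMSI}, with the reverse direction being the non-trivial one.

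For the forward direction, from $(R_\setY, R_\setZ) \in \mathscr C^{(\textnormal{enh})}$ to $(R_\setY, 0, R_\setZ^{(p)}, R_\setZ^{(c)}) \in \mathscr C^{(\textnormal{enh})}_{\textnormal{P-MSI}}$, I would take a PMF $p(u,x)$ that attains \eqref{bl:capRegEnh} for the pair $(R_\setY, R_\setZ)$ and verify \eqref{bl:capRegEnhPMSI} under the same PMF. Constraint \eqref{eq:rzEnhMSI} coincides with \eqref{eq:rzEnh}; constraint \eqref{eq:rypRzEnhMSI}, which with $R_\setY^{(p)} = R_\setY$ reads $R_\setY + R_\setZ \leq I(X;Y,Z|U) + I(U;Z)$, is obtained by summing \eqref{eq:ryEnh} and \eqref{eq:rzEnh}; and constraint \eqref{eq:ryEnhMSI} follows from $R_\setZ^{(p)} \leq R_\setZ$ combined with the inequality $I(X;Y,Z|U) + I(U;Z) \leq I(X;Y,Z)$, which is a direct consequence of the Markov chain $U - X - (Y,Z)$.

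For the reverse direction I would give an operational argument exploiting that on the enhanced BC Receiver~$\setY$'s observation $(Y^n, Z^n)$ contains Receiver~$\setZ$'s observation $Z^n$. Given a sequence of codes of vanishing error in the P-MSI setting, I construct codes for the no-MSI setting by identifying $M_\setZ$ with the pair $(M_\setZ^{(p)}, M_\setZ^{(c)})$, leaving the encoder and Receiver~$\setZ$'s decoder unchanged, and letting Receiver~$\setY$ decode in two stages: first apply Receiver~$\setZ$'s decoder to $Z^n$ to produce an estimate $\hat M_\setZ^{(c)}$ (at the error probability that Receiver~$\setZ$ incurs in the original code), and then feed $\hat M_\setZ^{(c)}$ as P-MSI into the original Receiver~$\setY$ decoder acting on $(Y^n, Z^n)$. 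A union bound gives that Receiver~$\setY$'s error probability in the new code is at most the sum of Receiver~$\setY$'s and Receiver~$\setZ$'s original error probabilities, and hence vanishes.

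The argument has no real obstacle but one delicate point worth flagging: the simulation step requires Receiver~$\setZ$ to have no MSI, for otherwise Receiver~$\setY$ would need Receiver~$\setZ$'s side-information $M_\setY^{(c)}$ in order to mimic Receiver~$\setZ$'s decoder, and this information is not available in the no-MSI setting. This is precisely the hypothesis the corollary imposes.
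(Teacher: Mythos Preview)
Your proof is correct. The paper does not give a detailed argument for this corollary; it simply states that the result follows by comparing the single-letter regions of Proposition~\ref{pr:capRegEnhBC} and Corollary~\ref{co:enhWithoutMSI}. For the $\Leftarrow$ direction your single-letter verification matches what such a comparison would yield. For the $\Rightarrow$ direction, however, a pure single-letter comparison is not entirely immediate: the constraints of Proposition~\ref{pr:capRegEnhBC} with $R_\setY^{(p)}=R_\setY$ give only $R_\setZ\le I(U;Z)$ and $R_\setY+R_\setZ\le I(X;Y,Z|U)+I(U;Z)$, and passing from this ``pentagon'' form to the ``rectangle'' form $R_\setY\le I(X;Y,Z|U)$, $R_\setZ\le I(U;Z)$ of Corollary~\ref{co:enhWithoutMSI} requires a time-sharing construction on the auxiliary $U$. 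Your operational argument sidesteps this step entirely by exploiting that Receiver~$\setY$ sees $Z^n$ and can therefore simulate Receiver~$\setZ$'s decoder---precisely the structural insight the paper itself uses in the proof of Proposition~\ref{pr:capRegEnhBC}, but here applied directly to the corollary rather than routed through single-letter expressions. Your remark that the simulation requires $R_\setY^{(c)}=0$ (no MSI at $\setZ$) is the right caveat and is exactly what the hypothesis provides.
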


\subsection{Coding Scheme with Rate-Limited Feedback} \label{sec:scheme}

Assume that the feedback is rate-limited (see Figure~\ref{fig:model} when the dashed links transport the feedback signal $W_i$).\footnote{Recall from Section~\ref{sec:chMod} that perfect feedback is equivalent to rate-limited feedback of rate $R_{\textnormal{FB}} \geq \log |\setZ|$.} We first describe our feedback scheme for the SD-BC without MSI at the stochastic receiver~$\setZ$. Later, we shall generalize it to the case with P-MSI at both receivers.

The scheme has two phases: for some fixed $\alpha \in (0,1)$, Phase~1 comprises the first $\alpha n$ channel uses $[1:\alpha n]$, and Phase~2 comprises the remaining $(1-\alpha) n$ channel uses $[\alpha n + 1 : n]$. We next describe Phase~1 and Phase~2, beginning with Phase~1.

\underline{Phase~1:} In Phase~1 the transmitter codes for the enhanced BC with P-MSI at the deterministic receiver~$\setY$ at rates 
\begin{subequations}\label{eq:condfb}
\ba \label{eq:cond13}
\bigl( \tilde R_\setY^{(p)} \!, 0, \tilde R_\setZ^{(p)} \!, \tilde R_\setZ^{(c)} \bigr) \in \mathscr C^{( \textnormal{enh} )}_{\textnormal{P-MSI}}.
\ea
At the end of the phase the stochastic receiver~$\setZ$ decodes its intended rate-$(\tilde R_\setZ^{(p)} \!, \tilde R_\setZ^{(c)})$ messages. Moreover, it compresses its Phase-1 channel-outputs and uses the rate-limited feedback-link to send the compression index that it obtains to the encoder.\footnote{Receiver~$\setZ$ uses the feedback link only between the two phases, i.e., only at Time~$\alpha n$, and for every $i \neq \alpha n$ the feedback signal $W_i$ thus takes value in a size-1 set $\setW_i$. This allows the Receiver~$\setZ$ to choose the Time-$\alpha n$ feedback $W_{\alpha n}$ from a size-$2^{n R_{\textnormal{FB}}}$ set $\setW_{\alpha n}$ while guaranteeing that the rate-limitation \eqref{eq:rateLimitation} be met. By using $B$ instances of the two phases, and by starting the transmission with sufficiently many instances of Phase~1, one could use the feedback link more evenly and (for $B \rightarrow \infty$) guarantee that each alphabet $\setW_i, \,\, i \in [1:n]$ be of size at most $2^{R_{\textnormal{FB}}}$.} The deterministic receiver~$\setY$ performs no action in Phase~1.

\underline{Phase~2:} In Phase~2 the encoder transmits the compression index, which Receiver~$\setZ$ sent over the feedback link in Phase~1, along with fresh message-information. It transmits the compression index at rate $R_\setY^{(c)}$ and the fresh message-information at rates $( R_\setY^{(p)} \!, 0, R_\setZ^{(p)} \!, R_\setZ^{(c)} )$, where
\ba \label{eq:cond23}
( R_\setY^{(p)} \!, R_\setY^{(c)} \!, R_\setZ^{(p)} \!, R_\setZ^{(c)} ) \in \mathscr C_{\textnormal {P-MSI}}.
\ea
Note that Receiver~$\setZ$ knows the rate-$R_\setY^{(c)}$ compression-index and can thus use it as P-MSI. At the end of the phase Receiver~$\setZ$ decodes its intended rate-$(R_\setZ^{(p)} \!, R_\setZ^{(c)})$ messages; and Receiver~$\setY$ performs the following actions: 
\begin{enumerate}
\item Based on its Phase-2 channel-outputs $Y^n_{\alpha n + 1}$, it decodes the rate-$R_\setY^{(p)}$ message and the rate-$R_{\setY}^{(c)}$ compression-index that were sent to it in Phase~2. 
\item Based on its estimate of the compression index and its Phase-1 channel-outputs $Y^{\alpha n}$, it decodes Receiver~$\setZ$'s Phase-1 channel-outputs $Z^{\alpha n}$. 
\item Based on its Phase-1 channel-outputs $Y^{\alpha n}$ and its estimate of Receiver~$\setZ$'s Phase-1 channel-outputs $Z^{\alpha n}$, it decodes the rate-$\tilde R_\setY^{(p)}$ message that was sent to it in Phase~1.
\end{enumerate}


By \eqref{eq:cond13} and \eqref{eq:cond23} we can guarantee that the probability of a decoding error tend to 0 as $n$ tends to infinity whenever Receiver~$\setY$ can recover Receiver~$\setZ$'s Phase-1 channel-outputs $Z^{\alpha n}$. As we argue next, we can guarantee that---with probability tending to $1$ as $n$ tends to infinity---the latter hold whenever the feedback-rate $R_{\textnormal{FB}}$ and the rate $R_\setY^{(c)}$ at which the compression index is sent in Phase~2 satisfy
\begin{IEEEeqnarray}{rCl}
\alpha H (Z|Y) & < & (1 - \alpha) R_\setY^{(c)} \label{eq:1111}\\
(1 - \alpha) R_\setY^{(c)} & \leq & R_{\textnormal{FB}}, \label{eq:2222}
\end{IEEEeqnarray}
\end{subequations}
where the conditional entropy $H (Z|Y)$ is computed w.r.t.\ some PMF $p (u,x,y,z)$ of the form \eqref{eq:PMSIPMFEnh} for which the rate-tuple $( \tilde R_\setY^{(p)} \!, 0, \tilde R_\setZ^{(p)} \!, \tilde R_\setZ^{(c)} )$ satisfies \eqref{bl:capRegEnhPMSI}. Indeed, Condition~\eqref{eq:1111} guarantees that---with probability tending to $1$ as $n$ tends to infinity---Receiver~$\setY$ can decode Receiver~$\setZ$'s Phase-1 channel-outputs $Z^{\alpha n}$ from its own Phase-1 channel-outputs and the compression index that is sent to it in Phase~2, and Condition~\eqref{eq:2222} guarantees that the compression index can be sent over the feedback link. To see this, recall that the compression index is sent in Phase~2, and that the rate $R_\setY^{(c)}$ at which it is sent is thus computed w.r.t.\ the $(1-\alpha) n$ channel uses that Phase~2 comprises. Moreover, we can guarantee that---with probability tending to $1$ as $n$ tends to infinity---Receiver~$\setY$ can decode Receiver~$\setZ$'s Phase-1 channel-outputs $Z^{\alpha n}$ based on the compression index and its own Phase-1 channel-outputs $Y^n$ whenever the rate of the compression index---computed w.r.t.\ the first $\alpha n$ channel uses---exceeds $H (Z|Y)$.
 
From the above we conclude that our feedback code achieves any rate-tuple of the form
\ba 
\alpha ( \tilde R_\setY^{(p)} \!, 0, \tilde R_\setZ^{(p)} \!, \tilde R_\setZ^{(c)} ) + (1 - \alpha ) ( R_\setY^{(p)} \!, 0, R_\setZ^{(p)} \!, R_\setZ^{(c)} )
\ea
for which Conditions~\eqref{eq:condfb} hold. A more detailed analysis of the scheme can be found in Appendix~\ref{sec:pfPrEnhMSI}.\\

We can easily adapt the above feedback code to allow P-MSI also at the stochastic receiver~$\setZ$, i.e., $R_\setY^{(p)} \! \in [0, R_\setY]$ and $R_\setZ^{(p)} \! \in [0, R_\setZ]$. To this end we modify the code as follows:
\begin{itemize}
\item In Phase~1 the encoder codes for the enhanced BC with P-MSI at \emph{both} receivers.
\item In Phase~2 the compression index that is sent to the deterministic receiver constitutes \emph{additional} P-MSI at the stochastic receiver, i.e., P-MSI beyond the one that the stochastic receiver already had in the beginning.
\end{itemize}
In Phase~1 of the adapted code the encoder thus codes at rates
\begin{subequations} \label{bl:rateConstsFBCodePMSIBoth}
\ba
( \tilde R_\setY^{(p)} \!, \tilde R_\setZ^{(c)} \!, \tilde R_\setZ^{(p)} \!, \tilde R_\setZ^{(c)} ) \in \mathscr C^{( \textnormal{enh} )}_{\textnormal{P-MSI}}; \label{eq:ratesPhase1FBCodePMSIBoth}
\ea
and in Phase~2 the encoder sends fresh message-information at rates $( R_\setY^{(p)} \!, \hat R_\setY^{(c)} \!, R_\setZ^{(p)} \!, R_\setZ^{(c)} )$, where
\ba
\hat R_\setY^{(c)} \! \leq R_\setY^{(c)} \! - \frac{\alpha}{1 - \alpha} H (Z|Y),
\ea
and
\ba
( R_\setY^{(p)} \!, R_\setY^{(c)} \!, R_\setZ^{(p)} \!, R_\setZ^{(c)} ) \in \mathscr C_{\textnormal {P-MSI}}.
\ea
\end{subequations}
The adapted feedback code achieves any rate-tuple of the form
\ba 
\alpha ( \tilde R_\setY^{(p)} \!, \tilde R_\setY^{(c)} \!, \tilde R_\setZ^{(p)} \!, \tilde R_\setZ^{(c)} ) + (1 - \alpha ) ( R_\setY^{(p)} \!, \hat R_\setY^{(c)} \!, R_\setZ^{(p)} \!, R_\setZ^{(c)} )
\ea
for which Conditions~\eqref{eq:2222} and \eqref{bl:rateConstsFBCodePMSIBoth} hold.

The following proposition summarizes which rate-tuples our feedback scheme can achieve:

\begin{proposition}[Performance of the feedback code]\label{pr:enhMSI}
Fix any PMF $p(u,x,y,z)$ of the form~\eqref{eq:PMSIPMFEnh}, and let $( \tilde R_\setY^{(p)} \!, \tilde R_\setY^{(c)} \!, \tilde R_\setZ^{(p)} \!, \tilde R_\setZ^{(c)} )$ be any rate-tuple that---for the fixed PMF $p(u,x,y,z)$---satisfies \eqref{bl:capRegEnhPMSI}. In addition, pick any rate-tuple $( R_\setY^{(p)} \!, R_\setY^{(c)} \!, R_\setZ^{(p)} \!, R_\setZ^{(c)} ) \in \mathscr C_{\textnormal {P-MSI}}$, any nonnegative number
\ba 
\alpha \leq \min \Biggl\{ \frac{R_\setY^{(c)}}{R_\setY^{(c)} \!+ H (Z|Y)}, \frac{R_{\textnormal{FB}}}{H (Z|Y)} \Biggr\}, \label{eq:alpha}
\ea
where the conditional entropy $H (Z|Y)$ is computed w.r.t.\ the fixed PMF $p(u,x,y,z)$, and any nonnegative rate
\ba
\hat R_\setY^{(c)} \!\leq R_\setY^{(c)} \!- \frac{\alpha}{1 - \alpha} H (Z|Y). \label{eq:hatRYc}
\ea
The capacity region of the SD-BC with P-MSI and rate-limited feedback of rate $R_{\textnormal{FB}}$ from the stochastic receiver~$\setZ$ contains the rate-tuple
\ba 
\alpha ( \tilde R_\setY^{(p)} \!, \tilde R_\setY^{(c)} \!, \tilde R_\setZ^{(p)} \!, \tilde R_\setZ^{(c)} ) + (1-\alpha) ( R_\setY^{(p)} \!, \hat R_\setY^{(c)} \!, R_\setZ^{(p)} \!, R_\setZ^{(c)} ). \label{eq:FBRateTuple}
\ea
\end{proposition}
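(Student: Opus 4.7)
The plan is to formalize the two-phase scheme of Section~\ref{sec:scheme} and verify that the decoding-error probability at each receiver vanishes as $n \to \infty$. Fix a small $\epsilon > 0$ and split the $n$ channel uses into Phase~1 of length $n_1 \approx \alpha n$ and Phase~2 of length $n_2 = n - n_1$, where the conditional entropy $H(Z|Y)$ appearing in \eqref{eq:alpha}--\eqref{eq:hatRYc} is evaluated with respect to the fixed PMF $p(u,x,y,z)$. The $n_2 R_\setY^{(c)}$ bits of Phase-2's $M_\setY^{(c)}$ are split into two independent subfields: a \emph{bin-index} field of rate $\frac{\alpha}{1-\alpha}(H(Z|Y)+\delta(\epsilon))$ and a \emph{fresh-information} field of rate $\hat R_\setY^{(c)}$; this split is admissible for all $n$ large enough by \eqref{eq:hatRYc}.

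I would then instantiate three building blocks and compose them. Block~(i): Phase-1 codes for the enhanced BC with P-MSI achieving $(\tilde R_\setY^{(p)},\tilde R_\setY^{(c)},\tilde R_\setZ^{(p)},\tilde R_\setZ^{(c)})$ under the fixed PMF, which exist by Proposition~\ref{pr:capRegEnhBC}. Block~(ii): a Slepian--Wolf binning of the $\epsilon$-typical $Z^{n_1}$-sequences into $2^{n_1(H(Z|Y)+\delta(\epsilon))}$ bins, so that a decoder observing $Y^{n_1}$ and the bin index recovers $Z^{n_1}$ with error probability $o(1)$. Block~(iii): Phase-2 codes for the SD-BC with P-MSI achieving $(R_\setY^{(p)},R_\setY^{(c)},R_\setZ^{(p)},R_\setZ^{(c)})$, which exist by Theorem~\ref{th:partMessSI}. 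In Phase~1 the encoder runs~(i); Receiver~$\setZ$ decodes its Phase-1 messages and feeds back the Slepian--Wolf bin index of its observed $Z^{n_1}$ at Time~$n_1$, which fits the feedback budget~\eqref{eq:rateLimitation} by~\eqref{eq:alpha} (as noted in the footnote of Section~\ref{sec:scheme}). In Phase~2 the encoder runs~(iii), embedding the received bin index into the bin-index subfield of $M_\setY^{(c)}$. Receiver~$\setZ$ then decodes Phase~2 using $M_\setY^{(c)}$ as P-MSI: it knows the fresh subfield by hypothesis and the bin-index subfield because it generated it. Receiver~$\setY$ decodes in three stages---first $(M_\setY^{(p)},M_\setY^{(c)})$ of Phase~2 from $Y_{n_1+1}^n$ and its P-MSI $M_\setZ^{(c)}$ via~(iii); then $Z^{n_1}$ from $Y^{n_1}$ and the extracted bin index via~(ii); and finally $(M_\setY^{(p)},M_\setY^{(c)})$ of Phase~1 from $(Y^{n_1},Z^{n_1})$ treated as the enhanced-BC outputs via~(i).

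A union bound over the finitely many decoding-error events contributed by~(i),~(ii), and~(iii), together with the vanishing error probabilities guaranteed by each building block, shows that the overall probability of error goes to zero; summing the per-phase contributions gives the per-channel-use rate-tuple \eqref{eq:FBRateTuple}. The main delicate point---rather than a genuine obstacle---is the bookkeeping that guarantees the Slepian--Wolf bin index of Phase~1 fits both into the Phase-2 $M_\setY^{(c)}$-field (the content of \eqref{eq:hatRYc}) and into the feedback link (the content of \eqref{eq:alpha}); this is precisely where the two constants in the statement originate. Letting $\epsilon \to 0$ absorbs the $\delta(\epsilon)$ slack and completes the argument.
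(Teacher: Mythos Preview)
Your proposal is correct and follows essentially the same two-phase scheme the paper analyzes in Appendix~\ref{sec:pfPrEnhMSI}: code for the enhanced BC in Phase~1, Slepian--Wolf compress $Z^{n_1}$ and feed back the bin index, embed it in the Phase-2 $M_\setY^{(c)}$-field so Receiver~$\setZ$ has it as P-MSI, and let Receiver~$\setY$ decode Phase~2, then $Z^{n_1}$, then Phase~1. The paper's write-up differs only cosmetically---it organizes the error analysis via the conditional chain $\distof{E_2}+\distof{E_0\mid E_2^c}+\distof{E_1\mid E_0^c}$ rather than a plain union bound---but the substance is identical to what you describe.
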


\begin{proof}
See Appendix~\ref{sec:pfPrEnhMSI}.
\end{proof}

Our feedback code can be generalized along the following guidelines:

\begin{itemize}
\item To obtain a feedback code for the general BC, in Phase~2 one can replace the capacity-achieving code for the SD-BC with P-MSI by a ``good'' code for the general BC with P-MSI, e.g., by the code of \cite{asadiongjohnson15}.

\item Instead of recovering $Z^{\alpha n}$ losslessly, after Phase~2 Receiver~$\setY$ could recover a lossy version of these outputs.\footnote{In particular, this generalization would allow us to extend the feedback code to continuous output alphabets $\setZ$.} To allow for this generalization, the enhanced BC would have to be adapted so that Receiver~$\setY$ does not observe Receiver~$\setZ$'s output but only a lossy version of it. In general, the capacity of such an enhanced BC (with P-MSI) is unknown; and hence the encoder would have to use a ``good'' rather than a capacity-achieving code for it, e.g., the code of \cite{asadiongjohnson15} for the general BC with P-MSI.

\item Instead of partitioning the transmission into two phases, one could use a block-Markov framework as in \cite{wuwigger14}.
\end{itemize}

In the absence of MSI, extending our feedback code along the above guidelines results in the feedback code of \cite{wuwigger14}, which---in the absence of MSI---is thus more general than our code.\footnote{In particular, in the absence of MSI the rate region that is achievable with our feedback code is contained in the rate region that is achievable with the feedback code of \cite{wuwigger14}.}

%

Using Proposition~\ref{pr:enhMSI}, we next identify sufficient conditions for feedback to increase the capacity of the SD-BC. For simplicity we assume that the stochastic receiver~$\setZ$ has no MSI, i.e., that $R_\setY^{(p)} \!= 0$. Proposition~\ref{pr:partSIRecYInc} treats the case where Receiver~$\setY$ has P-MSI ($R_\setZ^{(p)} \!< R_\setZ$), and Proposition~\ref{pr:noSIFBInc} treats the case without MSI ($R_\setZ^{(p)} \! = R_\setY$).

\begin{proposition}[Sufficient conditions with P-MSI at $\setY$]\label{pr:partSIRecYInc}
Consider an SD-BC with P-MSI only at the deterministic receiver~$\ry$ ($R_\setY^{(p)} \!= R_\setY$ and $R_\setZ^{(p)} \!\in [0, R_\setZ)$). If there exists a rate-triple $( R_\setY^{(p)} \!, R_\setZ^{(p)} \!,R_\setZ^{(c)} )$ satisfying
\begin{equation}\label{eq:cond_boundary}
( R_\setY^{(p)} \!, R_\setZ^{(p)} \!+ R_\setZ^{(c)} ) \in \bigl(\partial \mathscr C \cap ( \mathscr C_{\textnormal {enh}} \setminus \partial \mathscr C_{\textnormal {enh}} )\bigr),
\end{equation}
and if, for some PMF $p(u,x,y,z)$ of the form \eqref{eq:noMessSIPMF}, Conditions~\eqref{bl:capRegNoMessSI} and 
\ba
R_\setY < H (Y) \quad \text{and} \quad 0 < I (U;Y) \label{eq:thPartSIRecY}
\ea
hold, then, irrespective of $R_{\textnormal{FB}} > 0$, $( R_\setY^{(p)} \!, R_\setZ^{(p)} \!, R_\setZ^{(c)} )$ is in the interior of the feedback capacity region, and feedback thus increases the capacity region.
\end{proposition}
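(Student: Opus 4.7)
The plan is to invoke Proposition~\ref{pr:enhMSI} with carefully chosen Phase~1 and Phase~2 parameters, so that the resulting achievable feedback rate-tuple strictly dominates $(R_\setY^{(p)},0,R_\setZ^{(p)},R_\setZ^{(c)})$ in every coordinate; this places the target strictly inside the feedback capacity region.

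For the Phase~1 ingredient, Condition~\eqref{eq:cond_boundary} implies that there exists a PMF $p^{\textnormal{enh}}(u,x,y,z)$ of the form~\eqref{eq:PMSIPMFEnh} under which $(R_\setY^{(p)},R_\setZ^{(p)}+R_\setZ^{(c)})$ lies strictly inside the no-MSI enhanced region $\mathscr{C}^{(\textnormal{enh})}$ of Corollary~\ref{co:enhWithoutMSI}. Using Proposition~\ref{pr:capRegEnhBC} with this PMF and $\tilde R_\setY^{(c)}=0$, I would pick $(\tilde R_\setY^{(p)},0,\tilde R_\setZ^{(p)},\tilde R_\setZ^{(c)})\in\mathscr{C}^{(\textnormal{enh})}_{\textnormal{P-MSI}}$ with $\tilde R_\setY^{(p)}>R_\setY^{(p)}$, $\tilde R_\setZ^{(p)}>R_\setZ^{(p)}$, and $\tilde R_\setZ^{(c)}>R_\setZ^{(c)}$, splitting the strictly positive $\setZ$-rate slack between the private and common components. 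The conditional entropy $H(Z|Y)$ that enters~\eqref{eq:alpha} and~\eqref{eq:hatRYc} is then computed under this $p^{\textnormal{enh}}$, and it is strictly positive precisely because the target sits on $\partial\mathscr{C}$ but is strictly interior to $\mathscr{C}^{(\textnormal{enh})}$.

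For the Phase~2 ingredient, I would establish that there exists $R_\setY^{(c)}>0$ with $(R_\setY^{(p)},R_\setY^{(c)},R_\setZ^{(p)},R_\setZ^{(c)})\in\mathscr{C}_{\textnormal{P-MSI}}$. Take the PMF $p^\star(u,x,y,z)$ given by the hypothesis, lift it to~\eqref{eq:partMessSIPMF} by setting $V=U$, and examine the constraints of~\eqref{bl:capRegPartMessSI} that depend on $R_\setY^{(c)}$: they reduce to $R_\setY^{(c)}\leq H(Y)-R_\setY^{(p)}$ (strictly positive by the hypothesis $R_\setY<H(Y)$ in~\eqref{eq:thPartSIRecY}), $R_\setY^{(c)}\leq H(Y)+I(U;Z)-R_\setY^{(p)}-R_\setZ\geq H(Y)-R_\setY^{(p)}>0$ (using~\eqref{eq:rzCapRegNoMessSI}), and $R_\setY^{(c)}\leq H(Y)-R_\setY^{(p)}-R_\setZ^{(p)}$, which by~\eqref{eq:srCapRegNoMessSI} is at least $I(U;Y)-I(U;Z)+R_\setZ^{(c)}$. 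In the regime $I(U;Y)+R_\setZ^{(c)}>I(U;Z)$, the hypothesis $I(U;Y)>0$ combined with $R_\setZ^{(c)}>0$ (forced by $R_\setZ^{(p)}<R_\setZ$) immediately yields a feasible $R_\setY^{(c)}>0$; in the remaining regime one exploits the freedom to choose a function $V$ of $U$ more correlated with $Y$ than with $Z$, or equivalently to time-share $p^\star$ with an auxiliary PMF providing such flexibility, both of which are enabled by $I(U;Y)>0$. This Phase~2 feasibility is the main obstacle of the proof, because a pure continuity argument fails: the target lies on $\partial\mathscr{C}$, so positive slack in the MSI-at-$\setZ$ direction is not automatic and must be coaxed out of the strict hypotheses.

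With both rate-tuples in hand, choose $\alpha\in(0,1)$ small enough to satisfy~\eqref{eq:alpha} (possible for any $R_{\textnormal{FB}}>0$ since $R_\setY^{(c)}>0$ and $H(Z|Y)<\infty$) and set $\hat R_\setY^{(c)}=0$, which then satisfies~\eqref{eq:hatRYc}. By Proposition~\ref{pr:enhMSI} the feedback capacity region contains
\[
\alpha(\tilde R_\setY^{(p)},0,\tilde R_\setZ^{(p)},\tilde R_\setZ^{(c)})+(1-\alpha)(R_\setY^{(p)},0,R_\setZ^{(p)},R_\setZ^{(c)}),
\]
which by the strict Phase~1 inequalities strictly dominates the target $(R_\setY^{(p)},0,R_\setZ^{(p)},R_\setZ^{(c)})$ in each of the three non-trivial coordinates. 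Hence the target lies in the interior of the feedback capacity region, proving that feedback strictly enlarges it.
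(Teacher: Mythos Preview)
Your overall architecture matches the paper's proof exactly: pick strictly dominating Phase~1 rates from the interior of $\mathscr C_{\textnormal{enh}}$, establish a Phase~2 tuple with some $R_\setY^{(c)}>0$ in $\mathscr C_{\textnormal{P-MSI}}$, then apply Proposition~\ref{pr:enhMSI} with $\hat R_\setY^{(c)}=0$ and small $\alpha$. The paper isolates the Phase~2 step as a separate lemma (Lemma~\ref{le:part1}).

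The gap is in your Phase~2 argument when $I(U;Y)+R_\setZ^{(c)}\le I(U;Z)$. Your suggestion to ``choose a function $V$ of $U$ more correlated with $Y$ than with $Z$'' asks for $I(V;Y)>I(V;Z)$, which is stronger than needed and \emph{not} guaranteed by $I(U;Y)>0$ alone (if $I(U;Z)\gg I(U;Y)$, every nontrivial coarsening of $U$ may still satisfy $I(V;Z)\ge I(V;Y)$). The ``time-share $p^\star$ with an auxiliary PMF'' alternative is too vague to assess. What is actually required (see \eqref{bl:pfleConds}--\eqref{eq:lePartSIRecYSuff} in the paper) is only $I(V;Z)-I(V;Y)<R_\setZ^{(c)}$ together with $I(V;Y)>0$. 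The paper achieves this cleanly by taking $V$ to be an $\varepsilon$-erasure of $U$: let $S\sim\ber(1-\varepsilon)$ be independent of $(U,X,Y,Z)$ and set $V=U$ if $S=0$, $V=?$ otherwise. Then $I(V;Y)=\varepsilon I(U;Y)>0$ and $I(V;Z)-I(V;Y)=\varepsilon\bigl(I(U;Z)-I(U;Y)\bigr)$, which is below the fixed positive $R_\setZ^{(c)}$ for all sufficiently small $\varepsilon$. This is precisely where the hypotheses $I(U;Y)>0$ and $R_\setZ^{(c)}>0$ are both used, and it closes the case you left open.
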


\begin{proof}
See Appendix~\ref{sec:pfSufficientConditions}.
\end{proof}

\begin{proposition}[Sufficient conditions without MSI]\label{pr:noSIFBInc}
Consider an SD-BC without MSI ($R_\setY^{(p)} \!= R_\setY$ and $R_\setZ^{(p)} \!= R_\setZ$). If there exists a rate-pair $(R_\setY, R_\setZ)$ satisfying
\ba\label{eq:cond_boundaryNoMSI}
( R_\setY, R_\setZ ) \in \bigl(\partial \mathscr C \cap ( \mathscr C_{\textnormal {enh}} \setminus \partial \mathscr C_{\textnormal {enh}} )\bigr),
\ea
and if, for some PMF $p(v,u,x,y,z)$ of the form \eqref{eq:partMessSIPMF}, Conditions~\eqref{bl:capRegNoMessSI} and
\ba 
R_\setY < H (Y) \quad \text{and} \quad 0 < I (V;Y) - I (V;Z) \label{eq:thNoSI}
\ea
hold, then, irrespective of $R_{\textnormal{FB}} > 0$, $( R_\setY, R_\setZ )$ is in the interior of the feedback capacity region, and feedback thus increases the capacity region.
\end{proposition}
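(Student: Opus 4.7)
The strategy is to invoke Proposition~\ref{pr:enhMSI} with a Phase~1 code that operates on the enhanced BC at a rate pair strictly dominating $(R_\setY, R_\setZ)$, and a Phase~2 code that reproduces $(R_\setY, R_\setZ)$ as fresh message-information while using a small amount of P-MSI at Receiver~$\setZ$ to carry the Phase~1 compression index.

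Since $(R_\setY, R_\setZ) \in \mathscr{C}_{\textnormal{enh}} \setminus \partial\mathscr{C}_{\textnormal{enh}}$, there exist $\delta > 0$ and a PMF $p_1(u,x,y,z)$ of the form~\eqref{eq:PMSIPMFEnh} under which the tuple $(\tilde R_\setY^{(p)}, \tilde R_\setY^{(c)}, \tilde R_\setZ^{(p)}, \tilde R_\setZ^{(c)}):=(R_\setY + \delta, 0, R_\setZ + \delta, 0)$ satisfies~\eqref{bl:capRegEnhPMSI}; write $h:=H(Z|Y)$ under $p_1$. For Phase~2 I use the hypothesized PMF $p(v,u,x,y,z)$ together with the tuple $(R_\setY^{(p)}, R_\setY^{(c)}, R_\setZ^{(p)}, R_\setZ^{(c)}) := (R_\setY, \epsilon, R_\setZ, 0)$, for some small $\epsilon > 0$. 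By the Markov chain $V-U-(Y,Z)$ implied by~\eqref{eq:partMessSIPMF}, $I(U;Z|V)=I(U;Z)-I(V;Z)$, so the five inequalities of Theorem~\ref{th:partMessSI} evaluated at this tuple read
\begin{IEEEeqnarray*}{rCl}
R_\setY + \epsilon &\leq& H(Y), \\
R_\setZ &\leq& I(U;Z), \\
R_\setY + R_\setZ + \epsilon &\leq& \bigl(I(V;Y) - I(V;Z)\bigr) + H(Y|U) + I(U;Z), \\
R_\setY + R_\setZ &\leq& H(Y|U) + I(U;Z), \\
R_\setY + R_\setZ + \epsilon &\leq& I(V;Y) + H(Y|U) + I(U;Z).
\end{IEEEeqnarray*}
The second and fourth bounds are~\eqref{eq:rzCapRegNoMessSI} and~\eqref{eq:srCapRegNoMessSI} of the hypothesis and hold for every $\epsilon \geq 0$; the first, third, and fifth bounds have strictly positive right-minus-left slack at $\epsilon = 0$ thanks to hypothesis~\eqref{eq:thNoSI} (together with~\eqref{eq:srCapRegNoMessSI} for the third and fifth). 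Hence $(R_\setY, \epsilon, R_\setZ, 0) \in \mathscr{C}_{\textnormal{P-MSI}}$ for every sufficiently small $\epsilon > 0$.

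Setting $\hat R_\setY^{(c)} := 0$ and choosing any $\alpha > 0$ with $\alpha \leq \min\{\epsilon/(\epsilon+h),\, R_{\textnormal{FB}}/h\}$ (which exists because $\epsilon, R_{\textnormal{FB}} > 0$; if $h = 0$ the constraint~\eqref{eq:alpha} is vacuous and any $\alpha\in(0,1)$ works), the conditions~\eqref{eq:alpha} and~\eqref{eq:hatRYc} are satisfied, and Proposition~\ref{pr:enhMSI} certifies that
\begin{equation*}
\alpha(R_\setY + \delta, 0, R_\setZ + \delta, 0) + (1-\alpha)(R_\setY, 0, R_\setZ, 0) = (R_\setY + \alpha\delta,\,0,\,R_\setZ + \alpha\delta,\,0)
\end{equation*}
lies in the feedback capacity region. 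Because any rate-tuple dominated componentwise by an achievable one is achievable, shrinking $\alpha$ further so that $\alpha\delta < \min\{R_\setY,R_\setZ\}$ (note $R_\setY,R_\setZ>0$ since $(R_\setY,R_\setZ)\notin\partial\mathscr{C}_{\textnormal{enh}}$) yields an open box around $(R_\setY,R_\setZ)$ contained in the feedback capacity region, placing $(R_\setY,R_\setZ)$ in its interior.

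The chief obstacle is verifying that the Phase~2 tuple lies in $\mathscr{C}_{\textnormal{P-MSI}}$: Phase~2 must sustain $(R_\setY, R_\setZ)$ as fresh message-information and still carry a positive-rate compression index. The key fact is that the P-MSI sum-rate bound~\eqref{eq:ryRzp} exceeds the no-MSI sum-rate bound~\eqref{eq:srCapRegNoMessSI}---which $(R_\setY,R_\setZ)$ saturates---by exactly $I(V;Y)-I(V;Z)$, and hypothesis~\eqref{eq:thNoSI} makes this gap strictly positive, providing exactly the slack into which the compression index fits.
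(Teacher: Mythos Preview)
Your proof is correct and follows essentially the same route as the paper's. The paper factors the verification that $(R_\setY,\epsilon,R_\setZ,0)\in\mathscr C_{\textnormal{P-MSI}}$ for some $\epsilon>0$ into a separate lemma (Lemma~\ref{le:part1}, second part), whose proof is precisely your computation of the five Theorem~\ref{th:partMessSI} constraints and the observation that the slack in~\eqref{eq:ryRzp} over~\eqref{eq:srCapRegNoMessSI} equals $I(V;Y)-I(V;Z)$; it then invokes Proposition~\ref{pr:enhMSI} exactly as you do.
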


\begin{proof}
See Appendix~\ref{sec:pfSufficientConditions}.
\end{proof}

Proposition~\ref{pr:noSIFBInc} is used in the analysis of Example~\ref{ex:addEras} ahead (see Appendix~\ref{sec:proof}), which proves that feedback can increase the capacity of the SD-BC without MSI. In this analysis one chooses $U = V$, which turns Marton coding into the simpler superposition coding with no satellite for Receiver~$\setZ$. There are other SD-BCs for which one has to apply Proposition~\ref{pr:noSIFBInc} with full Marton coding, i.e., with $U \neq V$, in order to show that feedback increases their capacity (see Remark~\ref{re:propNoSIFBIncVNeeded} in Appendix~\ref{sec:proof}).

It is perhaps surprising that Proposition~\ref{pr:noSIFBInc} can be used to show that feedback can increase the capacity of the SD-BC without MSI: The boundary points of the no-feedback capacity region of the SD-BC without MSI are known to be achievable using only satellites but no cloud-center, i.e., with $V = \emptyset$. This notwithstanding, Proposition~\ref{pr:noSIFBInc} relies on the assumption that one can achieve boundary points using full Marton coding with a cloud-center, i.e., with $V \neq \emptyset$ (see \eqref{eq:thNoSI}).

\subsection{How Feedback Affects Capacity}

We use the sufficient conditions of 
Propositions~\ref{pr:partSIRecYInc} and \ref{pr:noSIFBInc} to show that feedback can increase the capacity of the SD-BC without P-MSI and the sum-rate capacity of the SD-BC with P-MSI at the deterministic receiver~$\setY$. (The assumption that the stochastic receiver~$\setZ$ has no MSI is made for simplicity.) Key to the proof are the following two observations:

\begin{enumerate}
\item P-MSI to the stochastic receiver can increase the capacity of the SD-BC.
\item The capacity of the enhanced BC (with P-MSI) is typically larger than that of the SD-BC (with P-MSI).
\end{enumerate}

\begin{theorem}[Feedback can help] \label{th:fbCanIncCapSDBC}
If the stochastic receiver~$\setZ$ has no MSI ($R_{\setY}^{(p)} \!= R_\setY$), then, irrespective of whether or not the deterministic receiver~$\setY$ has P-MSI ($R_{\setZ}^{(p)} \!\in [0,R_\setZ]$), rate-limited feedback of any positive rate can increase the capacity region of the SD-BC. In particular, it can increase the sum-rate capacity if Receiver~$\setY$ has P-MSI ($R_\setZ^{(p)} \!< R_\setZ$).
\end{theorem}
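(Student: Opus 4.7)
The plan is to derive the theorem from the two sufficient conditions established in Propositions~\ref{pr:partSIRecYInc} and \ref{pr:noSIFBInc}; both reduce the task to exhibiting, on a single concrete SD-BC, a rate-pair on the boundary of the no-feedback region $\mathscr C$ that is strictly interior to the enhanced-BC region $\mathscr C^{(\textnormal{enh})}$, together with an auxiliary Marton-type PMF meeting $R_\setY<H(Y)$ and $I(V;Y)-I(V;Z)>0$. Example~\ref{ex:addEras} ahead, with its analysis in Appendix~\ref{sec:proof}, is to provide precisely such a channel and rate-pair, and the proof would simply feed that data into the two propositions.

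For the case without MSI ($R_\setZ^{(p)}=R_\setZ$), I would invoke Proposition~\ref{pr:noSIFBInc} at the rate-pair $(R_\setY,R_\setZ)$ supplied by Example~\ref{ex:addEras}, with the accompanying PMF of the form~\eqref{eq:partMessSIPMF} chosen to have $V=U$ as suggested in the paragraph after Proposition~\ref{pr:noSIFBInc}. Since, by the example, $(R_\setY,R_\setZ)\in \partial\mathscr C\cap (\mathscr C^{(\textnormal{enh})}\setminus \partial\mathscr C^{(\textnormal{enh})})$ and both conditions in~\eqref{eq:thNoSI} are met, the proposition places $(R_\setY,R_\setZ)$ in the interior of the feedback capacity region for every $R_{\textnormal{FB}}>0$, so that feedback strictly enlarges $\mathscr C$.

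For the case with P-MSI at the deterministic receiver only ($R_\setZ^{(p)}<R_\setZ$), I would use the equivalence~\eqref{eq:equivalent} in Remark~\ref{re:partMessSIRecy} to transfer the same boundary pair $(R_\setY,R_\setZ)$ to a triple $(R_\setY,R_\setZ^{(p)},R_\setZ^{(c)})$ with $R_\setZ^{(p)}+R_\setZ^{(c)}=R_\setZ$ lying on the boundary of the no-feedback P-MSI capacity region; condition~\eqref{eq:cond_boundary} of Proposition~\ref{pr:partSIRecYInc} then holds. The PMF $p(u,x,y,z)$ obtained by identifying $U$ with $V$ in the previous paragraph verifies the additional requirements in~\eqref{eq:thPartSIRecY}, because $I(U;Y)\geq I(V;Y)-I(V;Z)>0$. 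Proposition~\ref{pr:partSIRecYInc} now places this triple in the interior of the feedback capacity region. For the sum-rate part, I would arrange that the rate-pair $(R_\setY,R_\setZ)$ furnished by Example~\ref{ex:addEras} has $R_\setY+R_\setZ$ equal to the sum-rate capacity of $\mathscr C$; by~\eqref{eq:equivalent} this sum then coincides with the no-feedback sum-rate capacity with P-MSI at $\setY$ only, and a small simultaneous perturbation of the feedback-interior triple $(R_\setY,R_\setZ^{(p)},R_\setZ^{(c)})$ along the all-positive direction remains feedback-achievable, strictly exceeding that no-feedback sum-rate capacity.

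The main obstacle in this plan is the explicit verification, for the single SD-BC of Example~\ref{ex:addEras}, that a sum-rate-maximizing point of $\partial \mathscr C$ lies strictly inside $\mathscr C^{(\textnormal{enh})}$ and is attained by a Marton PMF meeting both $R_\setY<H(Y)$ and $I(V;Y)-I(V;Z)>0$. Once that computation is in hand---and it is what the appendix containing Example~\ref{ex:addEras} is designed to carry out---the two invocations of Propositions~\ref{pr:partSIRecYInc} and \ref{pr:noSIFBInc} assemble into the statement of the theorem with no further work.
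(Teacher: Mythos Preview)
Your overall strategy---feed Example~\ref{ex:addEras} into Propositions~\ref{pr:partSIRecYInc} and \ref{pr:noSIFBInc}---matches the paper. The gap is in your plan to reuse a \emph{single} sum-rate-maximizing boundary pair $(R_\setY,R_\setZ)$ for both the no-MSI and the P-MSI-at-$\setY$ parts, verifying \eqref{eq:thNoSI} with $V=U$ and then deducing $I(U;Y)>0$ from the chain $I(U;Y)\geq I(V;Y)-I(V;Z)>0$. In Example~\ref{ex:addEras} this cannot work: the analysis in Appendix~\ref{sec:proof} shows that \emph{every} PMF attaining the sum-rate capacity satisfies $0<I(U;Y)<I(U;Z)$ (see \eqref{bl:condMaxSumRExFBInc}), so at the sum-rate-optimal point one has $I(V;Y)-I(V;Z)<0$ for $V=U$, and condition~\eqref{eq:thNoSI} of Proposition~\ref{pr:noSIFBInc} fails. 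Consequently your justification of \eqref{eq:thPartSIRecY} via that chain also collapses (though $I(U;Y)>0$ does happen to hold there for the independent reason just cited).

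The paper resolves this by using \emph{different} boundary points for the two cases. For the P-MSI-at-$\setY$ sum-rate claim it uses the sum-rate-maximizing point $R_\setZ=R_\setZ^\ast$ and invokes Proposition~\ref{pr:partSIRecYInc}, where only $I(U;Y)>0$ is needed. For the no-MSI claim it moves to a \emph{non}-sum-rate-maximizing point with $R_\setZ\in(R_\setZ^\ast,\bar p)$, shows that for $p>1/2$ one can arrange $I(U;Y)>I(U;Z)$ there (so \eqref{eq:thNoSI} holds with $V=U$), and invokes Proposition~\ref{pr:noSIFBInc}. Note also that the theorem does \emph{not} assert a sum-rate increase in the no-MSI case---only a capacity-region increase---so there is no need, and indeed no way in this example, to place the no-MSI witness at a sum-rate-optimal point.
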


The theorem follows from the following example, which we analyze in Appendix~\ref{sec:proof}:

\begin{example}[Feedback helps]\label{ex:addEras}
Consider the SD-BC whose input is $X = (X_1, X_2)$, where $X_1$ and $X_2$ are binary, and whose outputs are $Y = X_1 + X_2$ and
\bas
Z = \begin{cases} X_2 &S = 0, \\ ? &S = 1, \end{cases}
\eas
where $S \sim \ber (p), \,\, p \in (0,1)$ is independent of $X$. If the deterministic receiver~$\setY$ has P-MSI ($R_\setZ^{(p)} \!< R_\setZ$) and the stochastic receiver~$\setZ$ has no MSI ($R_\setY^{(p)} \!= R_\setY$), then rate-limited feedback of any positive rate increases the sum-rate capacity. If no receiver has MSI ($R_\setZ^{(p)} \!= R_\setZ$ and $R_\setY^{(p)} \!= R_\setY)$ and $p > 1/2$, then such feedback increases the capacity region.
\end{example}

As our next result shows, the setting where the stochastic receiver~$\setZ$ has F-MSI is different:

\begin{theorem}[With F-MSI at $\setZ$, feedback is useless]\label{th:fullMessSIRecZ}
If the stochastic receiver~$\setZ$ has F-MSI ($R_\setY^{(p)} \!= 0$), then, irrespective of whether or not the deterministic receiver~$\setY$ has P-MSI ($R_\setZ^{(p)} \!\in [0,R_\setZ]$), even perfect feedback cannot increase the capacity region of the SD-BC. 
\end{theorem}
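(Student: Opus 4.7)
I would show that the feedback capacity region coincides with the no-feedback capacity region of Corollary~\ref{co:fullMessSIRecZCommMess}. Achievability is immediate: the encoder can ignore the feedback, so every rate-tuple achievable without feedback remains achievable with perfect feedback. The content is thus the converse, i.e., establishing in the feedback setting each of the three single-letter bounds \eqref{eq:fullMessSIRecZRyCommMess}--\eqref{eq:fullMessSIRecZRyRzpCommMess} for some input PMF $p(x)$. A pleasant feature of Corollary~\ref{co:fullMessSIRecZCommMess} is that none of its constraints involves an auxiliary random variable; only the channel input $X$ appears. This is exactly what makes the feedback converse tractable, since introducing feedback-compatible auxiliaries is typically where such converses run into difficulty.

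For \eqref{eq:fullMessSIRecZRyCommMess}, Fano applied to $\setY$'s decoder and the independence of $M_\setZ^{(c)}$ from $(M,M_\setY)$ yield $n(R+R_\setY) \leq I(M,M_\setY;Y^n \mid M_\setZ^{(c)}) + n\epsilon_n \leq H(Y^n) \leq \sum_i H(Y_i)$ by subadditivity. For \eqref{eq:fullMessSIRecZRzCommMess}, Fano at $\setZ$, which has F-MSI and hence knows $M_\setY$, gives $n(R+R_\setZ) \leq \sum_i I(M,M_\setZ; Z_i \mid Z^{i-1},M_\setY) + n\epsilon_n$. The key step is that even under feedback $X_i$ is a deterministic function of $(M,M_\setY,M_\setZ,Z^{i-1})$, and the channel is memoryless, so $Z_i - X_i - (Z^{i-1},M,M_\setY,M_\setZ)$ forms a Markov chain; this collapses $H(Z_i \mid Z^{i-1},M,M_\setY,M_\setZ)$ to $H(Z_i\mid X_i)$ and, together with $H(Z_i\mid Z^{i-1},M_\setY)\leq H(Z_i)$, produces the per-letter bound $I(X_i;Z_i)$.

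The mixed bound \eqref{eq:fullMessSIRecZRyRzpCommMess} is the step I expect to be the main obstacle. Its derivation hinges on the observation that F-MSI at $\setZ$ lets us chain the two decoding requirements: given $(Y^n,Z^n,M_\setZ^{(c)})$, run $\setY$'s decoder on $(Y^n,M_\setZ^{(c)})$ to recover $(M,M_\setY)$, then feed the recovered $M_\setY$ into $\setZ$'s decoder to extract $M_\setZ^{(p)}$ from $(Z^n,M_\setY)$. Hence $(M,M_\setY,M_\setZ^{(p)})$ is reliably decodable from $(Y^n,Z^n,M_\setZ^{(c)})$, and Fano together with the independence of the message components gives $n(R+R_\setY+R_\setZ^{(p)}) \leq \sum_i I(M,M_\setY,M_\setZ^{(p)}; Y_i,Z_i \mid M_\setZ^{(c)},Y^{i-1},Z^{i-1}) + n\epsilon_n$. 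Because $X_i$ is determined by $(M,M_\setY,M_\setZ,Z^{i-1})$, $Y_i=f(X_i)$, and the channel is memoryless, the inner conditional entropy collapses to $H(Y_i,Z_i\mid X_i)$, yielding $I(X_i;Y_i,Z_i)$. I would finally introduce a time-sharing variable $Q\sim\unif[1:n]$ independent of all else and, invoking concavity of $H(Y)$, $I(X;Z)$, and $I(X;Y,Z)$ in $p_X$ (for the fixed channel $W$ and the fixed map $Y=f(X)$), replace each averaged bound by the corresponding single-letter quantity under the induced input PMF $p_{X_Q}$, recovering \eqref{bl:capRegFullMessSIRecZCommMess} and completing the converse.
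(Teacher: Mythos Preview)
Your proposal is correct and follows essentially the same route as the paper's proof: both establish the converse by verifying the three constraints of Corollary~\ref{co:fullMessSIRecZCommMess} directly, with the key observation for \eqref{eq:fullMessSIRecZRyRzpCommMess} being that F-MSI at~$\setZ$ lets one chain the two decoders so that $(M,M_\setY,M_\setZ^{(p)})$ is recoverable from $(Y^n,Z^n,M_\setZ^{(c)})$, after which the feedback-compatible Markov chain $(Y_i,Z_i) - X_i - (\text{past},\text{messages})$ collapses the per-letter term to $I(X_i;Y_i,Z_i)$. The paper states the proof only for $R=0$ (noting the extension is straightforward) and handles the single-letterization by defining $X=X_Q$ and using that conditioning reduces entropy, which is equivalent to your concavity argument.
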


\begin{proof}
See Appendix~\ref{sec:pfFullMessSIRecZ}.
\end{proof}

From Theorems~\ref{th:fbCanIncCapSDBC} and \ref{th:fullMessSIRecZ} it follows that whether feedback can increase the capacity of the SD-BC with P-MSI depends on the P-MSI. Of course, it also depends on the SD-BC. The following is an example of an SD-BC---which does not consists of two noninterfering single-user channels and is neither deterministic nor physically-degraded---whose capacity without MSI at the stochastic receiver~$\setZ$ cannot be increased by feedback, irrespective of whether or not the deterministic receiver~$\setY$ has P-MSI. The example is analyzed in Appendix~\ref{sec:pfFbNeedNotHelp}.

\begin{example}[Also without MSI at $\setZ$, feedback can be useless]\label{ex:fbNeedNotHelp}
Consider the SD-BC with input $X \in \setX$ and outputs
\ba
Y = f (X) \qquad \text{and} \qquad Z = \begin{cases} X &\text{if } S = 0, \\ ? &\text{if } S = 1, \end{cases} \label{eq:ZExFBUseless}
\ea
where $S \sim \ber (p)$ is independent of $X$, where $p \in (0,1)$, and where $? \notin \setX$. Let $\bar p = 1 - p$ and $$\setX_y = \bigl\{ x \in \setX \colon f (x) = y \bigr\}, \quad y \in \setY.$$ For the case where the stochastic receiver~$\setZ$ has no MSI ($R_\setY^{(p)} \!= R_\setY$), we show the following statement in Appendix~\ref{sec:pfFbNeedNotHelp}. Irrespective of $R_\setZ^{(p)} \!\in [0,R_\setZ]$, the capacity region is the set of rate-tuples $(R_\setY, 0, R_\setZ^{(p)} \!, R_\setZ^{(c)})$ satisfying
\begin{subequations}\label{bl:FBDoesNotHelp}
\ba 
R_\setY &\leq H (Y) \\
R_\setZ &\leq \bar p I (U;Y) + \bar p \sum_{y \in \setY} p (y) \log |\setX_y| \\
R_\setY + R_\setZ &\leq H (Y) - p I (U;Y) + \bar p \sum_{y \in \setY} p (y) \log |\setX_y|
\ea
\end{subequations}
for some PMF $p (u,x,y,z)$ of the form \eqref{eq:noMessSIPMF} under which $X$ is a function of $(Y,U)$. Further, we show that this holds also in the presence of feedback. These observations imply that feedback cannot increase the capacity region of the considered SD-BC without MSI at Receiver~$\setZ$.
\end{example}

\subsection{Strictly Causal SI on the State-Dependent SD-BC}

In this section we consider the state-dependent SD-BC, whose transition law is governed by an IID state-sequence of finite support $\setS$, i.e., $$W (y,z|x,s) = \ind {\{ y = f (x) \}} \, W (z|x,s), \quad s \in \setS.$$ We study how strictly-causal SI at the encoder affects the capacity of this channel. Because the setting with strictly-causal SI at the encoder is related to that with feedback, for the state-dependent SD-BC with strictly-causal SI at the encoder we can easily obtain results that are equivalent to those of Theorems~\ref{th:fbCanIncCapSDBC} and \ref{th:fullMessSIRecZ} for the SD-BC with feedback.

\begin{theorem}[Strictly-causal SI]
Consider the state-dependent SD-BC $$W (y,z|x,s) = \ind {\{ y = f (x) \}} \, W (z|x,s), \quad s \in \setS.$$ If the stochastic receiver~$\setZ$ has no MSI ($R_{\setY}^{(p)} \!= R_\setY$), then, irrespective of whether or not the deterministic receiver~$\setY$ has P-MSI ($R_{\setZ}^{(p)} \!\in [0,R_\setZ]$), strictly-causal SI can increase the capacity region of the SD-BC. In particular, it can increase the sum-rate capacity if Receiver~$\setY$ has P-MSI ($R_\setZ^{(p)} \!< R_\setZ$). If Receiver~$\setZ$ has F-MSI ($R_\setY^{(p)} \!= 0$), then, irrespective of whether or not Receiver~$\setY$ has P-MSI ($R_\setZ^{(p)} \!\in [0,R_\setZ]$), strictly-causal SI cannot increase capacity.
\end{theorem}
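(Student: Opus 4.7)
The plan is to adapt the proofs of Theorems~\ref{th:fbCanIncCapSDBC} and \ref{th:fullMessSIRecZ}, exploiting the close relationship between strictly-causal SI at the encoder and perfect feedback from the stochastic receiver. The theorem contains two distinct claims, and I would treat them separately.

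For the positive part (Receiver~$\setZ$ has no MSI, strictly-causal SI can help), I would first invoke the Functional Representation Lemma (Lemma~\ref{le:funcRep}) to absorb the remaining conditional randomness of $W(z|x,s)$ into an enlarged IID state $\tilde{S}_i$, so that $Z_i = g(X_i, \tilde{S}_i)$ for a deterministic function $g$. Then, given strictly-causal SI of $\tilde{S}^{i-1}$ and its own knowledge of $X^{i-1}$, the encoder can reconstruct $Z^{i-1}$ exactly. Consequently, strictly-causal SI is at least as powerful as perfect feedback on the averaged broadcast channel $\bar{W}(y,z|x)=\sum_{\tilde s} p(\tilde s)\,\ind{\{y=f(x),\,z=g(x,\tilde s)\}}$. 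Appealing to Theorem~\ref{th:fbCanIncCapSDBC} together with Example~\ref{ex:addEras}---which is already naturally state-dependent (taking $S$ to be the Bernoulli erasure indicator)---yields both parts of the positive claim: strictly-causal SI can enlarge the capacity region, and it strictly enlarges the sum-rate capacity when Receiver~$\setY$ has P-MSI.

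For the negative part (Receiver~$\setZ$ has F-MSI, strictly-causal SI is useless), it suffices to establish the converse, since achievability is trivial (the encoder can ignore the SI). I would mimic the converse of Theorem~\ref{th:fullMessSIRecZ}, paying attention only to the bookkeeping introduced by the state. Concretely, $X_i$ is a function of $(M, M_\setY, M_\setZ, S^{i-1})$, but since $S_i$ is IID and independent of $(M, M_\setY, M_\setZ, S^{i-1})$---hence of $X_i$---the per-letter law of $(X_i, Y_i, Z_i)$ is of the product form $p(x_i)\bar{W}(y_i,z_i|x_i)$ with $\bar{W}(y,z|x)=\sum_s p(s)W(y,z|x,s)$. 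Fano's inequality applied at Receiver~$\setZ$'s decoder (which knows $M_\setY$ fully), together with chain-rule single-letterization and the identity $H(Z_i\mid X_i, M_\setY, M_\setZ, Z^{i-1}, S^{i-1}) = H(Z_i\mid X_i)$, yields $R_\setZ \leq I(X;Z)$. A similar calculation at Receiver~$\setY$'s decoder, combined with the deterministic relation $Y_i=f(X_i)$, yields $R_\setY\leq H(Y)$ and the joint bound $R_\setY + R_\setZ^{(p)} \leq I(X;Y,Z)$. These reproduce exactly the region of Corollary~\ref{co:fullMessSIRecZCommMess} evaluated for the averaged channel $\bar W$, establishing that strictly-causal SI cannot enlarge the capacity.

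The main obstacle is the bookkeeping in the converse: one must verify that conditioning on the strictly-causal state $S^{i-1}$ used by the encoder to choose $X_i$ does not break the Markov relations---in particular, that $H(Z_i\mid X_i,\text{past}) = H(Z_i\mid X_i)$ continues to hold although $X_i$ depends on the past. This follows because $S_i$ is independent of $(M_\setY, M_\setZ, S^{i-1})$ and hence of everything strictly before time $i$, but the argument must be spelled out carefully. Beyond this, I do not expect any essentially new idea relative to the proofs of Theorems~\ref{th:fbCanIncCapSDBC} and \ref{th:fullMessSIRecZ}.
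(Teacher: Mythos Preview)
Your proposal is correct and takes essentially the same approach as the paper: reduce the positive claim to Theorem~\ref{th:fbCanIncCapSDBC} via the observation that when $Z$ is a deterministic function of $(X,S)$ the encoder can reconstruct $Z^{i-1}$ from strictly-causal SI, and handle the negative claim by rerunning the converse of Theorem~\ref{th:fullMessSIRecZ} with the state in place of the feedback. One caveat: your enlargement to $\tilde{S}$ is a detour, since the theorem grants SI only of the original state $S$, not of $\tilde{S}$; but since you ultimately invoke Example~\ref{ex:addEras}, where $Z$ is already a deterministic function of $(X_2,S)$ with the Bernoulli erasure indicator as state, the enlargement step is never actually needed and can simply be dropped. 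The paper arrives at the same instance from the other direction, applying the Functional Representation lemma to a state-less SD-BC to manufacture a state-dependent one with $Z=g(X,S)$.
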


\begin{proof}
The Functional Representation lemma (Lemma~\ref{le:funcRep}) allows us to view the state-less SD-BC $W (y,z|x)$ as a state-dependent SD-BC $W (y,z|x,s)$ whose stochastic output $Z$ can be computed from its input and state. If the state is revealed strictly-causally to the encoder, then the encoder can compute Receiver~$\setZ$'s past channel outputs from the past states and channel inputs, and hence feedback cannot be better than strictly-causal SI. Consequently, the first two claims of the theorem follow as a corollary to Theorem~\ref{th:fbCanIncCapSDBC}.

The last claim of the theorem (i.e., that if Receiver~$\setZ$ has F-MSI ($R_\setY^{(p)} \!= 0$), then, irrespective of whether or not Receiver~$\setY$ has P-MSI ($R_\setZ^{(p)} \!\in [0,R_\setZ]$), strictly-causal SI cannot increase capacity) can be established along the line of argument in the proof of Theorem~\ref{th:fullMessSIRecZ}.
\end{proof}

\section*{Acknowledgements}

Helpful discussions with A.\ Lapidoth are gratefully acknowledged.
The authors would also like to thank the anonymous reviewers and the Associate Editor whose valuable comments helped substantially improve the quality of the paper.

%
%
\begin{appendix}
%

\section{Proof of Theorem~\ref{th:partMessSI}}\label{sec:thPartMessSI}

\subsection{Direct Part}\label{sec:dir}

\textit{Codebook Generation:} Fix positive real numbers $\tilde \epsilon > \epsilon > 0$, a PMF $p(v,u,x) = p(v,u) \, p(x|u)$, and rates $\tilde R^{(c)} \!, \, \tilde R_\setY, \, \tilde R_\setZ > 0$ for which $\tilde R^{(c)} \!\leq \min \{ \tilde R_\setY, \tilde R_\setZ \}$. For each $( m, m_\setY^{(c)} \!, m_\setZ^{(c)} ) \in \setM \times \setM_\setY^{(c)} \!\times \setM_{\setZ}^{(c)}$ draw $2^{n \tilde R^{(c)}}$ $n$-tuples $\vecv$ from the product distribution $\prod^n_{i = 1} p (v_i)$, label them by $k \in [ 1 : 2^{n \tilde R^{(c)}} ]$, where $\vecv (k)$ denotes the $n$-tuple labelled by $k$, and place them in a bin $\mathscr B ( m, m_\setY^{(c)} \!, m_\setZ^{(c)} )$. For each $n$-tuple $\vecv$ draw $2^{n ( \tilde R_\setY - \tilde R^{(c)} )}$ $n$-tuples $\vecy \in \setY^n$ independently of each other and each from the product distribution $\prod^n_{i = 1} p (y_i|v_i)$, and label them by $\ell \in [ 1 : 2^{n ( \tilde R_\setY - \tilde R^{(c)} )} ]$, where $\vecy (k,\ell)$ denotes the $\ell$-th $n$-tuple corresponding to $\vecv (k)$. Randomly allocate the $2^{n \tilde R_\setY}$ $n$-tuples to $2^{n R_\setY^{(p)}}$ bins $\mathscr B_\setY ( m, m_\setY^{(c)} \!, m_\setZ^{(c)} \!, m_\setY^{(p)} )$, where $m_\setY^{(p)} \!\in \setM_{\setY}^{(p)}$. Similarly, for each $n$-tuple $\vecv$ draw $2^{n ( \tilde R_\setZ - \tilde R^{(c)} )}$ $n$-tuples $\vecu \in \setU^n$ independently of each other and each from the product distribution $\prod^n_{i = 1} p (u_i|v_i)$, and label them by $j \in [ 1 : 2^{n ( \tilde R_\setZ - \tilde R^{(c)} )} ]$, where $\vecu (k,j)$ denotes the $j$-th $n$-tuple $\vecu$ corresponding to $\vecv (k)$. Randomly allocate the $2^{n \tilde R_\setZ}$ $n$-tuples to $2^{n R_\setZ^{(p)}}$ bins $\mathscr B_\setZ ( m, m_\setY^{(c)} \!, m_\setZ^{(c)} \!, m_\setZ^{(p)} )$, where $m_\setZ^{(p)} \!\in \setM_{\setZ}^{(p)}$.

\textit{Encoding:} If there are $n$-tuples $\vecv \in \mathscr B ( M, M_\setY^{(c)} \!, M_\setZ^{(c)} )$, $\vecy \in \mathscr B_\setY ( M, M_\setY^{(c)} \!, M_\setZ^{(c)} \!, M_\setY^{(p)} )$, and $\vecu \in \mathscr B_\setZ ( M, M_\setY^{(c)} \!, M_\setZ^{(c)} \!, M_\setZ^{(p)} )$ for which $(\vecv, \vecy, \vecu) \in \setT^{(n)}_\epsilon (V,Y,U)$, then the encoder draws the length-$n$ channel-input-sequence $X^n$ from the product distribution $\prod^n_{i = 1} p (x_i|y_i,u_i)$. Otherwise the encoding is unsuccessful.

\textit{Decoding:} Upon observing $\bigl( Y^n,M_\setZ^{(c)} \bigr)$, Receiver~$\setY$ decodes $(m,m_\setY)$ if it is the unique element of $\setM \times \setM_\setY$ for which $Y^n \in \mathscr B_\setY ( m, m_\setY^{(c)} \!, M_\setZ^{(c)} \!, m_\setY^{(p)} ) \cap \setT^{(n)}_{\tilde \epsilon} (Y)$. Otherwise it declares an error. Upon observing $\bigl( Z^n,M_\setY^{(c)} \bigr)$, Receiver~$\setZ$ decodes $(m,m_\setZ)$ if it is the unique element of $\setM_\setZ$ for which Bin~$\mathscr B_\setZ ( m, M_\setY^{(c)} \!, m_\setZ^{(c)} \!, m_\setZ^{(p)} )$ contains an $n$-tuple $\vecu$ satisfying $(\vecu,Z^n) \in \setT^{(n)}_{\tilde \epsilon} (U,Z)$. Otherwise it declares an error.

\textit{Analysis of the Error Probability:} We first analyze the probability that the encoding is unsuccessful. By symmetry, this probability does not depend on the messages' realizations, and we thus assume w.l.g.\ that $M = M_\setY^{(p)} \!= M_\setY^{(c)} \! = M_\setZ^{(p)} \!= M_\setZ^{(c)} \! = 1$. For ease of notation, let $\mathscr B = \mathscr B (1,1,1)$, $\mathscr B_\setY = \mathscr B_\setY (1,1,1,1)$, and $\mathscr B_\setZ = \mathscr B_\setZ (1,1,1,1)$. Note that the encoder can choose one out of
\bas
N = \sum_{k, \ell, j} \ind { (\vecv (k),\vecy (k,\ell),\vecu (k,j)) \in \setT^{(n)}_\epsilon } \ind {\vecy (k,\ell) \in \mathscr B_\setY} \ind {\vecu (k,j) \in \mathscr B_\setZ}
\eas
triples $(\vecv,\vecy,\vecu)$, where $k \in [1 : 2^{n \tilde R^{(c)}}]$, $\ell \in [1 : 2^{n ( \tilde R_\setY - \tilde R^{(c)} )}]$, and $j \in [1 : 2^{n ( \tilde R_\setZ - \tilde R^{(c)} )}]$. The encoding is unsuccessful if $N = 0$. As in the proof of the mutual covering lemma \cite[Lemma~8.1]{gamalkim11}, we obtain from Chebyshev's inequality that
\ba 
\distof {N = 0} \leq \Bigdistof { \bigl( N - \smEx {}{N} \bigr)^2 \geq \smEx {}{N}^2 } \leq \frac{\textnormal {Var} ( N ) }{\smEx {}{N}^2}. \label{eq:prUB}
\ea
To conclude that---on average over the realization of the code---the encoding is with high probability successful, it thus suffices to show that $$\frac{\textnormal {Var} ( N ) }{\smEx {}{N}^2} \rightarrow 0 \,\, (n \rightarrow \infty).$$ Define
\begin{subequations}
\ba 
E^{k,\ell,j} &= \ind {(\vecv (k),\vecy (k,\ell),\vecu (k,j)) \in \setT^{(n)}_\epsilon } \ind {\vecy (k,\ell) \in \mathscr B_\setY} \ind {\vecu (k,j) \in \mathscr B_\setZ} \\
T^{k,\ell,j} &= \ind {(\vecv (k),\vecy (k,\ell),\vecu (k,j)) \in \setT^{(n)}_\epsilon },
\ea
\end{subequations}
and note that
\ba 
\bigEx {}{N^2} &= \sum_{k,\ell,j} \sum_{k^\prime \!,\ell^\prime \!,j^\prime \!} \BigEx {}{E^{k,\ell,j} E^{k^\prime \!,\ell^\prime \!,j^\prime}} \\
&= \sum_{k,\ell,j} \sum_{k^\prime \! \neq k,\ell^\prime \!,j^\prime} \BigEx {}{E^{k,\ell,j} E^{k^\prime \!,\ell^\prime \!,j^\prime}} \nonumber \\
&\quad + \sum_{k,\ell,j} \sum_{\ell^\prime \! \neq \ell,j^\prime \! \neq j} \BigEx {}{E ^{k,\ell,j} E^{k,\ell^\prime \!,j^\prime}} \nonumber \\
&\quad + \sum_{k,\ell,j} \sum_{j^\prime \! \neq j} \BigEx {}{E^{k,\ell,j} E^{k,\ell,j^\prime}} \nonumber \\
&\quad + \sum_{k,\ell,j} \sum_{\ell^\prime \! \neq \ell} \BigEx {}{E^{k,\ell,j} E^{k,\ell^\prime \!,j}} \nonumber \\
&\quad + \sum_{k, \ell, j} \BigEx {}{E^{k,\ell,j}}, \label{eq:expNSq}
\ea
where we used that $E^{k,\ell,j} E^{k,\ell,j} = E^{k,\ell,j}$. If $k \neq k^\prime \!$, then $E^{k,\ell,j}$ and $E^{k^\prime \!,\ell^\prime \!,j^\prime}$ are independent and
\ba 
&\sum_{k,\ell,j} \sum_{k^\prime \! \neq k,\ell^\prime \!,j^\prime} \BigEx {}{E^{k,\ell,j} E^{k^\prime \!,\ell^\prime \!,j^\prime}} \nonumber \\
&\quad = \sum_{k,\ell,j} \sum_{k^\prime \! \neq k,\ell^\prime \!,j^\prime} \BigEx {}{E^{k,\ell,j}} \BigEx {}{E^{k^\prime \!,\ell^\prime \!,j^\prime}} \\
&\quad \leq \Biggl( \sum_{k,\ell,j} \BigEx {}{E^{k,\ell,j}} \Biggr)^{\!\! 2} \\
&\quad = \smEx {}{N}^2. \label{eq:kDiff}
\ea
If $k = k^\prime \!$, $\ell \neq \ell^\prime \!$, and $j \neq j^\prime \!$, then
\ba 
&\sum_{k,\ell,j} \sum_{\ell^\prime \! \neq \ell,j^\prime \! \neq j} \BigEx {}{E ^{k,\ell,j} E^{k,\ell^\prime \!,j^\prime}} \nonumber \\
&\quad \stackrel{(a)}\leq 2^{- 2 n (R_\setY^{(p)} \!+ R_\setZ^{(p)})} \sum_{k,\ell,j} \sum_{\ell^\prime \! \neq \ell,j^\prime \! \neq j} \!\!\!\!\! \BigEx {}{T^{k,\ell,j} T^{k,\ell^\prime \!,j^\prime}} \\
&\quad \stackrel{(b)}\leq 2^{- 2 n (R_\setY^{(p)} \!+ R_\setZ^{(p)})} \sum_{k,\ell,j} \sum_{\ell^\prime \! \neq \ell,j^\prime \! \neq j} 2^{-n ( 2 I (Y;U|V) - \delta (\epsilon) )} \\
&\quad = 2^{n (2 \tilde R_\setY \!- 2 R_\setY^{(p)} \!+ 2 \tilde R_\setZ \!- 2 R_\setZ^{(p)} \!- 3 \tilde R^{(c)} \!- 2 I (Y;U|V) + \delta (\epsilon)) }, \label{eq:lMDiff}
\ea
where $(a)$ holds because $\ind {\vecy (k,\ell) \in \mathscr B_\setY}$, $\ind {\vecu (k,j) \in \mathscr B_\setZ}$, $\ind {\vecy (k,\ell^\prime) \in \mathscr B_\setY}$, and $\ind {\vecu (k,j^\prime) \in \mathscr B_\setZ}$ are independent of each other and of $T^{k,\ell,j} T^{k,\ell^\prime \!,j^\prime}$, and because
\bas
\bigEx {}{\ind {\vecy (k,\ell) \in \mathscr B_\setY}} &= 2^{-n R_\setY^{(p)}} \\
\bigEx {}{\ind {\vecu (k,j) \in \mathscr B_\setZ}} &= 2^{-n R_\setZ^{(p)}};
\eas
and where $(b)$ holds because of the properties of typical sequences. (Recall that $\delta (\cdot)$ denotes any function of $\epsilon$ that converges to $0$ as $\epsilon$ approaches $0$.) If $k = k^\prime \!$, $\ell = \ell^\prime \!$, and $j \neq j^\prime \!$, then
\ba 
&\sum_{k,\ell,j} \sum_{j^\prime \! \neq j} \BigEx {}{E^{k,\ell,j} E^{k,\ell,j^\prime}} \nonumber \\
&\quad \stackrel{(a)}\leq 2^{- n (R_\setY^{(p)} \!+ 2 R_\setZ^{(p)})} \sum_{k,\ell,j} \sum_{j^\prime \! \neq j} \BigEx {}{T^{k,\ell,j} T^{k,\ell,j^\prime}} \\
&\quad \stackrel{(b)}\leq 2^{- n (R_\setY^{(p)} \!+ 2 R_\setZ^{(p)})} \sum_{k,\ell,j} \sum_{j^\prime \! \neq j} 2^{-n ( 2 I (Y;U|V) - \delta (\epsilon) )} \\
&\quad = 2^{n ( \tilde R_\setY - R_\setY^{(p)} \!+ 2 \tilde R_\setZ - 2 R_\setZ^{(p)} \!- 2 \tilde R^{(c)} \!- 2 I (Y;U|V) + \delta (\epsilon) ) }, \label{eq:mDiff}
\ea
where $(a)$ holds because $\ind {\vecy (k,\ell) \in \mathscr B_\setY} \ind {\vecy (k,\ell) \in \mathscr B_\setY} = \ind {\vecy (k,\ell) \in \mathscr B_\setY}$ and because $\ind {\vecy (k,\ell) \in \mathscr B_\setY}$, $\ind {\vecu (k,j) \in \mathscr B_\setZ}$, and $\ind {\vecu (k,j^\prime) \in \mathscr B_\setZ}$ are independent of each other and of $T^{k,\ell,j} T^{k,\ell,j^\prime} \!$; and where $(b)$ follows from the properties of typical sequences. Similarly, if $k = k^\prime \!$, $j = j^\prime \!$, and $\ell \neq \ell^\prime \!$, then
\ba 
&\sum_{k,\ell,j} \sum_{\ell^\prime \! \neq \ell} \BigEx {}{E^{k,\ell,j} E^{k,\ell^\prime \!,j}} \nonumber \\
&\quad = 2^{n ( 2 \tilde R_\setY - 2 R_\setY^{(p)} \!+ \tilde R_\setZ - R_\setZ^{(p)} \!- 2 \tilde R^{(c)} \!- 2 I (Y;U|V) + \delta (\epsilon) ) }. \label{eq:lDiff}
\ea
Finally, if $k = k^\prime \!$, $j = j^\prime \!$, and $\ell = \ell^\prime \!$, then
\ba 
&\sum_{k,\ell,j} \bigEx {}{E^{k,\ell,j}} = \smEx {}{N}. \label{eq:allEq}
\ea
Using that $\textnormal {Var} ( N ) = \bigEx {}{N^2} - \smEx {}{N}^2$, we obtain from \eqref{eq:prUB}, \eqref{eq:expNSq}, \eqref{eq:kDiff}, \eqref{eq:lMDiff}, \eqref{eq:mDiff}, \eqref{eq:lDiff}, and \eqref{eq:allEq} that $$\distof {N = 0} \rightarrow 0 \,\, (n \rightarrow \infty)$$ holds whenever the RHS of each of the equations \eqref{eq:lMDiff}, \eqref{eq:mDiff}, \eqref{eq:lDiff}, and \eqref{eq:allEq} is---asymptotically---negligibly-small compared to $\smEx {}{N}^2$. Note that
\ba 
\smEx {}{N} &= \sum_{k,\ell,j} \BigEx {}{E^{k,\ell,j}} \\
&\stackrel{(a)}= 2^{- n (R_\setY^{(p)} \!+ R_\setZ^{(p)}) } \sum_{k,\ell,j} \BigEx {}{T^{k,\ell,j}} \\
&\stackrel{(b)}\geq 2^{- n (R_\setY^{(p)} \!+ R_\setZ^{(p)}) } \sum_{k,\ell,j} 2^{-n ( I(Y;U|V) + \delta (\epsilon) )} \\
&= 2^{n ( \tilde R_\setY - R_\setY^{(p)} \!+ \tilde R_\setZ - R_\setZ^{(p)} \!- \tilde R^{(c)} \!- I (Y;U|V) - \delta (\epsilon) ) },
\ea
where $(a)$ is true because $\ind {\vecy (k,\ell) \in \mathscr B_\setY}$ and $\ind {\vecu (k,j) \in \mathscr B_\setZ}$ are independent of each other and of $T^{k,\ell,j}$; and where $(b)$ holds for all sufficiently-large $n$ by the properties of typical sequences. This proves that---on average over the realization of the code---the probability that the encoding is unsuccessful converges to $0$ as $n$ tends to infinity whenever
\begin{subequations}\label{bl:succEnc}
\ba 
\tilde R^{(c)} \!&> 3 \delta (\epsilon) \\
\tilde R_\setY - R_\setY^{(p)} \!&> 3 \delta (\epsilon) \\
\tilde R_\setZ - R_\setZ^{(p)} \!&> 3 \delta (\epsilon) \\
\tilde R_\setY - R_\setY^{(p)} \!+ \tilde R_\setZ \!- R_\setZ^{(p)} \!- \tilde R^{(c)} \!&> I (Y;U|V) + \delta (\epsilon).
\ea
\end{subequations}

Suppose now that the encoding was successful. Under this assumption, we next analyze the probability that the decoding is unsuccessful. Because Receiver~$\setY$ observes the $n$-tuple $\vecy$ that the encoder selected, and because $\setT^{(n)}_{\epsilon} (Y) \subseteq \setT^{(n)}_{\tilde \epsilon} (Y)$, it is clear that $Y^n \in \mathscr B_\setY ( M, M_\setY^{(c)} \!, M_\setZ^{(c)} \!, M_\setY^{(p)} ) \cap \setT^{(n)}_{\tilde \epsilon} (Y)$. Hence, the pair $(M,M_\setY)$ satisfies the decoding requirements. Moreover, for all sufficiently-large $n$ it holds with high probability that $(U^n,Z^n) \in \setT^{(n)}_{\tilde \epsilon} (P_{U,Z})$, and hence that the pair $(M,M_\setZ)$ satisfies the decoding requirements. From this we conclude that we need to worry only about the event that also other message-pairs $(m,m_\setY) \in \setM \times \setM_\setY \setminus \bigl\{ (M,M_\setY) \bigr\}$ or $(m,m_\setZ) \in \setM \times \setM_\setZ \setminus \bigl\{ (M,M_\setZ) \bigr\}$ satisfy the decoding requirements. For Receiver~$\setY$ this can happen if an $n$-tuple $\vecy$ that corresponds to the same length-$n$ sequence $\vecv$ as $Y^n$ satisfies the decoding requirements, or if an $n$-tuple $\vecy$ that corresponds to a different length-$n$ sequence $\vecv$ than $Y^n$ satisfies the decoding requirements; and similarly for Receiver~$\setZ$. Hence, by the properties of typical sequences, the probability that the decoding is unsuccessful converges to $0$ as $n$ tends to infinity if
\begin{subequations}\label{bl:succDec}
\ba 
\tilde R_\setY - \tilde R^{(c)} \!&< H (Y|V) - \delta (\tilde \epsilon) \\
\tilde R_\setY + R + R_{\setY}^{(c)} \!&< H (Y) - \delta (\tilde \epsilon) \\
\tilde R_\setZ - \tilde R^{(c)} \!&< I (U;Z|V) - \delta (\tilde \epsilon) \\
\tilde R_\setZ + R + R_{\setZ}^{(c)} \!&< I (U;Z) - \delta (\tilde \epsilon).
\ea
\end{subequations}
Now let $\epsilon$ and $\tilde \epsilon$ tend to $0$. Then, also $\delta (\epsilon)$ and $\delta (\tilde \epsilon)$ tend to $0$, and hence we conclude that a rate-tuple $(R, R^{(p)}_\setY \!, R^{(c)}_\setY \!, R^{(p)}_\setZ \!, R^{(c)}_\setZ)$ is achievable if there exist rates $\tilde R^{(c)} \!$, $\tilde R_\setY$, and $\tilde R_\setZ$ for which
\begin{subequations}
\ba 
- \tilde R^{(c)} \!&< 0 \\
R^{(p)}_\setY \!- \tilde R_\setY &< 0 \\
R^{(p)}_\setZ \!- \tilde R_\setZ &< 0 \\
R_\setY^{(p)} \!+ R_\setZ^{(p)} \!+ \tilde R^{(c)} \!- \tilde R_\setY - \tilde R_\setZ &< - I (Y;U|V) \\
\tilde R^{(c)} \!- \tilde R_\setY &< 0 \\
-\tilde R^{(c)} \!+ \tilde R_\setY &< H (Y|V) \\
R + R_{\setY}^{(c)} \!+ \tilde R_\setY &< H (Y) \\
\tilde R^{(c)} \!- \tilde R_\setZ &< 0 \\
- \tilde R^{(c)} \!+ \tilde R_\setZ &< I (U;Z|V) \\
R + R_{\setZ}^{(c)} \!+ \tilde R_\setZ &< I (U;Z).
\ea
\end{subequations}
A Fourier-Motzkin elimination reveals that the above inequalities hold if the rate-tuple $(R, R^{(p)}_\setY \!, R^{(c)}_\setY \!, R^{(p)}_\setZ \!, R^{(c)}_\setZ)$ lies in the interior of $\mathscr C_{\textnormal{P-MSI}}$.\\

At first sight, it is perhaps surprising that we bin the cloud-center. As the following remark shows, this is simply an equivalent alternative to rate-splitting. The benefit is that binning the cloud-center requires only one auxiliary rate, namely $\tilde R^{(c)}$, whereas rate-splitting requires two auxiulary rates, namely one for each rate $R_\setY^{(p)}$ and $R_\setZ^{(p)}$.

\begin{remark}\label{re:binningCCEqRS}
The effect of binning the cloud-center is that of rate-splitting. More precisely, instead of generating $2^{n \tilde R^{(c)}}$ cloud-centers $\vecv$ for each triple $( m, m_\setY^{(c)} \!, m_\setZ^{(c)} )$, we could execute the following three stepts: 1)~divide the messages $M_\setY^{(p)}$ and $M_\setZ^{(p)}$ into two parts, i.e., $M_\setY^{(p)} \!= (M_{\setY,s}^{(p)}, M_{\setY,c}^{(p)})$ and $M_\setZ^{(p)} \!= (M_{\setZ,s}^{(p)}, M_{\setZ,c}^{(p)})$; 2)~generate for each tuple $( m,m_\setY^{(c)} \!, m_\setZ^{(c)} \!, m_{\setY,c}^{(p)}, m_{\setZ,c}^{(p)} )$ a cloud-center $\vecv$; and 3)~allocate the associated satellites $\vecy$ and $\vecu$ to $2^{n R_{\setY,s}^{(p)}}$ and $2^{n R_{\setZ,s}^{(p)}}$ instead of $2^{n R_\setY^{(p)}}$ and $2^{n R_\setZ^{(p)}}$ bins, respectively.
\end{remark}

To see that binning the cloud-center is tantamount to rate-splitting, note that with rate-splitting we generate $2^{n (R_{\setY,c}^{(p)} \!+ R_{\setZ,c}^{(p)})}$ cloud-centers $\vecv$ per triple $( m, m_\setY^{(c)} \!, m_\setZ^{(c)} )$, namely one for each pair $( m_{\setY,c}^{(p)}, m_{\setZ,c}^{(p)} )$. Therefore, we can identify $R_{\setY,c}^{(p)} + R_{\setZ,c}^{(p)}$ in the rate-splitting code with $\tilde R^{(c)}$ in the code where we also bin the cloud-center. Moreover, associating every cloud-center $\vecv$ that corresponds to some triple $( m, m_\setY^{(c)} \!, m_\setZ^{(c)} )$ with a different pair $( m_{\setY,c}^{(p)}, m_{\setZ,c}^{(p)} )$ and allocating the satellites $\vecy$ and $\vecu$ to $2^{n R_{\setY,s}^{(p)}}$ and $2^{n R_{\setZ,s}^{(p)}}$ bins, respectively, is tantamount to not associating the cloud-center with anything while allocating the satellites $\vecy$ and $\vecu$ to $2^{n R_\setY^{(p)}}$ and $2^{n R_\setZ^{(p)}}$ bins, respectively. From these observations it follows that the only difference between binning the cloud-center and rate-splitting is that for the latter $\tilde R^{(c)} \!= R_{\setY,c}^{(p)} + R_{\setZ,c}^{(p)}$ must satisfy the upper bound $\tilde R^{(c)} \!\leq R_\setY^{(p)} \!+ R_\setZ^{(p)} \!$. This upper bound is not, however, restrictive: if $\tilde R^{(c)} \!> R_\setY^{(p)} \!+ R_\setZ^{(p)} \!$, then we have more cloud-centers $\vecv$ than message-tuples $(m, m_\setY^{(p)} \!, m_\setY^{(c)} \!, m_\setZ^{(p)} \!, m_\setZ^{(c)})$; and this cannot be better than having for each message-tuple a different cloud-center.

\subsection{Converse}\label{sec:conv}

Let $Q \sim \unif [ 1 : n ]$ be independent of $( M, M_\setY, M_\setZ )$, and introduce
\begin{subequations}\label{bl:auxVar}
\ba
V_Q &= \bigl( M, M_\setY^{(c)} \!, M_\setZ^{(c)} \!, Y^n_{Q+1}, Z^{Q-1} \bigr), \\
U_Q &= \bigl( M, M_\setY^{(c)} \!, M_\setZ, Y^n_{Q+1}, Z^{Q-1} \bigr), \\
V &= ( V_Q, Q ), \, U = ( U_Q, Q ), \\
X &= X_Q, \, Y = Y_Q, \, Z = Z_Q.
\ea
\end{subequations}
Note that $V$, $U$, $X$, and $(Y,Z)$ form a Markov chain in that order, i.e., that their PMF is of the form \eqref{eq:partMessSIPMF}.

The rate of the message-pair $( M, M_\setY )$ intended to Receiver~$\setY$ satisfies
\ba 
R + R_\setY - \epsilon_n &\stackrel{(a)}\leq \frac{1}{n} I \bigl( M, M_\setY ; Y^n, M_\setZ^{(c)} \bigr) \\
&\stackrel{(b)}= I \bigl( M, M_\setY;Y_Q \bigl| M_\setZ^{(c)} \!,Y^{Q-1},Q \bigr) \label{eq:recyFB} \\
&\stackrel{(c)}\leq H (Y),
\ea
where $(a)$ follows from Fano's inequality; $(b)$ follows from the chain-rule and the independence of $( M, M_\setY )$ and $M_\setZ^{(c)} \!$; and $(c)$ holds because conditional entropy is nonnegative and conditioning cannot increase entropy.

The rate of the message-pair $( M, M_\setZ )$ intended to Receiver~$\setZ$ satisfies
\ba 
R + R_\setZ - \epsilon_n &\stackrel{(a)}\leq \frac{1}{n} I \bigl( M, M_\setZ ; Z^n,M_\setY^{(c)} \bigr) \\
&\stackrel{(b)}= I \bigl( M, M_\setZ;Z_Q \bigl| M_\setY^{(c)} \!,Z^{Q-1},Q \bigr) \label{eq:reczFB} \\
&\stackrel{(c)}\leq I (U;Z),
\ea
where $(a)$ follows from Fano's inequality; $(b)$ holds because of the chain-rule and because $(M, M_\setZ)$ and $M_\setY^{(c)}$ are independent; and $(c)$ is true because conditioning cannot increase entropy.

We prove the sum-rate constraints using Csisz\'{a}r's sum-identity, which states that, for every tuple $(A^n,B^n,T)$,
\bas
0 &= \frac{1}{n} \sum^n_{i = 1} \bigl[ I (A^n_{i+1};B_i|B^{i-1},T) \! - \! I (B^{i-1};A_i|A_{i+1}^n,T) \bigr] \\
&= I (A^n_{Q+1};B_Q|B^{Q-1},T,Q) \! - \! I (B^{Q-1};A_Q|A_{Q+1}^n,T,Q),
\eas
where $Q \sim \unif [ 1 : n ]$ is independent of $(A^n,B^n,T)$. The first sum-rate constraint that we prove is that
\ba 
&R + R_\setY^{(p)} \!+ R_\setZ - \epsilon_n \nonumber \\
&\quad \stackrel{(a)}\leq \frac{1}{n} \Bigl[ I \bigl( M_\setY^{(p)};Y^n, M, M_\setY^{(c)} \!, M_\setZ \bigr) + I \bigl( M,M_\setZ;Z^n,M_\setY^{(c)} \bigr) \Bigr] \nonumber \\
&\quad \stackrel{(b)}= I \bigl( M_\setY^{(p)};Y_Q\bigl|M, M_\setY^{(c)} \!,M_\setZ,Y^n_{Q+1},Q \bigr) \nonumber \\
&\qquad + I \bigl( M, M_\setZ;Z_Q\bigl|M_\setY^{(c)} \!,Z^{Q-1},Q \bigr) \\
&\quad \stackrel{(c)}= I \bigl( M_\setY^{(p)};Y_Q\bigl|M, M_\setY^{(c)} \!,M_\setZ,Y^n_{Q+1},Z^{Q-1},Q \bigr) \nonumber \\
&\qquad - I \bigl( Z^{Q-1};Y_Q\bigl|M, M_\setY,M_\setZ,Y^n_{Q+1},Q \bigr) \nonumber \\
&\qquad + I \bigl(Z^{Q-1};Y_Q\bigl|M, M_\setY^{(c)} \!,M_\setZ,Y^n_{Q+1},Q \bigr) \nonumber \\
&\qquad + I \bigl(M, M_\setY^{(c)} \!,M_\setZ,Y^n_{Q+1},Z^{Q-1},Q;Z_Q\bigr) \nonumber \\
&\qquad - I \bigl(M_\setY^{(c)} \!,Z^{Q-1},Q;Z_Q\bigr) \nonumber \\
&\qquad - I \bigr(Y^n_{Q+1};Z_Q\bigl|M, M_\setY^{(c)} \!,M_\setZ,Z^{Q-1},Q\bigr) \\
&\quad \stackrel{(d)}\leq I \bigl( M_\setY^{(p)};Y_Q\bigl|M, M_\setY^{(c)} \!,M_\setZ,Y^n_{Q+1},Z^{Q-1},Q \bigr) \nonumber \\
&\qquad + I \bigl(M, M_\setY^{(c)} \!,M_\setZ,Y^n_{Q+1},Z^{Q-1},Q;Z_Q\bigr) \\
&\quad \stackrel{(e)}\leq H \bigl(Y_Q\bigl|M,M_\setY^{(c)} \!,M_\setZ,Y^n_{Q+1},Z^{Q-1},Q\bigr) \nonumber \\
&\qquad + I \bigl(M,M_\setY^{(c)} \!,M_\setZ,Y^n_{Q+1},Z^{Q-1},Q;Z_Q\bigr) \\
&\quad \stackrel{(f)}= H (Y|U) + I (U;Z), \label{eq:ryPrivRz}
\ea
where $(a)$ follows from Fano's inequality; $(b)$ follows from the chain-rule and the independence of $M$, $M_\setY^{(p)} \!$, $M_\setY^{(c)} \!$, and $M_\setZ$; $(c)$ follows from the chain-rule; $(d)$ holds because of Csisz\'{a}r's sum-identity and because mutual information is nonnegative; $(e)$ holds because conditional entropy is nonnegative; and $(f)$ follows from \eqref{bl:auxVar}. Similarly, we obtain the sum-rate constraint
\ba
&R + R_\setY + R_\setZ^{(p)} \!- \epsilon_n \nonumber \\
&\quad \stackrel{(a)}\leq \frac{1}{n} \Bigl[ I \bigl(M,M_\setY^{(c)};Y^n,M_\setZ^{(c)} \bigr) + I \bigl( M_\setY^{(p)};Y^n,M,M_\setY^{(c)} \!,M_\setZ \bigr) \Bigr. \nonumber \\
&\qquad + I \bigl( M_\setZ^{(p)};Z^n,M,M_\setY^{(c)} \!,M_\setZ^{(c)} \bigr) \Bigr] \\
&\quad \stackrel{(b)}= I \bigl(M,M_\setY^{(c)};Y_Q\bigl|M_\setZ^{(c)} \!,Y^{n}_{Q+1},Q\bigr) \nonumber \\
&\qquad + I \bigl(M_\setY^{(p)};Y_Q\bigl|M,M_\setY^{(c)} \!,M_\setZ,Y^n_{Q+1},Q\bigr) \nonumber \\
&\qquad + I \bigl(M_\setZ^{(p)};Z_Q\bigl|M,M_\setY^{(c)} \!,M_\setZ^{(c)} \!,Z^{Q-1},Q\bigr) \\
&\quad \stackrel{(c)}= I \bigl(M,M_\setY^{(c)} \!,M_\setZ^{(c)} \!,Y^{n}_{Q+1},Z^{Q-1},Q;Y_Q\bigr) \nonumber \\
&\qquad - I \bigl(M_\setZ^{(c)} \!,Y^n_{Q+1},Q;Y_Q\bigr) \nonumber \\
&\qquad - I \bigl(Z^{Q-1};Y_Q\bigl|M,M_\setY^{(c)} \!,M_\setZ^{(c)} \!,Y^{n}_{Q+1},Q\bigr) \nonumber \\
&\qquad + I \bigl(M_\setY^{(p)};Y_Q\bigl|M,M_\setY^{(c)} \!,M_\setZ,Y^n_{Q+1},Z^{Q-1},Q\bigr) \nonumber \\
&\qquad - I \bigl(Z^{Q-1};Y_Q\bigl|M,M_\setY,M_\setZ,Y^n_{Q+1},Q\bigr) \nonumber \\
&\qquad + I \bigl(Z^{Q-1};Y_Q\bigl|M,M_\setY^{(c)} \!,M_\setZ,Y^n_{Q+1},Q\bigr) \nonumber \\
&\qquad + I \bigl(M_\setZ^{(p)};Z_Q\bigl|M,M_\setY^{(c)} \!,M_\setZ^{(c)} \!,Y^n_{Q+1},Z^{Q-1},Q\bigr) \nonumber \\
&\qquad - I \bigl(Y^n_{Q+1};Z_Q\bigl|M,M_\setY^{(c)} \!,M_\setZ,Z^{Q-1},Q\bigr) \nonumber \\
&\qquad + I \bigl(Y^n_{Q+1};Z_Q\bigl|M,M_\setY^{(c)} \!,M_\setZ^{(c)} \!,Z^{Q-1},Q\bigr) \\
&\quad \stackrel{(d)}\leq I \bigl(M,M_\setY^{(c)} \!,M_\setZ^{(c)} \!,Y^{n}_{Q+1},Z^{Q-1},Q;Y_Q\bigr) \nonumber \\
&\qquad + I \bigl(M_\setY^{(p)};Y_Q\bigl|M,M_\setY^{(c)} \!,M_\setZ,Y^n_{Q+1},Z^{Q-1},Q\bigr) \nonumber \\
&\qquad + I \bigl(M_\setZ^{(p)};Z_Q\bigl|M,M_\setY^{(c)} \!,M_\setZ^{(c)} \!,Y^n_{Q+1},Z^{Q-1},Q\bigr) \\
&\quad \stackrel{(e)}\leq I \bigl(M,M_\setY^{(c)} \!,M_\setZ^{(c)} \!,Y^{n}_{Q+1},Z^{Q-1},Q;Y_Q\bigr) \nonumber \\
&\qquad + H \bigl(Y_Q\bigl|M,M_\setY^{(c)} \!,M_\setZ,Y^n_{Q+1},Z^{Q-1},Q\bigr) \nonumber \\
&\qquad + I \bigl(M_\setZ^{(p)};Z_Q\bigl|M,M_\setY^{(c)} \!,M_\setZ^{(c)} \!,Y^n_{Q+1},Z^{Q-1},Q\bigr) \\
&\quad \stackrel{(f)}= I (V;Y) + H (Y|U) + I (U;Z|V),
\ea
where $(a)$ follows from Fano's inequality; $(b)$ holds because of the chain-rule and because $M$, $M_\setY^{(p)} \!$, $M_\setY^{(c)} \!$, $M_\setZ^{(p)} \!$, and $M_\setZ^{(c)}$ are independent; $(c)$ follows from the chain-rule; $(d)$ is obtained by using Csisz\'{a}r's sum-identity twice and by using that mutual information is nonnegative; $(e)$ holds because conditional entropy is nonnegative; and $(f)$ follows from \eqref{bl:auxVar}. Our last sum-rate constraint is that
\ba 
&2 R + R_\setY + R_\setZ - \epsilon_n \nonumber \\
&\quad \stackrel{(a)}\leq \frac{1}{n} I \bigl(M, M_\setY^{(c)};Y^n,M_\setZ^{(c)}\bigr) + H (Y|U) + I (U;Z) \\
&\quad \stackrel{(b)}\leq I \bigl(M, M_\setY^{(c)};Y_Q\bigl|M_\setZ^{(c)} \!,Y^n_{Q+1},Q\bigr) + H (Y|U) + I (U;Z) \nonumber \\
&\quad \stackrel{(c)}\leq I (V;Y) + H (Y|U) + I (U;Z),
\ea
where $(a)$ follows from Fano's inequality and \eqref{eq:ryPrivRz}; $(b)$ follows from the chain-rule and the independence of $M$, $M_\setY^{(c)}$, and $M_\setZ^{(c)}$; and $(c)$ holds because conditioning cannot increase entropy.

So far, we have shown that the set of rate-tuples satisfying $\eqref{bl:capRegPartMessSI}$ for some PMF of the form \eqref{eq:partMessSIPMF} is an outer bound on the capacity region of the SD-BC with P-MSI. To conclude, it remains to establish that we can w.l.g.\ restrict $X$ to be a function of $(Y,U)$. To this end we note that, by the Functional Representation lemma (Lemma~\ref{le:funcRep}), there exists some $\hat{U}$ that is of finite support and independent of $(Y,U)$ for which $X$ is a function of $(Y, U, \hat{U})$. Further, we note that the RHS of each constraint in \eqref{bl:capRegPartMessSI} is either unaffected or increases if we replace $U$ by the pair~$(U,\hat{U})$. From this we conclude that we can w.l.g.\ restrict $X$ to be a function of $(Y,U)$.

\section{Binning the Cloud-Center (or Rate-Splitting) is Necessary to Achieve the Capacity Region of Theorem~\ref{th:partMessSI}} \label{sec:binCCNeeded}

Throughout this section, we consider the case where the encoder conveys only private messages ($R = 0$). We show that with a coding scheme as in Appendix~\ref{sec:dir} but where the cloud-center is not binned, we cannot---in general---achieve the capacity region $\mathscr C_{\textnormal {P-MSI}}$ of the SD-BC with P-MSI. Recall from the proof-sketch after Theorem~\ref{th:partMessSI} that---without binning the cloud-center---we can achieve every rate-tuple $(R_\setY^{(p)} \!, R_\setY^{(c)} \!, R_\setZ^{(p)} \!, R_\setZ^{(c)})$ that satisfies \eqref{bl:capRegPartMessSI} and \eqref{bl:ccNotBinned} for some PMF $p (v,u,x,y,z)$ of the form \eqref{eq:partMessSIPMF}. Moreover, we can show that w.l.g.\ we can restrict $X$ to be a function of $(Y,U)$ (this follows from the Functional Representation lemma (Lemma~\ref{le:funcRep}); a similar argument can be found in Appendix~\ref{sec:conv}). Note that the rate-region that can be achieved without binning the cloud-center is contained in the capacity region $\mathscr C_{\textnormal {P-MSI}}$ of the SD-BC with P-MSI. As the following example shows, the containment can be strict:

\begin{example}[Binning the cloud-center] \label{ex:binningCCNeeded}
Consider the SD-BC with binary input $X$ and binary outputs $Y = X$ and
\ba
Z = \begin{cases} X &\text{if } S = 0, \\ ? &\text{if } S = 1, \end{cases} \label{eq:exBinningCCNeededDefZ}
\ea
where $S \sim \ber (p), \,\, p \in [ 0,1 ]$ is independent of $X$, and assume that the encoder conveys only private messages ($R = 0$). By Corollary~\ref{co:fullMessSIRecZCommMess} (with $X \sim \ber (1/2)$) the capacity region $\mathscr C_{\textnormal {P-MSI}}$ with F-MSI at the stochastic receiver~$\setZ$ ($R_\setY^{(p)} \!= 0$) is the set of rate-tuples $(0, R_\setY, R_\setZ^{(p)} \!, R_\setZ^{(c)})$ that satisfy
\begin{subequations}
\ba 
R_\setY &\leq 1 \\
R_\setZ &\leq 1 - p \\
R_\setY + R_\setZ^{(p)} \!&\leq 1.
\ea
\end{subequations}
This implies that, irrespective of $p \in [0,1]$, we can achieve the rate-tuple
\ba
(0, R_\setY, R_\setZ^{(p)} \!, R_\setZ^{(c)}) = (0, p, 1 - p, 0). \label{eq:rateTupleWithBinning}
\ea
As we argue shortly, if we do not bin the cloud-center, then we can achieve the rate-tuple \eqref{eq:rateTupleWithBinning} only in the degenerate cases where $p \in \{ 0,1 \}$. This implies that, for every $p \in (0,1)$, if we do not bin the cloud-center, then we can achieve only a strict subset of the capacity region.

We next show that, if we do not bin the cloud-center, then we can achieve the rate-tuple \eqref{eq:rateTupleWithBinning} only in the degenerate cases where $p \in \{ 0,1 \}$. To this end recall that---without binning the cloud-center---the achievable rate-tuples $(R_\setY^{(p)} \!, R_\setY^{(c)} \!, R_\setZ^{(p)} \!, R_\setZ^{(c)})$ are the ones that satisfy \eqref{bl:capRegPartMessSI} and \eqref{bl:ccNotBinned} for some PMF $p (v,u,x,y,z)$ of the form \eqref{eq:partMessSIPMF}, and where one can w.l.g.\ restrict $X$ to be a function of $(Y,U)$. Fix some PMF $p (v,u,x,y,z)$ of the form \eqref{eq:partMessSIPMF} that satisfies that $X$ is a function of $(Y,U)$, and assume that $p > 0$. As we argue shortly, \eqref{eq:ryRzp} implies that, for every $p \in (0,1]$, at least one of the following two holds:
\ba 
R_\setY + R_\setZ^{(p)} \!< H (X) \leq 1 \quad \text{or} \quad I (U;X|V) = 0. \label{eq:binningCCNeededC1}
\ea
Moreover, it follows from \eqref{eq:ccNotBinnedRZp} that
\ba 
R_\setZ^{(p)} \!&\stackrel{(a)} \leq I (U;Z|V) \\
&\stackrel{(b)} = I (U;Z,S|V) \\
&\stackrel{(c)} = I (U;Z|S,V) \\
&\stackrel{(d)} = (1 - p) I (U;X|V), \label{eq:binningCCNeededRZ}
\ea 
where $(a)$ is \eqref{eq:ccNotBinnedRZp}; $(b)$ holds because $S$ is a function of $Z$; $(c)$ holds because $S$ is independent of $(V,U)$; and $(d)$ follows from \eqref{eq:exBinningCCNeededDefZ}. From \eqref{eq:binningCCNeededC1} and \eqref{eq:binningCCNeededRZ} it follows that, for every $p \in ( 0, 1 ]$, at least one of the following two holds:
\ba 
R_\setY + R_\setZ^{(p)} \!< H (X) \leq 1 \quad \text{or} \quad R_\setZ^{(p)} \!= 0.
\ea
Consequently, for every $p \in ( 0, 1 ]$, if $R_\setZ^{(p)}$ is strictly positive, then $R_\setY + R_\setZ^{(p)}$ must be strictly smaller than $1$. In particular, this implies that if we do not bin the cloud-center, then we can achieve the rate-tuple $(0, R_\setY, R_\setZ^{(p)} \!, R_\setZ^{(c)}) = (0, p, 1 - p, 0)$ only in the degenerate cases where $p \in \{ 0,1 \}$.

To conclude, it remains to show that, for every $p \in (0,1]$, \eqref{eq:ryRzp} implies \eqref{eq:binningCCNeededC1}. To this end we observe from \eqref{eq:ryRzp} that
\ba 
R_\setY + R_\setZ^{(p)} \!&\stackrel{(a)}\leq I (V;Y) + H (Y|U) + I (U;Z|V) \\
&\stackrel{(b)}\leq I(X;Y,Z) \\
&\stackrel{(c)}= H (X) \\
&\stackrel{(d)}\leq 1, \label{eq:exBinningCCNeededImp1}
\ea
where $(a)$ is \eqref{eq:ryRzp}; $(b)$ follows from \eqref{eq:ub1c} in the proof of Corollary~\ref{co:fullMessSIRecZCommMess}, which can be found in Appendix~\ref{sec:pfCoFullMessSIRecZ}; $(c)$ holds because $Y = X$; and $(d)$ holds because $X$ is binary. Note that \eqref{eq:exBinningCCNeededImp1} can hold with equality only if the following two equalities hold
\ba 
I (U;Y|Z,V) = I (V;Z|Y) = 0. \label{eq:exBinningCCNeededImp1Eq}
\ea
Indeed, Inequality~$(d)$ in the derivation of \eqref{eq:ub1c} holds with equality only if \eqref{eq:exBinningCCNeededImp1Eq} holds, and hence Inequality~$(b)$ in the derivation of \eqref{eq:exBinningCCNeededImp1} holds with equality only if \eqref{eq:exBinningCCNeededImp1Eq} holds. Note that
\ba 
I (U;Y|Z,V) &\stackrel{(a)} = I (U;X|Z,V) \\
&\stackrel{(b)} = (1 - p) I (U;X|X,V) + p \, I (U;X|Z = ?,V) \\
& = p \, I (U;X|V), \label{eq:exBinningCCNeededImp1Reform}
\ea
where $(a)$ holds because $Y = X$; and $(b)$ follows from \eqref{eq:exBinningCCNeededDefZ}. We are now ready to conclude the proof of our claim that, for every $p \in (0,1]$, \eqref{eq:ryRzp} implies \eqref{eq:binningCCNeededC1}: We have shown that \eqref{eq:ryRzp} implies \eqref{eq:exBinningCCNeededImp1}. Moreover, \eqref{eq:exBinningCCNeededImp1} can hold with equality only if \eqref{eq:exBinningCCNeededImp1Eq} holds, and by \eqref{eq:exBinningCCNeededImp1Reform} this implies that, for every $p \in (0,1]$, \eqref{eq:exBinningCCNeededImp1} can hold with equality only if $I (U;X|V) = 0$. Consequently, \eqref{eq:binningCCNeededC1} holds for every $p \in (0,1]$.

\end{example}

\section{Proof of Corollary~\ref{co:fullMessSIRecZCommMess}} \label{sec:pfCoFullMessSIRecZ}

For $U = X$ and $V = Y$ the constraints in \eqref{bl:capRegPartMessSI} and \eqref{bl:capRegFullMessSIRecZCommMess} are equivalent. Hence, \eqref{bl:capRegFullMessSIRecZCommMess} is an inner bound on the capacity region. We next argue that \eqref{bl:capRegFullMessSIRecZCommMess} is also an outer bound on the capacity region. To this end fix any PMF of the form \eqref{eq:partMessSIPMF} satisfying that $X$ is a function of $(Y,U)$. By \eqref{eq:partMessSIPMF} $U$, $X$, and $Z$ form a Markov chain in that order, and hence
\ba
I (U;Z) \leq I (X;Z). \label{eq:ub1b}
\ea
Moreover,
\ba 
&I (V;Y) + H (Y|U) + I (U;Z|V) \nonumber \\
&\quad \stackrel{(a)}= H (Y|U) + I (U;Y,Z) - I (U;Y|V) \nonumber \\
&\qquad - I (U;Z|Y) + I (U;Z|V) \\
&\quad \stackrel{(b)}= H (Y|U) + I (U;Y,Z) - I (U;Y|V) \nonumber \\
&\qquad - I (X;Z|Y) + I (U;Z|V) \\
&\quad \stackrel{(c)}= H (Y|U) + I (U;Y,Z) - I (U;Y,Z|V) \nonumber \\
&\qquad - I (V;Z|Y) + I (U;Z|V) \\
&\quad \stackrel{(d)}\leq H (Y|U) + I (U;Y,Z) \\
&\quad \stackrel{(e)}= H (Y) + I (U;Z|Y) \\
&\quad \stackrel{(f)}= I (X;Y,Z), \label{eq:ub1c}
\ea
where $(a)$ follows from the chain-rule and the fact that, by \eqref{eq:partMessSIPMF}, $V$, $U$, and $Y$ form a Markov chain in that order; $(b)$ holds because $X$ is a function of $(Y,U)$ and because, by \eqref{eq:partMessSIPMF}, $U$, $X$, and $Z$ form a Markov chain in that order; $(c)$ holds because
\ba
I (X;Z|Y) - I (U;Z|Y,V) &= I (X;Z|Y) - I (X;Z|Y,V) \\
&= I (V;Z|Y),
\ea
where we used that $X$ is a function of $(Y,U)$; $(d)$ holds because mutual information is nonnegative and because conditioning cannot increase entropy; $(e)$ follows from the chain-rule; and $(f)$ holds because $Y$ is a function of $X$, because $X$ is a function of $(Y,U)$, and because of the chain-rule. From \eqref{eq:ry}, \eqref{eq:rz} and \eqref{eq:ub1b}, as well as \eqref{eq:ryRzp} and \eqref{eq:ub1c} we conclude that \eqref{bl:capRegFullMessSIRecZCommMess} is an outer bound on the capacity region.

\section{Analysis of Example~\ref{ex:commMessPMSIRecYIncCap}} \label{sec:pfExCommMessPMSIRecYIncCap}

\begin{itemize} 
\item \underline{Assume F-MSI at $\setY$ and no MSI at $\setZ$ ($R_\setY^{(p)} = R_\setY$ and $R_\setZ^{(p)} = 0$).} The capacity region is given in Corollary~\ref{co:fullMessSIRecYCommMess}, and it is not hard to see that the maximum achievable sum-rate $R+R_\setY+R_\setZ$ is 
\ba 
\max_{p(u,x,y,z)\in\mathcal{P}_u}\ \bigl\{ H (Y|U) + I (U;Z) \bigr\}.\label{eq:pfExCommMessMaxSRFMSIY}
\ea
In the proof of Example~\ref{ex:addEras} (see Equations~\eqref{eq:maxSR2}--\eqref{bl:condMaxSumRExFBInc} in Appendix~\ref{sec:proof}), the RHS of \eqref{eq:pfExCommMessMaxSRFMSIY} is shown to equal the RHS of \eqref{eq:maxSRFullMessSIRecYCM}. Further, it is shown that every PMF $p (u, x, y, z) \in \mathcal{P}_{u}^\star$ satisfies \eqref{eq:maxSRFullMessSIRecYCMMutis}. Finally, we obtain \eqref{eq:Rcommonstar} by examining \eqref{bl:capRegFullMessSIRecYCommMess}.

\item \underline{Assume no MSI ($R_\setY^{(p)} = R_\setY$ and $R_\setZ^{(p)} = R_\setZ$).} By Theorem~\ref{th:partMessSI} the capacity region is the set of rate-tuples $( R, R_\setY, R_\setZ )$ satisfying
\begin{subequations}\label{bl:capRegNoMSICommMess}
\ba
R+R_\setY &\leq H (Y) \label{eq:ryCM} \\
R+R_\setZ &\leq I (U;Z) \label{eq:rzCM} \\
R+R_\setY+R_\setZ &\leq I (V;Y) + H (Y|U) + I (U;Z|V) \label{eq:ryRzCM} \\
R+R_\setY+R_\setZ &\leq H (Y|U) + I (U;Z) \label{eq:rypRzCM} \\
2 R+R_\setY+R_\setZ &\leq I (V;Y) + H (Y|U) + I (U;Z) \label{eq:RlastCM}
\ea
\end{subequations}
for some PMF $p (v,u,x,y,z)$ of the form \eqref{eq:partMessSIPMF}. From~\eqref{eq:rypRzCM} we see that the maximum achievable sum-rate cannot be larger than the maximum sum-rate in \eqref{eq:pfExCommMessMaxSRFMSIY}, and that it can be achieved only if $p(u,x,y,z)\in\mathcal{P}_u$. If we set $R=0$ (no common message) and let $V$ be deterministic, then we see that \eqref{eq:pfExCommMessMaxSRFMSIY} is achievable, and hence we obtain that \eqref{eq:pfExCommMessMaxSRFMSIY} is the maximum achievable sum-rate. Finally, \eqref{eq:commMessPMSIRecYIncCapRSR} follows from \eqref{eq:RlastCM}, because the maximum achievable sum-rate is \eqref{eq:pfExCommMessMaxSRFMSIY}.


\end{itemize}

\section{Proof of Proposition~\ref{pr:enhMSI}}\label{sec:pfPrEnhMSI}

We assume that $R_\setY^{(c)} \!> 0$, because otherwise the statement is obvious. We will show that the error probability of the described code with rate-limited feedback converges to zero as the blocklength approaches infinity. To this end we introduce the following three error events:
\begin{IEEEeqnarray}{rCl}
E_0 &:& \quad \text{Receiver~$\setY$ cannot recover $Z^{\alpha n} \!$} \\
E_1 &:& \quad \text{The messages sent in Phase~1 are not decoded correctly} \\
E_2 &:& \quad \text{The messages sent in Phase~2 (including the compression index)} \\ \nonumber \\
& &\quad \text{are not decoded correctly}.
\end{IEEEeqnarray}
Using these error events, we can upper-bound the code's error probability by
\ba 
\distof {\text{error}} &\leq \distof {E_0 \cup E_1 \cup E_2} \\
&= \distof {E_2} + \distof {E_0 | E_2^c} + \distof {E_1|E_0^c}. \label{eq:fbCodeErrorProb}
\ea
To show that the code's error probability converges to zero, we will show that each term on the RHS of \eqref{eq:fbCodeErrorProb} converges to zero. For the third term we have
\ba 
\distof {E_1|E_0^c} \rightarrow 0 \,\, (n \rightarrow \infty),
\ea
because $( \tilde R_\setY^{(p)} \!, \tilde R_\setY^{(c)} \!, \tilde R_\setZ^{(p)} \!, \tilde R_\setZ^{(c)} ) \in \mathscr C^{( \textnormal{enh} )}_{\textnormal{P-MSI}}$. Also, because Event~$E_2^c$ occurs only if Receiver~$\setY$ recovers the correct compression index, and because a compression index of rate $H (Z|Y)$ is enough for Receiver~$\setY$ to recover $Z^{\alpha n}$ using also the side-information $Y^{\alpha n}$, the second term satisfies
\ba 
\distof {E_0|E_2^c} \rightarrow 0 \,\, (n \rightarrow \infty).
\ea
Finally, we note that the first term satisfies
\ba 
\distof {E_2} \rightarrow 0 \,\, (n \rightarrow \infty),
\ea
because
\ba
\Bigl( R_\setY^{(p)} \!, \hat R_\setY^{(c)} + \frac{\alpha}{1 - \alpha} H (Z|Y), R_\setZ^{(p)} \!, R_\setZ^{(c)} \Bigr) \in \mathscr C_{\textnormal {P-MSI}},
\ea
where we used \eqref{eq:hatRYc} and that $( R_\setY^{(p)} \!, R_\setY^{(c)} \!, R_\setZ^{(p)} \!, R_\setZ^{(c)} ) \in \mathscr C_{\textnormal {P-MSI}}$. (The ``rate'' $\frac{\alpha}{1-\alpha} H (Z|Y)$ accounts for the $\alpha n H (Z|Y)$ compression bits from Phase~1 that have to be sent during the $(1 - \alpha) n$ channel uses that Phase~2 comprises.)

\section{Proof of Propositions~\ref{pr:partSIRecYInc} and \ref{pr:noSIFBInc}}\label{sec:pfSufficientConditions}

Before we prove Propositions~\ref{pr:partSIRecYInc} and \ref{pr:noSIFBInc}, we sketch our proof of Proposition~\ref{pr:partSIRecYInc}. A first step in the proof is to note that---under the conditions stated in Proposition~\ref{pr:partSIRecYInc}---it is possible to identify a rate-triple $( R_\setY^{(p)} \!, R_{\setZ}^{(p)} \!, R_\setZ^{(c)} )$ that satisfies the following two:
\begin{enumerate}
\item $(R_\setY^{(p)} \!, R_\setZ^{(p)} \! + R_{\setZ}^{(c)} )$ lies on the boundary of the no-feedback capacity region $\mathscr C$ of the SD-BC without MSI and in the interior of the capacity region $\mathscr C_{\textnormal{enh}}$ of the enhanced BC without MSI.
\item There exists a strictly-positive rate $R_\setY^{(c)}$ for which $(R_\setY^{(p)} \!, R_\setY^{(c)} \!, R_\setZ^{(p)} \!, R_\setZ^{(c)} )$ is contained in $\mathscr C_{\textnormal{P-MSI}}$.
\end{enumerate}
As we argue next, the above conditions guarantee that, irrespective of $R_{\textnormal{FB}} > 0$, the feedback code of Section~\ref{sec:scheme} achieves a rate-tuple outside the no-feedback capacity region $\mathscr C$ of the SD-BC without MSI. By Remark~\ref{re:partMessSIRecy} $\mathscr{C}$ is also the no-feedback capacity region of the SD-BC with P-MSI at Receiver~$\setY$ only, and hence it follows that the conditions in Proposition~\ref{pr:partSIRecYInc} are sufficient for feedback to increase the capacity region of the SD-BC with P-MSI at Receiver~$\setY$ only. It remains to show that the feedback code of Section~\ref{sec:scheme} achieves a rate-tuple outside the no-feedback capacity region $\mathscr C$. Condition~1 guarantees that in Phase~1 of the feedback code, when the encoder codes for the enhanced BC, message information can be sent at rates $(\tilde{R}_{\setY}, 0, \tilde{R}_{\setZ}^{(p)} \!, \tilde{R}_{\setZ}^{(c)} )$ satisfying
\begin{equation}\label{eq:ineq}
\tilde{R}_{\setY} > R_\setY^{(p)} \!, \quad \tilde{R}_{\setZ}^{(p)} \! > R_\setZ^{(p)} \!, \quad \tilde{R}_{\setZ}^{(c)} \! > R_{\setZ}^{(c)} \!.
\end{equation}
Recall that the deterministic receiver~$\setY$ needs resolution information to decode its Phase-$1$ message. Condition~2 guarantees that in Phase~2 of the feedback code the encoder can send the required resolution information at rate $R_\setY^{(c)}$ and, simultaneously, fresh message-information at rates $( R_\setY^{(p)} \!, 0, R_{\setZ}^{(p)} \!, R_\setZ^{(c)} )$. Specifically, Proposition~\ref{pr:enhMSI} implies that, for every sufficiently-small $\alpha$, the rate-tuple 
\ba 
\alpha ( \tilde R_\setY, 0, \tilde R_\setZ^{(p)} \!, \tilde R_\setZ^{(c)} ) + (1-\alpha) ( R_\setY^{(p)} \!, 0, R_\setZ^{(p)} \!, R_\setZ^{(c)} ) \label{eq:FBRateTuple2}
\ea
is achievable. Since $( R_\setY^{(p)} \!, R_\setZ^{(p)} \! + R_{\setZ}^{(c)} )$ lies on the boundary of the no-feedback capacity region $\mathscr C$, we conclude from \eqref{eq:ineq} that the rate-tuple \eqref{eq:FBRateTuple2} is not achievable without feedback. This proves that the feedback code of Section~\ref{sec:scheme} achieves a rate-tuple outside the no-feedback capacity region $\mathscr C$.

The proof of Proposition~\ref{pr:noSIFBInc} is similar: the main difference is that we set $R_\setZ^{(c)} \!= \tilde R_\setZ^{(c)} \!= 0$.\\

We next provide the details of the proof of Propositions~\ref{pr:partSIRecYInc} and \ref{pr:noSIFBInc}. We shall use the following lemma:

\begin{lemma}\label{le:part1}
Fix a rate-tuple $( R_\setY^{(p)} \!, 0, R_\setZ^{(p)} \!, R_\setZ^{(c)} ) \in \mathscr C_{\textnormal{P-MSI}}$. If $R_\setZ^{(p)} \!< R_\setZ$ (or, equivalently, $R_\setZ^{(c)} \!> 0$), and if for some PMF $p(u,x,y,z)$ of the form \eqref{eq:noMessSIPMF} we have
\begin{subequations}\label{bl:lePartSIRecY}
\ba 
R_\setY^{(p)} \!&< H (Y) \label{eq:lePartSIRecYRecY} \\
R_\setZ^{(p)} \!+ R_\setZ^{(c)} \!&\leq I (U;Z) \label{eq:lePartSIRecYRecZ} \\
R_\setY^{(p)} \!+ R_\setZ^{(p)} \!+ R_\setZ^{(c)} \!&\leq H (Y|U) + I (U;Z) \label{eq:lePartSIRecYSumRate} \\
I (U;Y) &> 0,
\ea
\end{subequations}
then there exists some positive rate ${R}_\setY^{(c)} \!> 0$ satisfying
\begin{equation}\label{eq:new_ach}
( R_\setY^{(p)} \!, {R}_\setY^{(c)} \!, R_\setZ^{(p)} \!, R_\setZ^{(c)} ) \in \mathscr C_{\textnormal{P-MSI}}.
\end{equation}

If $R_\setZ^{(p)} \!= R_\setZ$ (or, equivalently, $R_\setZ^{(c)} \!= 0$), and if for some PMF $p (v,u,x,y,z)$ of the form \eqref{eq:partMessSIPMF} we have \eqref{bl:lePartSIRecY} and
\ba 
I (V;Y) - I (V;Z) > 0, \label{eq:leNoSI}
\ea
then there exists some positive rate $ R_\setY^{(c)} \!> 0$ satisfying \eqref{eq:new_ach}.
\end{lemma}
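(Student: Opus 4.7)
The plan is to show that the rate-tuple $(R_\setY^{(p)},R_\setY^{(c)},R_\setZ^{(p)},R_\setZ^{(c)})$ lies in the region of Theorem~\ref{th:partMessSI} for some $R_\setY^{(c)}>0$, by exhibiting an auxiliary $V$ for the PMF in the hypothesis. Since the hypothesis \eqref{bl:lePartSIRecY} gives a distribution $p(u,x,y,z)$ with no $V$, I will construct $V$ from $U$ (with a time-sharing parameter in the first case). Among the five constraints of Theorem~\ref{th:partMessSI}, only constraints \eqref{eq:ry}, \eqref{eq:ryRzp} and \eqref{eq:Rlast} depend on $R_\setY^{(c)}$; constraints \eqref{eq:rz} and \eqref{eq:rypRz} are immediate from \eqref{eq:lePartSIRecYRecZ} and \eqref{eq:lePartSIRecYSumRate}, and constraint \eqref{eq:ry} allows $R_\setY^{(c)}\in(0,H(Y)-R_\setY^{(p)})$ by \eqref{eq:lePartSIRecYRecY}. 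Hence the task reduces to checking strict positivity of the slack in \eqref{eq:ryRzp} and \eqref{eq:Rlast}.

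For the first case $R_\setZ^{(c)}>0$, I will define $V=(T,TU)$ with $T\sim\ber(\epsilon)$ independent of $(U,X,Y,Z)$ for some small $\epsilon>0$, so that $V-U-(X,Y,Z)$ forms a Markov chain of the form \eqref{eq:partMessSIPMF}. A short direct computation gives $I(V;Y)=\epsilon I(U;Y)$, $I(V;Z)=\epsilon I(U;Z)$, and $I(U;Z|V)=(1-\epsilon)I(U;Z)$. Plugging these into the RHS of \eqref{eq:ryRzp} minus $R_\setY^{(p)}+R_\setZ^{(p)}$ and using \eqref{eq:lePartSIRecYSumRate} yields a lower bound of $R_\setZ^{(c)}+\epsilon(I(U;Y)-I(U;Z))$, which is strictly positive whenever $\epsilon< R_\setZ^{(c)}/\max\{I(U;Z)-I(U;Y),0\}$ (with the convention $1/0=\infty$). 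Similarly, the slack in \eqref{eq:Rlast} is at least $\epsilon I(U;Y)$, strictly positive because $I(U;Y)>0$. Taking $\epsilon>0$ small enough and $R_\setY^{(c)}>0$ smaller than all three positive upper bounds completes the argument.

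For the second case $R_\setZ^{(c)}=0$, I will use the $V$ supplied by the hypothesis directly. Rewriting the slack in \eqref{eq:ryRzp} via the chain rule as
\[
I(V;Y)-I(V;Z)+\bigl[H(Y|U)+I(U;Z)-R_\setY^{(p)}-R_\setZ^{(p)}\bigr],
\]
the bracket is nonnegative by \eqref{eq:lePartSIRecYSumRate} (with $R_\setZ^{(c)}=0$), and the leading difference is strictly positive by \eqref{eq:leNoSI}. For \eqref{eq:Rlast}, the same bracket plus $I(V;Y)\ge I(V;Y)-I(V;Z)>0$ is again strictly positive. Combined with \eqref{eq:lePartSIRecYRecY}, this furnishes an $R_\setY^{(c)}>0$ satisfying \eqref{eq:new_ach}.

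The only delicate point is the first case: naively picking $V=U$ may fail because the slack in \eqref{eq:ryRzp} is not guaranteed to stay positive (it could collapse when $I(U;Z)>I(U;Y)$), while picking $V$ deterministic kills the slack in \eqref{eq:Rlast}. The main obstacle is thus balancing these two constraints; the time-sharing $V=(T,TU)$ with $\epsilon$ small resolves it by exploiting the strict positivity of $R_\setZ^{(c)}$ to absorb the mismatch between $I(U;Y)$ and $I(U;Z)$.
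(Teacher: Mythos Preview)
Your proposal is correct and takes essentially the same approach as the paper. Your time-sharing auxiliary $V=(T,TU)$ with $T\sim\ber(\epsilon)$ is equivalent to the paper's erasure construction $V=U$ when $S=0$, $V=?$ when $S=1$ with $S\sim\ber(1-\varepsilon)$; both yield $I(V;Y)=\epsilon I(U;Y)$ and $I(V;Z)=\epsilon I(U;Z)$, and the ensuing slack analysis of the constraints in Theorem~\ref{th:partMessSI} is the same (the paper unifies the two cases via the single sufficient condition $I(V;Z)-I(V;Y)<R_\setZ^{(c)}$ and $I(V;Y)>0$, whereas you treat them separately, but the content is identical).
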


\begin{proof}[Proof of Lemma~\ref{le:part1}]
Suppose that \eqref{bl:lePartSIRecY} holds for some PMF $p(u,x,y,z)$ of the form \eqref{eq:noMessSIPMF}. A rate-tuple $\bigl( R_\setY^{(p)},R_\setY^{(c)},R_\setZ^{(p)},R_\setZ^{(c)} \bigl)$ is in $\mathscr C_{\textnormal{P-MSI}}$ if it satisfies \eqref{bl:capRegPartMessSI} for some conditional PMF $p (v|u)$ and
\ba 
p (v,u,x,y,z) = p (v|u) \, p(u,x,y,z).
\ea
By inspection of \eqref{bl:capRegPartMessSI} and \eqref{bl:lePartSIRecY}, we see that this holds if
\begin{subequations} \label{bl:pfleConds}
\ba
R_\setY^{(c)} &\leq H (Y) \! - \! R_\setY^{(p)} \\
R_\setY^{(c)} &\leq H (Y|U) \! + \! I (U;Z) \! + \! I (V;Y) \! - \! I (V;Z) \! - \! R_\setY^{(p)} \! - \! R_\setZ^{(p)} \\
R_\setY^{(c)} &\leq H (Y|U) \! + \! I (U;Z) \! + \! I (V;Y) \! - \! R_\setY^{(p)} \! -\! R_\setZ^{(p)} \! - \! R_\setZ^{(c)}.
\ea
\end{subequations}
And from \eqref{bl:lePartSIRecY} it follows that if
\ba 
I (V;Z) - I (V;Y) < R_\setZ^{(c)} \quad \text{and} \quad I (V;Y) > 0, \label{eq:lePartSIRecYSuff}
\ea
then there exists some $R_\setY^{(c)} > 0$ for which the rate-tuple $\bigl( R_\setY^{(p)},R_\setY^{(c)},R_\setZ^{(p)},R_\setZ^{(c)} \bigl)$ satisfies \eqref{bl:pfleConds}. This proves the claim for the case where $R_\setZ^{(p)} \!= R_\setZ$ (or, equivalently, $R_\setZ^{(c)} \!= 0$), because in this case \eqref{eq:lePartSIRecYSuff} and \eqref{eq:leNoSI} are equivalent. To prove the claim for the case where $R_\setZ^{(p)} \!< R_\setZ$ (or, equivalently, $R_\setZ^{(c)} \!> 0$), fix $\varepsilon \in (0,1)$, let $S \sim \ber (1-\varepsilon)$ be independent of $(U,X,Y,Z)$, and choose
\ba 
V = \begin{cases} U & S = 0, \\ ? & S = 1. \end{cases}
\ea
For this choice of $V$ \eqref{eq:lePartSIRecYSuff} holds for every sufficiently-small $\varepsilon$, because
\bas 
I (V;Z) = \varepsilon I (U;Z) \leq \varepsilon \log |\setZ| \quad \text{and} \quad
0 < I (V;Y) = \varepsilon I (U;Y).
\eas
This proves the claim for the case where $R_\setZ^{(p)} \!< R_\setZ$.
\end{proof}

Note that if $R_\setZ^{(p)} \!< R_\setZ$, then Lemma~\ref{le:part1} does not ask for much: By the assumption that $( R_\setY^{(p)} \!, 0, R_\setZ^{(p)} \!, R_\setZ^{(c)} ) \in \mathscr C_{\textnormal{P-MSI}}$, there must exist some PMF $p(u,x,y,z)$ of the form \eqref{eq:noMessSIPMF} for which \eqref{eq:lePartSIRecYRecZ} and \eqref{eq:lePartSIRecYSumRate} are satisfied and \eqref{eq:lePartSIRecYRecY} holds with nonstrict inequality (this follows from Remark~\ref{re:partMessSIRecy} and Corollary~\ref{co:noMessSI}). Hence, all we are asking for is that $R_\setY < H (Y)$ and $I (U;Y) > 0$ hold. (Roughly speaking, all we are asking for is that the transmission of $M_\setZ$ interfere with the transmission of $M_\setY$.)

\begin{proof}[Proof of Proposition~\ref{pr:partSIRecYInc} and \ref{pr:noSIFBInc}]
Assume that there exists some rate-tuple $( R_\setY^{(p)} \!, R_\setY^{(c)} \!, R_\setZ^{(p)} \!, R_\setZ^{(c)} )$ that satisfies the following three conditions:\footnote{In particular, \eqref{eq:cond1} implies that $( R_\setY^{(p)} \!, R_\setZ^{(p)} \!, R_\setZ^{(c)} ) \in \partial \mathscr C^{(R_\setY^{(c)} \!= 0)}_{\textnormal{P-MSI}}$, where $\mathscr C^{(R_\setY^{(c)} \!= 0)}_{\textnormal{P-MSI}} \subset (\reals_0^+)^3$ denotes the set of rate-triples $( R_\setY^{(p)} \!, R_\setZ^{(p)} \!, R_\setZ^{(c)} )$ satisfying $( R_\setY^{(p)} \!, 0, R_\setZ^{(p)} \!, R_\setZ^{(c)} ) \in \mathscr C_{\textnormal{P-MSI}}$ (cf.~\eqref{eq:equivalent}).\label{footnotelabel}}
\begin{subequations}\label{eq:cond}
\ba
R_\setY^{(c)} \!&> 0 \label{eq:cond3} \\
( R_\setY^{(p)} \!, R_\setY^{(c)} \!,R_\setZ^{(p)} \!, R_\setZ^{(c)} ) &\in \mathscr C_{\textnormal{P-MSI}} \label{eq:cond2} \\
( R_\setY^{(p)} \!, R_\setZ^{(p)} \!+ R_\setZ^{(c)}) \!&\in \bigl( \partial \mathscr C \cap ( \mathscr C_{\textnormal {enh}} \setminus \partial \mathscr C_{\textnormal {enh}} )\bigr). \label{eq:cond1}
\ea
\end{subequations}
Now look for rates
\begin{equation}\label{eq:larger_rates}
\tilde R_\setY^{(p)} \!> R_\setY^{(p)} \!, \quad \tilde R_\setZ^{(p)} \!> R_\setZ^{(p)} \!, \quad \tilde R_\setZ^{(c)} \!> R_\setZ^{(c)}
\end{equation}
satisfying that $(\tilde R_\setY^{(p)} \!, \tilde R_\setZ^{(p)} \!+ \tilde R_\setZ^{(c)}) \in \mathscr C_{\textnormal {enh}}$. By Assumption~\eqref{eq:cond1} such rates exist.

Proposition~\ref{pr:enhMSI} and Assumptions~\eqref{eq:cond3} and \eqref{eq:cond2} guarantee that, irrespective of $R_{\textnormal{FB}} > 0$, there exists some $\alpha \in (0,1)$ for which the rate-tuple
\ba 
\alpha (\tilde R_\setY^{(p)} \!, 0, \tilde R_\setZ^{(p)} \!, \tilde R_\setZ^{(c)} ) + (1- \alpha) ( R_\setY^{(p)} \!,0 , R_\setZ^{(p)} \!, R_\setZ^{(c)} ) \label{eq:rTuple}
\ea
is in the feedback capacity region of the SD-BC with P-MSI at Receiver~$\setY$. Note that \eqref{eq:equivalent}, \eqref{eq:cond1}, and \eqref{eq:larger_rates} imply that the rate-tuple in \eqref{eq:rTuple} is not in $\mathscr C_{\textnormal{P-MSI}}$ (see also Footnote~\ref{footnotelabel}). To conclude the proof of Proposition~\ref{pr:partSIRecYInc}, it thus suffices to show that, under the conditions in Proposition~\ref{pr:partSIRecYInc}, it is possible to find a rate-tuple $( R_\setY^{(p)} \!, R_\setY^{(c)} \!, R_\setZ^{(p)} \!, R_\setZ^{(c)} )$ satisfying \eqref{eq:cond}. But this follows from the first part of Lemma~\ref{le:part1}.\\

The proof of Proposition~\ref{pr:noSIFBInc} is similar. Assume that there exists some rate-tuple $( R_\setY^{(p)} \!, R_\setY^{(c)} \!, R_\setZ^{(p)} \!, 0 )$ that satisfies the following three conditions:
\begin{subequations}\label{eq:condsNoSI}
\ba
R_\setY^{(c)} \!&> 0 \label{eq:cond3NoSI} \\
( R_\setY^{(p)} \!, R_\setY^{(c)} \!,R_\setZ^{(p)} \!, 0 ) &\in \mathscr C_{\textnormal{P-MSI}} \label{eq:cond2NoSI} \\
( R_\setY^{(p)} \!, R_\setZ^{(p)} ) &\in \bigl( \partial \mathscr C \cap ( \mathscr C_{\textnormal {enh}} \setminus \partial \mathscr C_{\textnormal {enh}} )\bigr). \label{eq:cond1NoSI}
\ea
\end{subequations}
Now look for rates
\begin{equation}\label{eq:larger_ratesNoSI}
\tilde R_\setY^{(p)} \!> R_\setY^{(p)} \!, \quad \tilde R_\setZ^{(p)} \!> R_\setZ^{(p)}
\end{equation}
satisfying that $(\tilde R_\setY^{(p)} \!, \tilde R_\setZ^{(p)} ) \in \mathscr C_{\textnormal {enh}}$. By Assumption~\eqref{eq:cond1NoSI} such rates exist.

Proposition~\ref{pr:enhMSI} and Assumptions~\eqref{eq:cond3NoSI} and \eqref{eq:cond2NoSI} guarantee that, irrespective of $R_{\textnormal{FB}} > 0$, there exists some $\alpha \in (0,1)$ for which the rate-pair
\ba 
\alpha (\tilde R_\setY^{(p)} \!, \tilde R_\setZ^{(p)} ) + (1 - \alpha) ( R_\setY^{(p)} \!, R_\setZ^{(p)} ) \label{eq:rTupleNoSI}
\ea
is in the feedback capacity region of the SD-BC with P-MSI at Receiver~$\setY$. Note that \eqref{eq:cond1NoSI} and \eqref{eq:larger_ratesNoSI} imply that the rate-tuple in \eqref{eq:rTupleNoSI} is not in $\mathscr C$. To conclude the proof of Proposition~\ref{pr:noSIFBInc}, it thus suffices to show that, under the conditions in Proposition~\ref{pr:noSIFBInc}, it is possible to find a rate-tuple $( R_\setY^{(p)} \!, R_\setY^{(c)} \!, R_\setZ^{(p)} \!, 0 )$ satisfying \eqref{eq:condsNoSI}. But this follows from the second part of Lemma~\ref{le:part1}.
\end{proof}

\section{Analysis of Example~\ref{ex:addEras}}\label{sec:proof}

The proof of Example~\ref{ex:addEras} hinges on Propositions~\ref{pr:partSIRecYInc} and \ref{pr:noSIFBInc}, which state sufficient conditions for feedback to increase the capacity region of the SD-BC with and without P-MSI at Receiver~$\setY$. It can be roughly outlined as follows: Let $\bar p = 1 - p$. For the SD-BC of Example~\ref{ex:addEras}, we first parametrize $\partial \mathscr C \cap \bigl\{ R_\setZ \in [ R_\setZ^\ast, \bar p ] \bigr\}$ by $R_\setZ$, where $R_\setZ^\ast$ denotes the maximum rate $R_\setZ$ for which we can achieve the sum-rate capacity without MSI, and where $\bar p$ is the maximum rate $R_\setZ$ that is achievable without MSI. We then show that for the boundary point with $R_\setZ = R_\setZ^\ast$ we can satisfy all the conditions in Proposition~\ref{pr:partSIRecYInc}, and that for some boundary point with $R_\setZ \in (R_\setZ^\ast, \bar p)$ we can satisfy all the conditions in Proposition~\ref{pr:noSIFBInc}. This allows us to conclude from Propositions~\ref{pr:partSIRecYInc} and \ref{pr:noSIFBInc} that---on the considered SD-BC---feedback increases the sum-rate capacity with P-MSI at Receiver~$\setY$ and the capacity region without MSI.\\

Recall that the no-feedback capacity region $\mathscr C$ of the SD-BC without MSI is the set of rate-tuples satisfying \eqref{bl:capRegNoMessSI} for some PMF of the form \eqref{eq:noMessSIPMF}, and where we can restrict $X$ to be a function of $(Y,U)$. For the SD-BC of Example~\ref{ex:addEras}, the fact that $X$ is a function of $(Y,U)$ implies that we can partition the support $\setU$ of $U$ into two disjoint sets $\setU_0$ and $\setU_1$, where
\ba 
p_{X_2,Y|U} (1,1|u) = 0, \,\, \forall \, u \in \setU_0 \quad \text{and} \quad p_{X_2,Y|U} (0,1|u) = 0, \,\, \forall \, u \in \setU_1.
\ea
For each $u \in \setU_0$ introduce
\begin{subequations}
\ba 
p_0 (u) &= p_{X_2,Y|U} (0,0|u) \\
p_1 (u) &= p_{X_2,Y|U} (0,1|u) \\
p_2 (u) &= p_{X_2,Y|U} (1,2|u),
\ea
\end{subequations}
and note that $p_0 (u) + p_1 (u) + p_2 (u) = 1$. As we argue next, we can w.l.g.\ assume that to each $u \in \setU_0$ there corresponds some $u^\prime \in \setU_1$ satisfying $p_U (u) = p_U (u^\prime)$ and
\begin{subequations}\label{bl:correspU0U1}
\ba 
p_0 (u) = p_{X_2,Y|U} (0,0|u) &= p_{X_2,Y|U} (1,2|u^\prime) \\
p_1 (u) = p_{X_2,Y|U} (0,1|u) &= p_{X_2,Y|U} (1,1|u^\prime) \\
p_2 (u) = p_{X_2,Y|U} (1,2|u) &= p_{X_2,Y|U} (0,0|u^\prime),
\ea
\end{subequations}
This allows us to simplify \eqref{eq:rzCapRegNoMessSI} and \eqref{eq:srCapRegNoMessSI} to
\begin{subequations}\label{bl:repRZRS}
\ba
R_\setZ &\leq \bar p \biggl[ 1 - 2 \sum_{u \in \setU_0} p (u) h_\textnormal{b} \bigl(p_2 (u)\bigr) \biggr] \\
R_\setY + R_\setZ &\leq \bar p + 2 \sum_{u \in \setU_0} p (u) \Biggl[ p \, h_\textnormal{b} \bigl(p_2 (u)\bigr) + \bigl( 1 - p_2 (u) \bigr) h_\textnormal{b} \biggl( \frac{p_0 (u)}{p_0 (u) + p_1 (u)} \biggr) \Biggr]. \label{eq:maxSR1}
\ea
\end{subequations}
To show that to each $u \in \setU_0$ there corresponds some $u^\prime \in \setU_1$ satisfying $p_U (u) = p_U (u^\prime)$ and \eqref{bl:correspU0U1}, we first note that all $u \in \setU_0$ and $u^\prime \in \setU_1$ satisfying \eqref{bl:correspU0U1} must satisfy $H (X_2|U=u) = H (X_2|U=u^\prime)$ and $H (Y|U=u) = H (Y|U=u^\prime)$. Using this, that $I (U;Z) = \bar p \, I (U;X_2)$, and that entropy is concave, we can now argue that the claim must hold by symmetry.

Having obtained \eqref{bl:repRZRS}, we are now ready to determine the sum-rate capacity. First, note that the RHS of \eqref{eq:maxSR1} is maximum only if $p_0 (u) = p_1 (u) = \bigl( 1 - p_2 (u) \bigr) / 2$, and that in this case \eqref{eq:maxSR1} simplifies to
\ba
R_\setY + R_\setZ &\leq \bar p + 2 \sum_{u \in \setU_0} p (u) \Bigl[ p \, h_\textnormal{b} \bigl(p_2 (u)\bigr) + \bigl( 1 - p_2 (u) \bigr) \Bigr]. \label{eq:maxSR2}
\ea
Because the function $p_2 (u) \mapsto p \, h_\textnormal{b} \bigl(p_2 (u)\bigr) + \bigl( 1 - p_2 (u) \bigr)$ is strictly concave in $p_2 (u)$, we readily find that the RHS of \eqref{eq:maxSR2} is maximum if, and only if, (iff) for all $u \in \setU_0$ we have
\ba 
p_2 (u) = \frac{1}{1+2^{1/p}}. \label{eq:p2uOpt}
\ea
Hence, we find that the sum-rate capacity is obtained by evaluating the RHS of \eqref{eq:maxSR2} for the choice \eqref{eq:p2uOpt}, and that every PMF $p (u,x,y,z)$ of the form \eqref{eq:noMessSIPMF} that achieves the maximum sum-rate must satisfy the following two:
\begin{subequations}\label{bl:condMaxSumRExFBInc}
\ba 
&0 < I (U;Y) < I (U;Z) \label{eq:condPartSIRecY} \\
&R_\setZ \leq I (U;Z) = \bar p \Biggl[ 1 - \binent { \frac{1}{1+2^{1/p}} } \Biggr] = R_\setZ^\ast,
\ea
\end{subequations}
where $R_\setZ^\ast$ is the maximum rate $R_\setZ$ for which we can achieve the sum-rate capacity without MSI.

We next parametrize $\partial \mathscr C \cap \bigl\{ R_\setZ \in [ R_\setZ^\ast, \bar p ] \bigr\}$ by $R_\setZ$, i.e., for each $R_\setZ \in [ R_\setZ^\ast, \bar p ]$ we determine
\ba 
R_\setY (R_\setZ) = \max \bigl\{ R_\setY \geq 0 \colon ( R_\setY, R_\setZ ) \in \mathscr C \bigr\}.
\ea
To this end we first show that $R_\setY (R_\setZ) + R_\setZ$ is strictly decreasing in $R_\setZ$. Indeed, because $\mathscr C$ is convex we have for all $\epsilon \in (0, R_\setZ - R_\setZ^\ast]$, for $\alpha = \epsilon / ( R_\setZ - R_\setZ^\ast )$, and for $\bar \alpha = 1 - \alpha$
\ba 
&R_\setY (R_\setZ - \epsilon) + R_\setZ - \epsilon \nonumber \\
&\quad \geq \bar \alpha \bigl( R_\setY (R_\setZ) + R_\setZ \bigr) + \alpha \bigl( R_\setY (R_\setZ^\ast) + R_\setZ^\ast \bigr) \\
&\quad > R_\setY (R_\setZ) + R_\setZ,
\ea
where the last inequality holds because $R_\setY (R_\setZ) + R_\setZ$ is maximum if $R_\setZ = R_\setZ^\ast$ and strictly smaller than its maximum if $R_\setZ > R_\setZ^\ast$. As we argue next, each pair $\bigl( R_\setY (R_\setZ), R_\setZ \bigr)$ satisfies
\begin{subequations}\label{bl:bdChar}
\ba 
R_\setY (R_\setZ) &= H (Y|U) \\
R_\setZ &= I (U;Z)
\ea
\end{subequations}
for some PMF $p (u,x,y,z)$ of the form \eqref{eq:noMessSIPMF}. Note that the claim holds for $R_\setZ = R_\setZ^\ast$, and hence we assume that $R_\setZ^\ast < R_\setZ$. Because $\bigl( R_\setY (R_\setZ), R_\setZ \bigr) \in \partial \mathscr C$, Corollary~\ref{co:noMessSI} implies that for some PMF $p (u,x,y,z)$ of the form \eqref{eq:noMessSIPMF} we must have
\ba 
R_\setZ &\leq I (U;Z) \\
R_\setY (R_\setZ) &= \min \bigl\{ H (Y|U) + I (U;Z) - R_\setZ, H (Y) \bigr\}.
\ea
For contradiction, assume that $R_\setZ < I (U;Z)$. Then,
\ba 
R_\setY (R_\setZ) + R_\setZ &\stackrel{(a)}\geq \bar \alpha \bigl( H (Y|U) + I (U;Z) \bigr) + \alpha \bigl( R_\setY (R_\setZ^\ast) + R_\setZ^\ast \bigr) \\
&\stackrel{(b)}> R_\setY (R_\setZ) + R_\setZ,
\ea
where $(a)$ holds for $\alpha = \bigl( I (U;Z) - R_\setZ \bigr) / (I (U;Z) - R_\setZ^\ast)$ and $\bar \alpha = 1 - \alpha$, because $R_\setZ = \bar \alpha \, I (U;Z) + \alpha \, R_\setZ^\ast$, and because the capacity region is convex; and $(b)$ is true because $R_\setY (R_\setZ) + R_\setZ \leq H (Y|U) + I (U;Z)$, because $\alpha > 0$, and because $R_\setZ^\ast < R_\setZ$ and $R_\setY (R_\setZ) + R_\setZ$ is strictly decreasing in $R_\setZ$. This is a contradiction, and hence the claim follows.

From \eqref{bl:repRZRS}, \eqref{eq:maxSR2}, and \eqref{bl:bdChar} we obtain that each pair $\bigl( R_\setY (R_\setZ), R_\setZ \bigr), \, R_\setZ \in [ R_\setZ^\ast, \bar p ]$ must be of the form
\begin{subequations}
\ba 
R_\setZ &= \bar p \biggl[ 1 - 2 \sum_{u \in \setU_0} p (u) h_\textnormal{b} \bigl(p_2 (u)\bigr) \biggr] \\
R_\setY (R_\setZ) &= \bar p + 2 \sum_{u \in \setU_0} p (u) \Bigl[ p \, h_\textnormal{b} \bigl(p_2 (u)\bigr) + 1 - p_2 (u) \Bigr] - R_\setZ.
\ea
\end{subequations}
This can also be written as
\begin{subequations}\label{bl:bdChar2}
\ba 
R_\setZ &= \sum_{u \in \setU_0} 2 p (u) \bar p \Bigl[ 1 - h_\textnormal{b} \bigl(p_2 (u)\bigr) \Bigr] \\
R_\setY (R_\setZ) &= \sum_{u \in \setU_0} 2 p (u) \Bigl[ h_\textnormal{b} \bigl(p_2 (u)\bigr) + 1 - p_2 (u) \Bigr],
\ea
\end{subequations}
where we used that
\ba 
\sum_{u \in \setU_0} 2 p (u) = 1.
\ea
Note that if $p_2 (u) > 1/2$, then replacing $p_2 (u)$ by $1 - p_2 (u)$ in \eqref{bl:bdChar2} does not affect the value of $R_\setZ$ but increases that of $R_\setY (R_\setZ)$. Hence, for each $u \in \setU_0$ satisfying that $p (u) > 0$ we have $p_2 (u) \in [0,1/2]$, and hence we can w.l.g.\ assume that $p_2 (u) \in [0,1/2]$ holds for all $u \in \setU_0$. Since $h_\textnormal{b} ( \cdot )$ is invertible on $[0,1/2]$ and $\bar p$ is a constant, we can thus write
\begin{subequations}
\ba 
R_\setZ &= \sum_{u \in \setU_0} 2 p (u) R_\setZ^{(u)} \\
R_\setY (R_\setZ) &= \sum_{u \in \setU_0} 2 p (u) \Biggl[ 2 - \frac{1}{\bar p} R_\setZ^{(u)} - h_\textnormal{b}^{-1} \biggl( 1 - \frac{R_\setZ^{(u)}}{\bar p} \biggr) \Biggr],
\ea
\end{subequations}
where $R_\setZ^{(u)} = \bar p \bigl[ 1 - h_\textnormal{b} (p_2 (u)) \bigr]$. Note that $R_\setZ^{(u)}$ can assume any value in $[0,\bar p]$ depending on $p_2 (u) \in [0,1/2]$. Since $h_\textnormal{b} ( \cdot )$ is strictly increasing and strictly concave on $[0,1/2]$, its inverse $h_\textnormal{b}^{-1} ( \cdot )$ is strictly increasing and strictly convex on $[0,1]$. In particular, this implies that the mapping
\ba 
x \mapsto 2 - \frac{1}{\bar p} x - \binentinv {1 - \frac{x}{\bar p}}
\ea
is strictly concave on $[0,1]$. Recalling that $R_\setY (R_\setZ)$ is the maximum rate $R_\setY$ that is achievable for a given $R_\setZ$, we thus obtain from Jensen's inequality that for each $u \in \setU_0$ satisfying that $p (u) > 0$ we must have $R_\setZ^{(u)} = R_\setZ$. This allows us to parametrize $\partial \mathscr C \cap \bigl\{ R_\setZ \in [ R_\setZ^\ast, \bar p ] \bigr\}$ by $R_\setZ$:
\ba 
R_\setY (R_\setZ) &= 2 - \frac{1}{\bar p} R_\setZ - h_\textnormal{b}^{-1} \biggl( 1 - \frac{R_\setZ}{\bar p} \biggr), \quad R_\setZ \in [R_\setZ^\ast, \bar p]. \label{eq:paramBoundaryRZ}
\ea

With \eqref{eq:paramBoundaryRZ} at hand, we next argue that for $p > 1/2$ there exists some rate-pair $\bigl( R_\setY (R_\setZ), R_\setZ \bigr)$ with $R_\setZ < \bar p$ satisfying that \eqref{bl:capRegNoMessSI} and \eqref{eq:thNoSI} of Proposition~\ref{pr:noSIFBInc} hold for some PMF $p (u,x,y,z)$ of the form \eqref{eq:noMessSIPMF} and for $V = U$. Because $\bigl( R_\setY (R_\setZ), R_\setZ \bigr) \in \partial \mathscr C$ and because for $V = U$ the PMF $p (v,u,x,y,z)$ is of the form \eqref{eq:partMessSIPMF}, this is almost enough to conclude from Proposition~\ref{pr:noSIFBInc} that feedback increases the capacity region without MSI: once we have established the claim, all that remains to be shown is that the identified rate-pair satisfies
\ba 
\bigl( R_\setY (R_\setZ), R_\setZ \bigr) \in \mathscr C_{\textnormal {enh}} \setminus \partial \mathscr C_{\textnormal {enh}}.
\ea
To establish the claim, let $\setU_0 = \{ u \}$. In particular, this implies that $\setU_1 = \{ u^\prime \}$, where \eqref{bl:correspU0U1} holds and $p (u) = p (u^\prime) = 1/2$. For every $R_\setZ \in [ R_\setZ^\ast, \bar p ]$ set $p_2 (u) = h_\textnormal{b}^{-1} ( 1 - R_\setZ / \bar p )$ and
\begin{subequations}
\ba 
\frac{1 - p_2 (u)}{2} = p_{X_2,Y|U} (0,0|u) &= p_{X_2,Y|U} (1,2|u^\prime) \\
\frac{1 - p_2 (u)}{2} = p_{X_2,Y|U} (0,1|u) &= p_{X_2,Y|U} (1,1|u^\prime) \\
p_2 (u) = p_{X_2,Y|U} (1,2|u) &= p_{X_2,Y|U} (0,0|u^\prime).
\ea
\end{subequations}
For this choice we have
\begin{subequations}\label{bl:boundary}
\ba
R_\setY (R_\setZ) &= H (Y|U) = h_\textnormal{b} \bigl( p_2 (u) \bigr) + 1 - p_2 (u) \label{eq:boundaryRY} \\
R_\setZ &= I (U;Z) = \bar p \Bigl( 1 - h_\textnormal{b} \bigl( p_2 (u) \bigr) \Bigr),
\ea
\end{subequations}
Moreover, it holds that
\ba 
H (Y) &= h_\textnormal{b} \biggl( \frac{1-p_2 (u)}{2} \biggr) + \frac{1 + p_2 (u)}{2},
\ea
and from \eqref{eq:boundaryRY} we thus obtain
\ba 
I (U;Y) &= H (Y) - H ( Y|U ) \\
&= h_\textnormal{b} \biggl( \frac{1 - p_2 (u)}{2} \biggr) - h_\textnormal{b} \bigl( p_2 (u) \bigr) + \frac{ 3 p_2 (u) - 1 }{2}.
\ea
Note that for all $p_2 (u) \in \bigl[ 0,1/(1+2^{1/p}) \bigr]$ we have
\ba 
R_\setY < H (Y).
\ea
Moreover, if $p > 1/2$, then we obtain for $p_2 (u) = 0$ that
\ba 
I (U;Z) = \bar p < \frac{1}{2} = I (U;Y).
\ea
Since $I (U;Y)$ and $I (U;Z)$ are continuous in $p_2 (u)$, this implies that there exists an open interval $(0, \kappa) \subset \, \bigr( 0, 1/(1+2^{1/p}) \bigl)$ for which
\ba 
I (U;Y) > I (U;Z), \quad p_2 (u) \in (0, \kappa). \label{eq:condNoSI}
\ea
Indeed, this implies our claim: for $p > 1/2$ there exists a rate-pair $\bigl( R_\setY (R_\setZ), R_\setZ \bigr)$ with $R_\setZ < \bar p$ for which \eqref{bl:capRegNoMessSI} and \eqref{eq:thNoSI} of Proposition~\ref{pr:noSIFBInc} hold for some PMF $p (u,x,y,z)$ of the form \eqref{eq:noMessSIPMF} and for $V = U$.

As we argue next, to conclude the analysis of Example~\ref{ex:addEras} it now suffices to show that on the enhanced BC $\bigl( R_\setY (R_\setZ), R_\setZ \bigr) \in \mathscr C_{\textnormal {enh}} \setminus \partial \mathscr C_{\textnormal {enh}}$ holds for all $R_\setZ \in [ R_\setZ^\ast, \bar p )$. Indeed, we already showed that this is enough to conclude from Proposition~\ref{pr:noSIFBInc} that feedback increases the capacity region without MSI. It is, moreover, enough to conclude from Proposition~\ref{pr:partSIRecYInc} that feedback increases the sum-rate capacity with P-MSI at Receiver~$\setY$, because $\bigl( R_\setY (R_\setZ^\ast), R_\setZ^\ast \bigr)$ is in $\partial \mathscr C$ and satisfies \eqref{bl:capRegNoMessSI} and \eqref{eq:thPartSIRecY} for some PMF of the form \eqref{eq:noMessSIPMF}, where we used \eqref{eq:condPartSIRecY} and \eqref{bl:bdChar} to obtain that it satisfies \eqref{eq:thPartSIRecY}.

To conclude, it now remains to show that on the enhanced BC $\bigl( R_\setY (R_\setZ), R_\setZ \bigr) \in \mathscr C_{\textnormal {enh}} \setminus \partial \mathscr C_{\textnormal {enh}}$ holds for all $R_\setZ \in [ R_\setZ^\ast, \bar p )$. Let $U \sim \ber (1/2)$ be a binary random variable, and set
\begin{subequations}
\ba 
\frac{1-p_2 (u)}{2} &= p_{X_2,Y|U} (0,0|0) = p_{X_2,Y|U} (1,2|1) \\
\frac{1-p_2 (u)}{2} &= p_{X_2,Y|U} (0,1|0) = p_{X_2,Y|U} (1,1|1) \\
\epsilon p_2 (u) &= p_{X_2,Y|U} (1,1|0) = p_{X_2,Y|U} (0,1|1) \\
( 1 - \epsilon ) p_2 (u) &= p_{X_2,Y|U} (1,2|0) = p_{X_2,Y|U} (0,0|1),
\ea
\end{subequations}
where $\epsilon$ and $p_2 (u)$ take values in the set $[0,1]$. If $\epsilon = 0$ and $p_2 (u) \in \bigl[ 0, 1/(1+2^{1/p}) \bigr]$, then, under the above PMF $p (u,x_2,y)$, \eqref{bl:capRegNoMessSI} and \eqref{bl:capRegEnh} both evaluate to \eqref{bl:boundary}. Note that
\ba 
I (U;Z) &= \bar p \Bigl[ 1 - h_\textnormal{b} \bigl( p_2 (u) \bigr) \Bigr]
\ea
does not depend on $\epsilon$, but that $I ( X ; Y, Z | U )$ does. Define
\be
f \bigl( \epsilon, p_2 (u) \bigr) = I ( X ; Y,Z | U ),
\ee
and note that it satisfiess
\ba 
f \bigl( \epsilon, p_2 (u) \bigr) & = I ( X ; Y,Z | U ) \\
&= \bar p \, H ( X_1,X_2 | U ) + p \, H ( Y | U ) \nonumber \\
& = \bar p \Bigl[ h_\textnormal{b} \bigl( p_2 (u) \bigr) + 1 - p_2 (u) + p_2 (u) \binent { \epsilon } \Bigr] \nonumber \\
&\quad - p \Biggl[ \frac{1 - p_2 (u)}{2} \log \biggl( \frac{1 - p_2 (u)}{2} \biggr) \Biggr. \nonumber \\
&\quad + \biggl( \frac{1 - ( 1 - 2 \epsilon ) p_2 (u)}{2} \biggr) \log \biggl( \frac{1 - ( 1 - 2 \epsilon ) p_2 (u)}{2} \biggr) \Biggr. \nonumber \\
&\quad + (1 - \epsilon) p_2 (u) \log \bigl( (1 - \epsilon) p_2 (u) \bigr) \Biggr].
\ea
Note that for all $p_2 (u) \in \bigl( 0, 1/(1+2^{1/p}) \bigr]$ we have
\bas
\frac{\partial f \bigl( \epsilon, p_2 (u) \bigr)}{\partial \epsilon} &= \bar p \, p_2 (u) \log \biggl( \frac{1 - \epsilon}{\epsilon} \biggr) + p \, p_2 (u) \log \biggl( \frac{2 \, \bar \epsilon \, p_2 (u)}{1 - (1 - 2 \epsilon) p_2 (u)} \biggr) \\
&\rightarrow \infty \,\, (\epsilon \downarrow 0).
\eas
In particular, this implies that there exists some $\epsilon > 0$ so that $f \bigl( \epsilon, p_2 (u) \bigr) > f \bigl( 0, p_2 (u) \bigr)$. Since the capacity region of the enhanced BC contains every rate-pair $(\tilde R_\setY, \tilde R_\setZ)$ that for some $\epsilon > 0$ satisfies
\begin{subequations}
\ba 
\tilde R_\setY &= I ( X ; Y,Z | U ) \\
\tilde R_\setZ &= I ( U ; Z ),
\ea
\end{subequations}
this proves our claim that for all $R_\setZ \in [ R_\setZ^\ast, \bar p )$ we have $\bigl( R_\setY (R_\setZ), R_\setZ \bigr) \in \mathscr C_{\textnormal {enh}} \setminus \partial \mathscr C_{\textnormal {enh}}$.

\begin{remark}\label{re:propNoSIFBIncVNeeded}
To prove that feedback can increase the capacity region of the SD-BC without MSI, we used Proposition~\ref{pr:noSIFBInc} with the choice $V = U$. This raises the question whether choosing $V \neq U$ can help. It turns out that it can. To see this extend the SD-BC of Example~\ref{ex:addEras} by a parallel channel of capacity larger than $\log 3$, and assume that Receiver~$\setZ$ additionally observes the output of this parallel channel. (E.g., assume that, in addition to $Z$, Receiver~$\setZ$ noiselessly observes an input $X_3$, which can assume $4$ different values.) For the constructed SD-BC, it is easy to see that a rate-tuple satisfying \eqref{bl:capRegNoMessSI} for some PMF of the form \eqref{eq:noMessSIPMF} is a boundary point of $\mathscr C$ only if the auxiliary random variable $U$ comprises a capacity-achieving input to the parallel channel. (E.g., if Receiver~$\setZ$ observes the pair $( Z,X_3 )$, then $X_3$ must have a uniform prior and be deterministic given $U$.) But this implies that $I (U;Y) \leq \log 3 < I (U;Z)$. Hence, we cannot invoke Proposition~\ref{pr:noSIFBInc} with the choice $V = U$ to show that feedback can increase the capacity region. However, we can invoke Proposition~\ref{pr:noSIFBInc} with the following choice of $V$: choose $V$ to be the random variable that we obtain when we discard the input to the parallel channel from $U$.
\end{remark}

\section{Proof of Theorem~\ref{th:fullMessSIRecZ}}\label{sec:pfFullMessSIRecZ}

To prove the theorem, we show that every rate-tuple in the feedback capacity region of the SD-BC with F-MSI at the stochastic receiver~$\setZ$ satisfies \eqref{bl:capRegFullMessSIRecZCommMess} (with $R = 0$) for some PMF of the form \eqref{eq:fullMessSIRecZCommMess}. Let $\rndQ \sim \unif [ 1 : n ]$ be independent of $( \rndM_\setY, \rndM_\setZ )$, and introduce $(X,Y,Z) = (X_Q,Y_Q,Z_Q)$.

The rate of message $M_\setY$ satisfies
\ba 
R_\setY - \epsilon_n &\stackrel{(a)}\leq \frac{1}{n} I \bigl( M_\setY;Y^n,M_\setZ^{(c)} \bigr) \\
&\stackrel{(b)}\leq H \bigl( Y_Q \bigl| M_\setZ^{(c)} \!,Y^{Q-1},Q \bigl) \\
&\stackrel{(c)}\leq H (Y),
\ea
where $(a)$ follows from Fano's inequality; $(b)$ holds because of the chain-rule, because $M_\setY$ and $M_\setZ^{(c)}$ are independent, and because conditional entropy is nonnegative; and $(c)$ holds because conditioning cannot increase entropy. The rate of message $M_\setZ$ satisfies
\ba 
R_\setZ - \epsilon_n &\stackrel{(a)}\leq \frac{1}{n} I (M_\setZ;Z^n,M_\setY) \\
&\stackrel{(b)}\leq I (M_\setZ;Z_Q|M_\setY,Z^{Q-1},Q) \\
&\stackrel{(c)}\leq I (X;Z),
\ea
where $(a)$ follows form Fano's inequality; $(b)$ holds because of the chain-rule, and because $M_\setY$ and $M_\setZ$ are independent; and $(c)$ holds because conditioning cannot increase entropy, and because $Z$, $X$, and $( M_\setY, M_\setZ, Z^{Q-1}, Q )$ form a Markov chain in that order.

We next establish the sum-rate constraint. To this end we first note that if the probability of a decoding error is small, then---with high probability---$M_\setY$ and $M_\setZ^{(p)}$ are computable from $( Y^n,M_\setZ^{(c)} )$ and $( Z^n,M_\setY )$, respectively, and hence $( M_\setY,M_\setZ^{(p)} )$ is computable from $( Y^n,Z^n,M_\setZ^{(c)} )$. We thus obtain
\ba 
&R_\setY + R_\setZ^{(p)} \!- \epsilon_n \nonumber \\
&\quad\stackrel{(a)}\leq \frac{1}{n} I \bigl( M_\setY,M_\setZ^{(p)} ; Y^n,Z^n,M_\setZ^{(c)} \bigr) \\
&\quad\stackrel{(b)}= I \bigl( M_\setY,M_\setZ^{(p)} ; Y_Q,Z_Q\bigl|M_\setZ^{(c)} \!,Y^{Q-1},Z^{Q-1},Q \bigr. \bigr) \\
&\quad\stackrel{(c)}\leq I (X;Y,Z),
\ea
where $(a)$ follows from Fano's inequality; $(b)$ follows from the chain-rule and the independence of $\bigl( M_\setY,M_\setZ^{(p)} \bigr)$ and $M_\setZ^{(c)}$; and $(c)$ holds because conditioning cannot increase entropy, and because $(Y,Z)$, $X$, and $(M_\setY,M_\setZ,Y^{Q-1},Z^{Q-1},Q)$ form a Markov chain in that order.

\section{Analysis of Example~\ref{ex:fbNeedNotHelp}}\label{sec:pfFbNeedNotHelp}

By Remark~\ref{re:partMessSIRecy} and Corollary~\ref{co:noMessSI}, the no-feedback capacity region without MSI at the stochastic receiver~$\setZ$ is the set of rate-tuples satisfying \eqref{bl:capRegNoMessSI} for some PMF $p (u,x,y,z)$ of the form \eqref{eq:noMessSIPMF}, where we can w.l.g.\ restrict $X$ to be a function of $(Y,U)$. Note that for the SD-BC of Example~\ref{ex:fbNeedNotHelp}
\ba 
I (U;Z) &\stackrel{(a)}= I (U;Z,S) \\
&\stackrel{(b)}= I (U;Z|S) \\
&\stackrel{(c)}= \bar p \, I (U;X) \\
&\stackrel{(d)}= \bar p \, I (U;Y) + \bar p \, I (U;X|Y) \\
&\stackrel{(e)}= \bar p \, I (U;Y) + \bar p \, H (X|Y) \\
&= \bar p \, I (U;Y) + \bar p \, \sum_{y \in \setY} p (y) H (X|Y = y) \\
&\stackrel{(f)}\leq \bar p \, I (U;Y) + \bar p \, \sum_{y \in \setY} p (y) \log |\setX_y|,
\ea
where $(a)$ holds because $S$ is computable from $Z$; $(b)$ follows from the chain-rule and the independence of $S$ and $U$; $(c)$ follows from \eqref{eq:ZExFBUseless}; $(d)$ holds because $Y$ is a function of $X$, and because of the chain-rule; $(e)$ holds because $X$ is a function of $(Y,U)$; and $(f)$ holds because $Y = y$ implies that $X \in \setX_y$, and because the uniform distribution maximizes entropy. Note that, irrespective of $H (Y)$ and $I (U;Y)$, we can achieve $(f)$ with equality. Using that
\ba 
&H (Y|U) + \bar p \, I (U;Y) + \bar p \sum_{y \in \setY} p (y) \log |\setX_y| \nonumber \\
&\quad = H (Y) - p \, I (U;Y) + \bar p \sum_{y \in \setY} p (y) \log |\setX_y|,
\ea
we conclude that, indeed, the capacity region without feedback is the set of rate-tuples $(R_\setY, 0, R_\setZ^{(p)} \!, R_\setZ^{(c)})$ satisfying \eqref{bl:FBDoesNotHelp} for some PMF $p (u,x,y,z)$ of the form \eqref{eq:noMessSIPMF}.

Consider now the case with feedback. As we argue next, the feedback capacity region of any SD-BC without MSI at the stochastic receiver~$\setZ$ ($R_\setY^{(p)} \!= R_\setY$) is contained in the set of all rate-tuples satisfying \eqref{eq:ryCapRegNoMessSI} and \eqref{eq:rzCapRegNoMessSI} (with $R = 0$) as well as
\ba 
R_\setY + R_\setZ &\leq I (X;Y,Z|U) + I (U;Z)
\ea
for some PMF $p (u,x,y,z)$ of the form \eqref{eq:noMessSIPMF}, irrespective of whether or not the deterministic receiver~$\setY$ has MSI ($R_{\setZ}^{(p)} \!\in [0,R_\setZ]$). Indeed, let $Q \sim \unif [1:n]$ be independent of $(M_\setY,M_\setZ)$, and introduce
\begin{subequations}
\ba
U_Q &= (M_\setZ, Y^{Q-1}, Z^{Q-1}), \\
U &= (U_Q, Q), \\
X &= X_Q, \, Y = Y_Q, \, Z = Z_Q.
\ea
\end{subequations}
Note that---also in the presence of feedback---$U$, $X$, and $(Y,Z)$ form a Markov chain in that order, i.e., that their PMF is of the form \eqref{eq:noMessSIPMF}. Using Fano's inequality, it is not hard to show that
\ba
R_\setY - \epsilon_n &\leq \frac{1}{n} I (M_\setY;Y^n,M_\setZ) \leq H (Y) \\
R_\setZ - \epsilon_n &\leq \frac{1}{n} I (M_\setZ;Z^n) \leq I (U;Z) \\
R_\setY + R_\setZ - \epsilon_n &\leq \frac{1}{n} \bigl[ I (M_\setY;Y^n,Z^n,M_\setZ) + I (M_\setZ;Z^n) \bigr] \\
&\leq I (X;Y,Z|U) + I (U;Z),
\ea
which proves the claim.

We are now ready to conclude that feedback cannot increase the capacity region. To this end we note that for the SD-BC of Example~\ref{ex:fbNeedNotHelp}
\ba 
&I (X;Y,Z|U) + I (U;Z) \nonumber \\
&\quad \stackrel{(a)}= H (Y|U) + I (X;Z|Y,U) + I (U;Z) \\
&\quad \stackrel{(b)}= H (Y|U) + I (X;Z,S|Y,U) + I (U;Z,S) \\
&\quad \stackrel{(c)}= H (Y|U) + I (X;Z|Y,U,S) + I (U;Z|S) \\
&\quad \stackrel{(d)}= H (Y|U) + \bar p \, H (X|Y,U) + \bar p \, I (U;X) \\
&\quad \stackrel{(e)}= H (Y|U) + \bar p \, H (X|Y,U) + \bar p \, I (U;Y) + \bar p \, I (U;X|Y) \\
&\quad = H (Y) - p \, I (U;Y) + \bar p \, H (X|Y,U) + \bar p \, I (U;X|Y) \\
&\quad \stackrel{(f)}= H (Y) - p \, I (U;Y) + \bar p \, H (X|Y) \\
&\quad \stackrel{(g)}\leq H (Y) - p \, I (U;Y) + \bar p \sum_{y \in \setY} p (y) \log |\setX_y|,
\ea
where $(a)$ holds because of the chain-rule, and because $Y$ is a function of $X$; $(b)$ holds because $S$ is computable from $Z$; $(c)$ follows from the chain-rule and the independence of $S$ and $(U,X)$; $(d)$ follows from \eqref{eq:ZExFBUseless}; $(e)$ holds because $Y$ is a function of $X$, and because of the chain-rule; $(f)$ follows from the chain-rule; and $(g)$ holds because $Y = y$ implies that $X \in \setX_y$, and because the uniform distribution maximizes entropy.

%
\end{appendix}

%
\lhead[\fancyplain{\scshape Appendix}
{\scshape Appendix}]
{\fancyplain{\scshape \leftmark}
  {\scshape \leftmark}}
\rhead[\fancyplain{\scshape \leftmark}
{\scshape \leftmark}]
{\fancyplain{\scshape Appendix}
  {\scshape Appendix}}
%


%

%
%
%
%
\end{document}